\DeclareMathOperator*{\slim}{s--lim}
\DeclareMathOperator*{\vGlim}{{\Sigma}--lim}
\DeclareMathOperator*{\vSigmalim}{{\Sigma}--lim}
\DeclarePairedDelimiter\set{\{}{\}}
\DeclarePairedDelimiter\abs\lvert\rvert
\DeclarePairedDelimiter\norm\lVert\rVert
\newcommand{\supp}{\operatorname{supp}}
\newcommand{\Div}{{\operatorname{div}}}
\newcommand{\N}{{\mathbb{N}}} 
\newcommand{\R}{{\mathbb{R}}} 
\newcommand{\C}{{\mathbb{C}}}
 \renewcommand{\c}{{\rm c}}
\newcommand{\e}{{\rm e}} 
 \renewcommand{\i}{{\rm i}}
\renewcommand{\d}{{\rm d}}
\renewcommand{\Re}{{\rm Re}\,}
\DeclarePairedDelimiter\inp\langle\rangle
\newcommand\parb[2][]{#1 \big ( #2#1\big )} \newcommand\parbb[2][]{#1
  \Big ( #2#1\Big )}
\newcommand{\mand}{\text{ and }}
 \newcommand{\vB}{{\mathcal B}}
\newcommand{\vE}{{\mathcal E}} \newcommand{\vF}{{\mathcal F}}
 \newcommand{\vH}{{\mathcal H}}
 \newcommand{\vL}{{\mathcal L}}
\newcommand{\vO}{{\mathcal O}} 
 \newcommand{\vR}{{\mathcal R}}
 \newcommand{\vX}{{\mathcal X}}
\newcommand\Step[1]{
  \par\bigskip
  \noindent
  \textit{#1}.\enspace
}
\newcommand{\intR}{{-\!\!\!\!\!\!\int_{\rho}\,}}
\newcommand{\cas}{{\rm the Cauchy-Schwarz inequality }}
\newcommand{\caS}{{\rm the Cauchy-Schwarz inequality}}
\theoremstyle{plain}%default
\newtheorem{thm}{Theorem}[section]
\newtheorem{proposition}[thm]{Proposition}
\newtheorem{lemma}[thm]{Lemma} \newtheorem{corollary}[thm]{Corollary}
\theoremstyle{definition}
 \newtheorem{cond}[thm]{Condition}
 \newtheorem{remark}[thm]{Remark}
 \newtheorem*{remarks*}{Remarks}
\newtheorem*{remark*}{Remark}
\numberwithin{equation}{section}
\title {Stationary scattering theory for  $1$-body Stark  operators, II}
\thanks{
K.I. is supported by JSPS KAKENHI grant no.\ 17K05325.\,
E.S. is  supported by the Research Institute for Mathematical Sciences, a Joint
Usage/Research Center located in Kyoto University, and by DFF grant
no.\ 4181-00042. K.I. and E.S. are  supported by the Swedish Research
Council  grant no. 2016-06596  (residencing at Institut Mittag-Leffler in Djursholm, Sweden, during the Spring semester of 2019).
} 
  \author{T. Adachi}
\address[T. Adachi]{Graduate School of Human and Environmental Studies,
        Kyoto University, Kyoto, Japan}
\email{adachi@math.h.kyoto-u.ac.jp}
\author{K. Itakura}
\address[K. Itakura]{Department of Mathematics, Kobe University\\
1-1 Rokkodai, Nada, Kobe, 657-8501, Japan}
\email{itakura@math.kobe-u.ac.jp}
\author{K. Ito}
\address[K. Ito]{Graduate School of Mathematical Sciences, The University of Tokyo\\
3-8-1 Komaba, Meguro-ku, Tokyo 153-8914, Japan}
\email{ito@ms.u-tokyo.ac.jp}
\author{E. Skibsted} \address[E. Skibsted]{Institut for Matematiske Fag\\
Aarhus Universitet\\ Ny Munkegade 8000 Aarhus C, Denmark}
\email{skibsted@math.au.dk}
\date{\today}
\begin{document}
\begin{abstract}
We study and develop  the stationary scattering theory for a class of one-body
Stark Hamiltonians with short-range potentials, including the Coulomb
potential, continuing our study in \cite{AIIS1,AIIS2}. The classical
scattering orbits are parabolas parametrized by asymptotic orthogonal
momenta, and the kernel of the (quantum) scattering matrix at a fixed
energy is defined in these momenta. We show that the 
scattering matrix is a classical type pseudodifferential operator and    compute the   leading order
singularities at the
diagonal of its kernel. Our approach can be viewed as an adaption of the method
of Isozaki-Kitada \cite{IK} used for studying the scattering matrix for  one-body Schr\"odinger operators
without an external potential. It is more flexible and more
informative than  the more
standard method used previously  by Kvitsinsky-Kostrykin \cite{KK1}
for computing the leading order
singularities of the kernel of the scattering matrix   
in the case of a constant  external {field} (the Stark case). Our
approach  relies on  Sommerfeld's uniqueness result in Besov
spaces, microlocal analysis  as well as
on 
classical phase space constructions.
\end{abstract}

\allowdisplaybreaks

\maketitle
\tableofcontents

\section{Introduction and results}\label{sec:Introduction}

In this paper we  continue  a study of  the stationary scattering
theory  for a class of one-body
Stark Hamiltonians  with short-range potentials initiated in
\cite{AIIS2} (the conditions on the potentials will be stronger though). While the time-dependent scattering theory is
well-understood \cite{AH,He,Ya1,Ya2,Wh}, the  stationary scattering
theory is in our opinion on a less complete form,  even for 
short-range potentials. There are other papers in the
literature on   one-body
Stark stationary scattering theory, see for example  \cite{KK1, KK2}, however we found it
useful and appealing  for its intrinsic beauty  to give a more
systematic account  done entirely within the framework of stationary
theory. While the time-dependent framework was not discussed at  all in
\cite{AIIS2}, we make in the present continuation of \cite{AIIS2}  a
link to the time-dependent framework, showing that the studied
quantities are the same (in disguise form of course). Having this
settled there is one remaining issue, which is 
 a deeper  study
of the scattering matrix. 

To obtain detailed properties 
of the scattering matrix is the main goal of the present paper. This
goal is roughly the same as the one of  \cite{KK1} with
which we have overlapping results. However our approach is very
different from Kvitsinsky-Kostrykin's. It can be considered as an adaption of the method 
of Isozaki-Kitada \cite{IK} used for studying the scattering matrix for  one-body Schr\"odinger operators
without an external potential. It is more flexible and more
informative than  `the 
standard method'.  While the  paper by Kvitsinsky-Kostrykin can be seen
as an application of the 
standard method 
in the case of a constant  external potential, our scheme is closer to
the one invented by Isozaki-Kitada. In particular it relies on
microlocal analysis   and 
on 
classical phase space constructions. By using  such notions the
accomplishment  of
\cite{IK} is  a `trivialization' of the detailed study of the scattering
matrix. More precisely   Isozaki-Kitada  realized 
the  scattering
matrix in the form of a  pseudodifferential operator PsDO (this is for
short-range potentials, however  recently extended to a class of 
long-range potentials  \cite{Na}) and  extracted   the singularities
of its kernel from their representation.

In the present  paper we  do the same  in the
Stark case with short-range potentials. (Note that the Coulomb
potential is a `short-range' potential in the Stark case). In
particular, and more
precisely,  we show how to isolate the local singularities to be
present only in a term
expressed as  an 
\emph{explicit oscillatory integral}, which in turn can be realized  as the
kernel of a  classical type PsDO. For example this allows us to compute the
leading order local singularities  for the Coulomb
potential, reproducing  a result in \cite{KK1}. 

Let us outline   the relationship of our purely stationary setup to time-dependent
scattering theory. We consider
a $d$-dimensional particle (with $d\geq 2$) subject to a constant nonzero field
pointing in the $x_1$-direction. For simplicity we assume that its strength
as well as the particle mass and charge are all taken to $1$. 
We split the coordinates in $\R^d$ into $x_1$ and the  coordinates for
orthogonal directions, 
decomposing   the configuration space variable as 
\begin{align*}
(x,y)\in\mathbb R\times \mathbb R^{d-1};\quad x=x_1,\ y=(x_2,\dots,x_d).
%\label{eq:1709239}
\end{align*}
Then the classical free Stark Hamiltonian is given by 
\begin{align*}
h_0(x,y,\eta,\zeta)
=\tfrac12(\eta^2+\zeta^2)-x,\quad 
(x,y,\eta,\zeta)\in T^*\mathbb R^d\cong\mathbb R^{2d},
\end{align*}
and hence the associated Hamilton equations are  
\begin{align*}
\dot x=\eta,\quad 
\dot y=\zeta,\quad
\dot\eta=1,\quad
\dot\zeta=0.
\end{align*}
 The  solution with  the  initial data $(x_0,y_0,\eta_0,\zeta_0)\in
 T^*\mathbb R^d$,  defining the free classical flow (say denoted by  $\Theta(t)$),  is given by 
\begin{align}\label{eq:freeClas}
x=\tfrac12t^2+t\eta_0+x_0,\quad
y=t\zeta_0+y_0,\quad
\eta=t+\eta_0,\quad
\zeta=\zeta_0.
\end{align}

In particular  the classical orbits are parabolas of the form
$x=\frac1{2\zeta^2}y^2+O(t)$.  The same asymptotics   holds for
$h=h_0+q$, where $q$ is short-range, for example given as $q=q_1$ in
the following condition which will be imposed throughout this paper.

\begin{comment}
  \begin{align*}
x=\frac1{2\zeta_0^2}(y-y_0)^2+\frac{\eta_0}{\zeta_0}(y-y_0)+x_0,
\end{align*}
and that the following quantities are conserved along the orbits:
\begin{align*}
\lambda:=\tfrac12(\eta^2+\zeta^2)-x,\quad \eta-t,\quad \zeta,\quad y-t\zeta.
\end{align*} 
\end{comment}

\begin{cond}\label{cond:one-body-starkPot} The  potential $q$ splits into real-valued functions as 
  $q=q_1+q_2$, where   $q_2$ is compactly supported,
  $q_2(-\Delta+1)^{-1}$ is compact,  $q_1$ is smooth  and for some $\delta\in(0,1/2]$
  \begin{align}\label{eq:cond2}
    \partial^\beta q_1=\vO\parb{r^{-(1/2+\delta+\abs{\beta})}};\quad r=(x^2+y^2)^{1/2}.
  \end{align}
\end{cond}

We introduce for any scattering orbit (with potential $q=q_1$)
the
\emph{asymptotic orthogonal momenta} $\zeta^{\pm} =\lim_{t\to \pm
  \infty} \zeta(t)$. The  orbit is incoming and outgoing along 
parabolas given as sections of the paraboloids
$x=\frac1{2(\zeta^\mp)^2}y^2$, respectively. It is part of the
classical scattering problem to determine the transition from an
incoming asymptotic parabola to an outgoing asymptotic parabola, or
stated somewhat strongerly,  the transition from an incoming momentum  $\zeta^-$ to
an outgoing momentum $\zeta^+$. As we will outline below (with further details
given in Section \ref{subsec:Identificaton of wave operators}) this information
is in quantum mechanics  encoded in the
subject of study 
in  this paper, the
scattering matrix.

Under 
Condition \ref{cond:one-body-starkPot} the  free and the perturbed Stark operators on $\vH:=L^2(\R^d)$ are  given by
$H_0=p^2/2-x$ and   $H=p^2/2-x+q$ (with $p=-\i \nabla$), respectively.
We shall throughout this paper impose Condition
\ref{cond:one-body-starkPot} as well as the unique continuation
principle  \cite[Condition 2.4]{AIIS1}. It is a
well-established fact that  
asymptotic completeness holds, i.e.  that the wave operators
\begin{align*}
  W^{\pm}=\slim_{t\to \pm\infty}\e^{\i tH}\e^{-\i tH_0}
\end{align*} exist and map onto $L^2(\R^d)$. The
{asymptotic orthogonal momenta} read in this case
\begin{align*}
  p_y^{\pm} =\lim_{t\to \pm \infty} \e^{\i tH}p_y \e^{-\i
  tH}=\lim_{t\to \pm \infty} \e^{\i tH}y/t\e^{-\i tH},
\end{align*} where the limits are  taken in the strong resolvent
sense (cf. \cite {Ad}).

In terms of the \emph{stationary wave operators} $\vF^-$ and $\vF^+$ we can
simultaneously diagonalize either $H$ and $p_y^{-} $ or  $H$ and
$p_y^{+} $, respectively. These  wave operators  have several
representations, for example given in terms of asymptotic properties
of the boundary values $\lim_{\epsilon \to 0} (H-\lambda \mp
\i\epsilon)^{-1}$ (taken in an appropriate space). However they are
also   connected to  the time-depending wave operators by the formulas
$\vF_0(W^\pm)^*=\vF^\pm$, where  $\vF_0$ is given in terms of asymptotic properties
of the boundary values $\lim_{\epsilon \to 0} (H_0-\lambda \mp
\i\epsilon)^{-1}$, or alternatively and more useful,  given by a Fourier-Airy
transformation which in turn is defined by an explicit oscillatory
integral. By  joint diagonalization we mean  more precisely the assertions
\begin{align*}
  H=\parb{\vF^\pm}^* M_\lambda \vF^\pm \quad \mand \quad p_y^{\pm} =\parb{\vF^\pm}^*
  \parbb{\int _\R \oplus M_{\zeta}\,\d \lambda}\vF^\pm,
\end{align*}
where $ M_{(\cdot)}$ refers to multiplication in $L^2(\R, \d
\lambda;\Sigma)$ or in $\Sigma:=L^2\parb{\R^{d-1}_\zeta,\d
 \zeta}$,
respectively.

The \emph{scattering
  operator} $S=(W^+)^* W^-$ is represented   as 
\begin{align*}
  \vF_0S\vF_0^{-1}=\int _\R \oplus S(\lambda)\,\d \lambda,
\end{align*} where $ S(\lambda)$ is a unitary operator on $\Sigma$ called the \emph{scattering
  matrix }  at energy $\lambda$. Its 
Schwartz kernel 
 $ S(\lambda)(\zeta,\zeta')$ is defined in
 terms of  variables $\zeta$ and $\zeta'$ which by the above formulas may be interpreted
 as outgoing and incoming asymptotic orthogonal momenta,
 respectively. This   explains the  physical relevance of detailed
 information on  $ S(\lambda)$ and its kernel.

In the main  part of the  paper, Sections \ref{subsubsec:Resolvent bounds}--\ref{subsubsec:better Best results on the scattering
  matrix},  we study somewhat refined 
representations of  $ S(\lambda)$ 
based on Sommerfeld's uniqueness result in Besov
spaces proven in \cite{AIIS1}. We show  mapping properties and
we show that the principal symbol of $T(\lambda):=S(\lambda)-I$ viewed
of as a PsDO  is  given by 
\begin{align*}
  t_{\rm psym}(\zeta,\zeta',y)=t_{\rm psym}(y)= -2\i\int_0^\infty\,
                                                                  \tfrac
                                                                  {q_1(x,-y)}{\sqrt{2x}}\,\d
  x.
  \end{align*} 
 In general a linear operator  $T: C_\c^\infty(\R^{d-1})\to
 C^\infty(\R^{d-1})$ is called a \emph{pseudodifferential operator} on
 $\Sigma$ of  \emph{order} $k\in \R$ if
 there exists a smooth function $t=t(\zeta,\zeta',y)$ such that the
 kernel of $T$ is given by 
\begin{align}\label{eq:orderDef} 
  \begin{split}
   T(\zeta,\zeta') &=(2\pi)^{1-d}\int\, \e^{\i(\zeta-\zeta')\cdot
  y} \,t(\zeta,\zeta',y) \,\d y;\\
&\forall \alpha,\alpha', \beta\in \N_0^{d-1}:\\ \abs{ \partial^{\alpha}_{\zeta}\partial^{\alpha'}_{\zeta'}\partial^\beta_y
  t}&\leq
C_{\alpha,\alpha',\beta}\,\inp{y}^{k-\abs{\beta}}\text{
  for all } y \text{ and
  locally  uniformly in }\zeta,\zeta'. 
  \end{split}
\end{align} Here   $\N_0=\N\cup\set{0}$, and we call $t$ {a \emph{symbol}} of $T$. If $k$ can be taken
arbitrarily, $T$  is a \emph{smoothing operator}. We show
that the above operator 
$T(\lambda)$ has order $-\delta$, like the quantization $T_{\rm psym}$
of the symbol $t_{\rm psym}$ has (see
\eqref{eq:elebnd}), while $T(\lambda)-T_{\rm psym} $ has order $-2\delta$.

Applied to $q=\kappa r^{-1}$ for   $d\geq 3$ the singularity
  structure of the
  kernel of the scattering matrix at the diagonal is 
  \begin{align*}
  S(\lambda)(\zeta,
  \zeta')-\delta(\zeta,
  \zeta')=  \kappa C_d\abs{\zeta-\zeta'}^{3/2-d}+ \vO\parb{\abs{\zeta-\zeta'}^{2-d}},
  \end{align*}
 locally uniformly in
  $\zeta,\zeta'$ and $\lambda$. 
This result conforms with \cite{KK1} for $d= 3$ in which case
$C_d=-\i(2\pi)^{-1/2}$  and the order of the error term is optimal (by
an assertion of \cite{KK1}). Although we only compute  the top
order asymptotics of the singularity our method would allow a further
expansion (as in  \cite{KK1}), in fact  in principle a complete expansion.

In the bulk of the paper we  impose  Condition
 \ref{cond:one-body-starkPot} with the extra condition $q_2=0$.
 \begin{comment}
   , and if not we assume for the sake of definiteness that $q_2$
 is supported in $\set{r<1/4}$ (in the general case we can use a rescaled
 version of the function $\breve f$ in \eqref{eq:par1}).
 \end{comment}
 It is a minor technical issue to treat  the general case by using  
   the arguments of  the paper and the second resolvent equation (not
   to be elaborated on). The unique continuation
principle  \cite[Condition 2.4]{AIIS1} is fulfilled for  $q_2=0$  as
well as for  the example $q=\kappa r^{-1}$,   $d\geq 3$.

 It is convenient in this paper to reserve the notation  $\breve f$ to
 mean  any  smooth  convex function  $\breve f$ on $\R$ such
that $\breve f(t)=1$ for $t\leq 1/2$ and  $\breve f (t)=t$ for $t\geq
2$. This quantity it fixed throughout the paper and will be used
without reference.

We use the standard notation $\inp{z}=(1+\abs{z}^2)^{1/2}$ for $z$ in
a normed space, while we for any $m\in\N$  let
$\inp{z}_m=(m^2+\abs{z}^2)^{1/2}$ and $\hat z_m=z/\inp{z}_m$.
  We use the (standard) notation $F(x\in M)=1_M$ for the
characteristic function of a set $M$. 
 For any $\kappa\in\R\setminus\set{0}$ the notation
 $\chi(\cdot<\kappa)$ stands for any  smooth real function $\chi$ on
 $\R$ with
 $\chi'\in C_\c^\infty(\R)$ and   $\supp \chi \subseteq (-\infty,\kappa)$,
 and  for 
 $\kappa>0$ we require  $\chi(t)=1$ for
$t\leq 3\kappa/4$, while for  $\kappa<0$ we  require  $\chi(t)=1$ for
$t\leq 4\kappa/3$. Let 
$\chi(\cdot>\kappa)=\chi(-\cdot<-\kappa)$,
$\chi^\perp(\cdot<\kappa)=1-\chi(\cdot<\kappa)$ and $\chi^\perp(\cdot>\kappa)=1-\chi(\cdot>\kappa)$.

Let  
  $L^2_s=L^2_s(\R^d)=\inp{(x,y)}^{-s}\vH$ for  $s\in\R$ and
  $L^2_\infty=\cap_s L^2_s$. Let $H^2_\infty=H^2_\infty(\R^d)=
  (p^2+1)^{-1}L^2_\infty$. For $\varphi\in \vH$ and an operator $T$ on $\vH$ the notation
$\inp{T}_\varphi$  means $\inp{\varphi,T\varphi}$. The notation
$\vL(\vH)$ refers to the
set of bounded operators on $\vH$.

\section {Fourier-Airy transformation and the stationary phase method}\label{sec:Airy function asymptotics} 

One can   explicitly specify a diagonalizing transform $\int_{\R}
\oplus \vF_0(\lambda)\,\d \lambda $ such  that
$\delta(H_0-\lambda)=\vF_0(\lambda)^*\vF_0(\lambda)$, cf.  \cite{He, Ya1}. Writing the
Airy function by its  Fourier transform the expression for the kernel
of 
$\vF_0(\lambda)^*$ (it is not unique) is an oscillatory integral.  One
can then look at stationary points when applying the operator  to any $\xi\in
C^\infty_\c(\R^{d-1}_\zeta)\subseteq \Sigma$ and conclude  by stationary phase method
considerations, cf.  \cite {Ho} (or for example \cite  {II,St}), what the asymptotics
should be.  Below we state the results without giving details of
proof, deferring a more precise treatment  to Subsection \ref{subsec:The stationary phase method}. For
 simplicity we take below  $\lambda=0$ and consider 
\begin{align}\label{eq:free_eigFct}
  \begin{split}
\parb{\vF_0(0)^* \xi}(x,y)&=c \int \d \zeta \,\xi(\zeta)\int \e^{\i \theta}\,\d \eta;\\
c&=(2\pi)^{-\tfrac{d+1}2},\quad\theta=y \cdot \zeta-\eta^3/6+(x-\zeta^2/2)\eta.  
  \end{split}
\end{align} For the general case  we would replace $x$ by $x+\lambda$
in this expression.
 We look for critical points
 \begin{align*}
   &0=\partial_\eta \theta=-\eta^2/2-\zeta^2/2 +x \quad (\text{energy relation}),\\
&0=\partial_\zeta \theta=y-\eta\zeta \quad (\text{velocity relation}).
 \end{align*} Note that considering the momentum $\eta$ as an
 effective time indeed  the last equation
  written as  $\zeta=y/\eta$ is a `velocity relation'.  
  If we substitute  $\zeta=y/\eta$ in the
 argument of the function $\xi$, interchange order of integration and do the
 $\zeta$-integration we end up with 
 \begin{align*}
   &c\int \e^{\i \theta_1}\xi(y/\eta) \parb{\i \eta/(2\pi)}^{\tfrac{1-d}2} \,\d \eta;\\
&\theta_1(\eta)=y^2/(2\eta)-\eta^3/6 +x\eta.
 \end{align*} The critical  points of $\theta_1$
fulfill
\begin{align*}
  0=\partial_\eta \theta_1=-\tfrac 12 y^2/\eta^2-\tfrac 12\eta^2 +x,
\end{align*} which in turn fulfill
\begin{align*}
  {\eta^2 =x\pm \parb{x^2 - y^2}^{1/2}.}
\end{align*} We choose `$+$' (the case of `$-$' does  not contribute to the
asymptotics) leading to the two critical  points
\begin{align*}
  \eta=\pm \sqrt{x+ \parb{x^2 - y^2}^{1/2}}\approx \pm (2x)^{1/2}.
\end{align*}
The second order derivative is $ -\eta(1+\vO(y^2/x^2))$.
Whence the asymptotics of the integral is given as the sum of the 
following two terms (to be justified in Subsection \ref{subsec:The
  stationary phase method} and {Appendix \ref{Appendix}):}
\begin{align*}
 c\e^{ \pm\i \theta_{\rm ex}}&\xi(y/\eta) \parb{\i \eta/(2\pi)}^{-d/2}\\
&\approx \tfrac{\e^{\mp\i\pi d/4}}{\sqrt{2\pi}}(2x)^{-d/4}\e^{
  \pm\i\theta_{\rm ex}}\xi(\pm \omega);\\
 &\theta_{\rm ex}=\sqrt{x+ \parb{x^2 - y^2}^{1/2}}\parbb{\tfrac 12 y^2/\eta^2-\eta^2/6
  +x},\\
&\eta^2 =x+ \parb{x^2 - y^2}^{1/2},\quad \omega=(2x)^{-1/2 }y.
\end{align*} We may write the function $\theta_{\rm ex}$ as
\begin{align*}
  \theta_{\rm ex} =f^3_{\rm ex}/3:=  \tfrac 43\sqrt{x+ \parb{x^2 - y^2}^{1/2}}
 \parbb{x- \tfrac 12\parb{x^2 - y^2}^{1/2}}.
\end{align*} It is a solution to  the eikonal equation on the domain
of its definition, as
demonstrated in  \cite{AIIS2}, that is $\tfrac 12|\nabla\theta_{\rm ex}|^2 - x = 0$.

\subsection {The stationary phase method}\label{subsec:The stationary phase method} As in \eqref{eq:free_eigFct}   we
consider in this subsection double integrals of the form
\begin{subequations}
  \begin{align}\label{eq:fFunct}
  \begin{split}
    \phi_{\lambda,\tilde{a}}[\xi](x,y)=c\int \d
   \zeta \,\xi(\zeta)\int \e^{\i \theta_\lambda}\,\tilde a\,\d \eta;\\
c=(2\pi)^{-\tfrac{d+1}2},\quad \theta_\lambda&= y \cdot\zeta-\eta^3/6+(x+\lambda-\zeta^2/2)\eta.
  \end{split}
\end{align}  We call $\tilde{a}=\tilde{a}(x,y;\eta,\zeta)$ a
\emph{symbol}. By definition $ \tilde{a}$ is smooth in $(\eta,\zeta)$
and for some $n\in\N_0$, possibly depending on $\tilde{a}$,
\begin{align}
  \label{eq:1symb}
  \begin{split}
 \abs{\partial_{\eta,\zeta}^\alpha\,
   \tilde{a}}\leq
   C_{\alpha}(\zeta)\parbb{1+\tfrac {{\eta}^2}{\breve f(x)}}^{n}
 ;\quad 
   C_{\alpha}(\cdot ) \text{ locally bounded}.   
  \end{split}
\end{align}
  (We shall use  more refined  symbol classes in Sections \ref{subsubsec:Best results on the scattering
  matrix} and \ref{subsubsec:better Best results on the scattering
  matrix}.)  The function $\xi\in C_\c^\infty(\R^{d-1})$, and
$\lambda\in\R$. 
\end{subequations} Here  we  derive  the asymptotics {of} such integrals
as $\abs{(x,y)}\to \infty$.

\begin{comment}
\begin{align}
  \label{eq:1symb}
  \begin{split}
 \abs{\partial_{\eta,\zeta}^\alpha\,\partial_{x}^{\beta_1}\partial_{y}^{\beta_2}
   \tilde{a}}\leq
   C_{\alpha,\beta_1,\beta_2}(\zeta)&f^{-(\abs{\alpha}+\abs{2\beta_1})}\min\parb{f^2,\inp{y}_m}^{-\abs{\beta_2}};\\&
   C_{\alpha,\beta_1,\beta_2}(\cdot ) \text{ locally bounded},   
  \end{split}
\end{align} where $f$ is given as follows.

   Let $\breve f \in C^\infty(\R)$ be
a convex function such
that $\breve f(t)=1$ for $t\leq 1/2$ and  $\breve f (t)=t$ for $t\geq
2$. Then  $f\in C^\infty(\R^d)$ is  given by 
\begin{align}
f(x,y)=f_m(x,y)=\sqrt{\breve f (2x+2\inp{y}_m)};\quad m\in \N.
\label{eq:par10}
\end{align}
\end{comment}

The critical points of $\theta_\lambda$ are given by 
 \begin{align*}
   &0=\partial_\eta \theta_\lambda=-\eta^2/2-\zeta^2/2 +x+\lambda,\\
&0=\partial_\zeta \theta_\lambda=y-\eta\zeta.
 \end{align*}
{These equations suggest  two ways of integrating by
  parts based on the formulas
  \begin{subequations}
\begin{align}
  \label{eq:inteparts1i}\begin{split}
\e^{\i\theta_\lambda}&=\parbb{1+\parb{x+\lambda-\eta^2/2-\zeta^2/2}^2}^{-1}\\
&\phantom{{}={}}{}\cdot \parb{1-\i \parb{x+\lambda-\eta^2/2-\zeta^2/2}\partial_\eta}\e^{\i\theta_\lambda},
      \end{split}\\
\e^{\i\theta_\lambda}&=\parbb{1+\parb{y-\eta\zeta}^2}^{-1}\parb{1-\i \parb{y-\eta\zeta}\cdot\nabla_\zeta}\e^{\i\theta_\lambda}.
\label{eq:inteparts2i}
\end{align}    
  \end{subequations} 
First} we note that for $x \leq 2R$ for any $R>2$ (large), the function
$\phi_{\lambda,\tilde{a}}[\xi]$  has arbitrary polynomial decay in
$(x,y)$. This follows readily by repeated integration by parts using
\eqref{eq:inteparts1i} and \eqref{eq:inteparts2i}.

Pick for any sufficiently small  $\epsilon>0$  a function $\chi_\epsilon\in
C_\c^\infty((-2\epsilon,2\epsilon))$ such that  $\chi_\epsilon=1$ on
$(-\epsilon,\epsilon)$. Let $ \chi^\perp_\epsilon=1-\chi_\epsilon$,
and let $\chi_R=\chi(\cdot>R)$ for $R>2$.
  Then the functions
\begin{align*}
  &c\int \d
   \zeta \,\xi(\zeta)\int \e^{\i \theta_\lambda}\,\tilde
    a\,\chi^\perp_\epsilon\parb{\abs{\eta}-\sqrt{2x}}\chi_R(x)\,\d
    \eta,\\
&c\int \d
   \zeta \,\xi(\zeta)\int \e^{\i \theta_\lambda}\,\tilde
  a\,\chi_\epsilon\parb{\pm{\eta}-\sqrt{2x}}\chi^\perp_\epsilon\parb{\big|\zeta\mp
  y/\sqrt{2x}\big|}\chi_R(x)\,\d \eta,
\end{align*}  also  have  arbitrary polynomial decay in
$(x,y)$. This is seen by the same argument as for  $x \leq 2R$.

We conclude that the leading asymptotics  of
$\phi_{\lambda,\tilde{a}}[\xi]$ at infinity is the same as that of
$\phi^+_{\lambda,\tilde{a}}[\xi]+\phi^-_{\lambda,\tilde{a}}[\xi]$,
where
\begin{align}\label{eq:mainP}
  \phi^\pm_{\lambda,\tilde{a}}[\xi]=c\int \d
   \zeta \,\xi(\zeta)\int \e^{\i \theta_\lambda}\,\tilde
  a\,\chi_\epsilon\parb{\pm{\eta}-\sqrt{2x}}\chi_\epsilon\parb{\big|\zeta\mp
  y/\sqrt{2x}\big|}\chi_R(x)\,\d \eta.
\end{align} If $C>1$  and $\supp\xi\subset
\set{\abs{\zeta}<C/2}$, then $\phi^\pm_{\lambda,\tilde{a}}[\xi]$
vanish unless    ${\abs{y}< C\sqrt{2x}}$.   Moreover in the
support of the product of the $\chi$-factors in  the integrands of  \eqref{eq:mainP} the stationary
point is  uniquely given  for $R=R(\epsilon)$ big enough as 
\begin{align*}
  \eta^\pm=\pm \sqrt{x+\lambda+ \parb{(x+\lambda)^2 - y^2}^{1/2}}\mand
  \zeta^\pm=y/\eta^\pm,\text{ respectively}.
\end{align*}
Note  that uniformly in ${\abs{y}< C\sqrt{2x}}$,
\begin{align}\label{eq:asypFixed}
  \eta^\pm\mp \sqrt{2x}=\vO(1/\sqrt{2x})\mand \zeta^+=\pm \tfrac y{\sqrt{2x}}\parb{1+ \vO(1/x)}.
\end{align} 

Clearly we are left with the asymptotics in the set $\set{\abs{y}<
  C\sqrt{2x}}$ only.
 We  consider $\sqrt{2x}=h^{-1}$ as a large parameter,
write $\theta_\lambda=h^{-1}\,\tilde{\theta}_\lambda$ and then use
the stationary phase method. We can   in this way obtain  the
following leading order asymptotics as $h\to 0$.
\begin{align}\label{eq:AsySTA}
  \begin{split}
  \phi^\pm_{\lambda,\tilde{a}}[\xi](x,y)&=\tfrac{\e^{\mp\i\pi d/4}}{\sqrt{2\pi}}h^{d/2}\e^{
  \pm\i\theta_{\rm ex}(x+\lambda,y)}\xi(\pm h y)\tilde{a}(x,y;\pm h^{-1},\pm
  hy)+\vO\parb{h^{(d+2)/2}},\;\\&
\text{as } h=(2x)^{-1/2}\to 0 \text{ uniformly in }y\in \set{\abs{y}<
  Ch^{-1}}.   
  \end{split}
\end{align} 
The present  stationary phase problem has 
an additional 
 parameter dependence (that is  a {dependence on}  the variable
 $y$). Such problem  is rather
 standard, and we show 
 the result \eqref{eq:AsySTA} in Appendix \ref{Appendix}  by mimicking the
 proof of \cite[Theorem 4.3]{II} in the presence of the  additional 
 parameter. For very similar  results we
 refer to \cite[Theorems 7.7.5-6]{Ho}.

In particular, in  terms of    Besov spaces $\vB=\vB(f)$, $\vB^*=\vB^*(f)$
and $\vB_0^*=\vB_0^*(f)$,
defined for  the function $f$ given in \eqref{eq:par1} below,  we conclude from \eqref{eq:AsySTA} that
\begin{align}
  \label{eq:basyp}
  \begin{split}
\phi^+_{\lambda,\tilde{a}}[\xi],\,\phi^-_{\lambda,\tilde{a}}[\xi],&\,\phi_{\lambda,\tilde{a}}[\xi]\in
                                                     \vB^*,\\
                                                     \phi_{\lambda,\tilde{a}}[\xi](x,y)\quad\quad
                                                     &\\-\Sigma_\pm\tfrac{\e^{\mp\i\pi d/4}}{\sqrt{2\pi}}&h^{d/2}\e^{
  \pm\i\theta_{\rm ex}(x+\lambda,y)}\xi(\pm h y)\tilde{a}(x,y;\pm h^{-1},\pm
  hy)F(x>1)\in  \vB^*_0.  
  \end{split}
\end{align}

\section{Stationary scattering theory} \label{subsec:Results from
  cite{AIIS2}} 
We recall various  results of \cite{AIIS2}, proven
under  weaker conditions than Condition
\ref{cond:one-body-starkPot}.
\subsection{Parabolic coordinates and the  phase function $\theta^\lambda
  =f^3/3+\lambda f$} \label{subsec:Results from
  cite{AIIS2}0} It is  known since a long time  ago that
parabolic coordinates are useful for studying problems for Stark
Hamiltonians, see \cite{Ti}. Here we  recall the version of these
coordinates used in 
\cite{AIIS2}, which in turn is   a slight modification
 of the one of \cite{AIIS1}.
 We introduce $f\in C^\infty(\R^d)$ by the recipe 
\begin{subequations}
  \begin{align}
f(x,y)=\sqrt{\breve f (r+x)};\quad
r=(x^2+y^2)^{1/2}.
\label{eq:par1}
\end{align} Note that this function  obeys
\begin{align}\label{eq:conv}
  \nabla^2f^2\geq 0.
\end{align}

Of course a classical scattering orbit  will have $x>1$ eventually so that for
large time $f=(r+x)^{1/2}$. Since we will use the parabolic variable 
$(r+x)^{1/2}$ in quantum mechanics it is convenient to
introduce the above  regularization $f$, say also  named a \emph{parabolic
  variable}. We introduce 
other \emph{parabolic variables},
\begin{align}\label{eq:par2}
  g=y/f,\quad g_i=y_i/f;\quad i=2,\dots,d.
\end{align}
\end{subequations}

We recall a  few  calculus
formulas
in parabolic coordinates, cf.  \cite{AIIS2}. It is below  tacitly assumed
 that $r+x>2$. 
\begin{align}\label{eq:ort}
  \begin{split}
  f^2+g^2&=2r,\quad  f^2-g^2=2x,\quad f\abs{g}=\abs{y},\\ 2r \abs{\nabla f}^2&=1
,\quad 
\nabla f\cdot \nabla g_i=0;\quad i=2,\dots,d.  
  \end{split}
\end{align} 

Introducing
$\theta^\lambda =f^3/3+\lambda f$ for any  $\lambda\in\R$  (note the
superscript convention to distinguish this function and  the phase $\theta_\lambda$ of
\eqref{eq:fFunct}) we compute
(taking  here  $\lambda=0$)  
\begin{align}\label{eq:comtheta} 
  \begin{split}
\nabla \theta^0 &=\tfrac 1{2r}(f^3,fy),\\
(\nabla^2\theta^0)^{11} &= -\tfrac 12 \tfrac{xf^3}{r^3}+\tfrac 34 \tfrac{f^3}{r^2}, \\
(\nabla^2\theta^0)^{1\alpha} &= -\tfrac 12 \tfrac{y^\alpha
                             f^3}{r^3}+\tfrac 34 \tfrac{y^\alpha f}{r^2}, \\
  (\nabla^2\theta^0)^{\alpha\beta} &= -\tfrac 12
                                   \tfrac{y^{\alpha}y^{\beta}f}{r^3}+\tfrac
                                   14
                                   \tfrac{y^{\alpha}y^{\beta}}{r^2f}+\tfrac
                                   12
                                   \tfrac{f}{r}\delta^{\alpha\beta},\\
\Delta\theta^0 &= \tfrac d2 \tfrac fr.
  \end{split}
\end{align} Moreover
\begin{align}\label{eq:gradComP}
  \nabla \parb{ \tfrac{2r}{f^2}} =\tfrac 2{rf^4}(-y^2,xy).
\end{align}
Letting  $T$ denote the change to parabolic coordinates, $ T(x,y)=(f,g)$,
  then a computation using \eqref{eq:ort} shows that
\begin{align}\label{eq:Jac1}
  J:=\abs {\det T'}= \tfrac {f^{2-d}}{f^2+g^2}.
\end{align} 

  Using again \eqref{eq:ort}  we easily compute the partial derivative
  with respect to $f$
\begin{align}\label{eq:parF}
  \partial_f=\abs{\nabla f}^{-2}\nabla f\cdot \nabla=F\cdot \nabla;\quad F:=\tfrac
  {2r}{f^2}\nabla \theta^0=2r\nabla f,
\end{align} and by using \eqref{eq:comtheta}--\eqref{eq:Jac1}  we compute
\begin{align}\label{eq:Jac2}
  \partial_f\ln\parb{J^{-1/2}}=\tfrac 12 \Div  F.
\end{align}

In the more restricted  region $\set{x>1,x>2\abs{y}}$ (for example)  the following
uniform bounds hold,   cf.  \cite[Lemma 3.4]{AIIS2},
\begin{align}\label{eq:comfs}
  \begin{split}
\theta^0&=\tfrac{(2x)^{3/2}}{3}\parb{1+\tfrac
  38\abs{y/x}^2+\vO\parb{\abs{y/x}^4}},\\
\theta_{\rm ex}-\theta^0&=f^3\vO\parb{\abs{y/x}^4},\\
\nabla \theta_{\rm ex}-\nabla \theta^0&=f \vO\parb{\abs{y/x}^3}, \\
\nabla^2 \theta_{\rm ex}-\nabla^2 \theta^0&=f^{-1} \vO\parb{\abs{y/x}^2}, \\
 f-f_{\rm ex}&=f\vO\parb{\abs{y/x}^4}.  
  \end{split}
\end{align} 

We  can  obtain similar formulas for $\lambda\neq 0$
by using  a Taylor  expansion in combination with  \eqref{eq:comtheta}
and \eqref{eq:comfs}. Thus in particular in the same restricted  region
\begin{align}
  \label{eq:difEst} \theta_{\rm ex}(x+\lambda,y)-\theta^\lambda(x,y)=f^3\vO\parb{\abs{y/x}^4}+f\vO\parb{\abs{y/x}^2}+f^{-1}\vO\parb{\abs{y/x}^0}.
\end{align}

\subsection{Stationary wave
    operators and the scattering matrix} \label{subsec:Stationary wave
    operators and the scattering matrix} 
 We introduce    the \emph{radiation operators} 
\begin{align}\label{eq:QM11}
  \gamma^{\lambda\pm}=p\mp\nabla
  \theta^\lambda \quad \mand
  \quad \gamma_{\|}^{\lambda\pm}=\Re\parb{ 2r\nabla f\cdot \gamma^{\lambda\pm}}.
\end{align}
 Similarly we denote $\Gamma^{\lambda\pm}=(\gamma^{\lambda\pm},y/f^2)$ and use  the notation $\Gamma^{\lambda\pm}_k$ for its  components, although there is a
  $\lambda$-dependence only for $k\leq d$.

Recall the notation $\vB=\vB(f)$, $\vB^*=\vB^*(f)$ and $\vB_0^*=\vB_0^*(f)$  for  the Besov
spaces. The following   radiation condition bounds
are  simplified versions of similar bounds  stated and proved in \cite{AIIS2}.

\begin{proposition}\label{prop:phase-space-radiation} Let $\epsilon>0$
  and $\psi\in L^2_\infty$ be given. Let $\phi=R(\lambda\pm\i
  0)\psi$ 
  for $\lambda\in \R$ ($ \phi \in \vB^*$ by \cite{AIIS1}). Then for
  all $k,l=1,\dots, 2d-1$ and  locally
  uniformly in $\lambda$
  \begin{subequations}
 \begin{align}\label{eq:p1}
   &\norm{f^{(1+2\delta-\epsilon)/2}\,\Gamma^{\lambda\pm}_k\phi}_{\vB^*}<\infty,\\\label{eq:p2}
&\norm{f^{1+2\delta-\epsilon}\,\Gamma^{\lambda\pm}_k\,\Gamma^{\lambda\pm}_l\phi}_{\vB^*}<\infty,\\\label{eq:p3}
&\norm{f^{1+2\delta-\epsilon}\gamma_{\|}^{\lambda\pm}\phi}_{\vB^*}<\infty.
\end{align}   
  \end{subequations}
  \end{proposition}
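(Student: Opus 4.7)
\emph{Strategy.} The plan is to prove \eqref{eq:p1}--\eqref{eq:p3} by positive commutator (Mourre-type) arguments applied to the regularized resolvent $\phi_\epsilon=R(\lambda\pm\i\epsilon)\psi$ and then passing to the limit $\epsilon\to 0^+$, in the stationary spirit of \cite{AIIS2}. The essential inputs are the a priori bound $\phi\in\vB^*$ from \cite{AIIS1} and the approximate eikonal identity
\[
  \tfrac12|\nabla\theta^\lambda|^2-(x+\lambda)=\vO(f^{-2})\text{ on }\{f\ge 2\},
\]
which is a direct consequence of $\nabla\theta^\lambda=(f^2+\lambda)\nabla f$ and \eqref{eq:ort}. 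This permits a factorization of $H-\lambda$ schematically as $\tfrac12\Re(\gamma^{\lambda-}\cdot\gamma^{\lambda+})$ modulo $\vO(f^{-2})$ and the potential $q$, and it is this factorization that singles out $\gamma^{\lambda\pm}$ as the natural variables for the radiation condition.

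\emph{Deriving the three bounds.} For \eqref{eq:p3} I would test the resolvent identity $(H-\lambda\mp\i\epsilon)\phi_\epsilon=\psi$ against a propagation observable of the form $\Phi=\chi(f>R)f^{1+2\delta-\epsilon}$, take the imaginary part, and use \eqref{eq:parF}--\eqref{eq:Jac2} to express the Heisenberg commutator $\i[H,\Phi]$ as a nonnegative quadratic form in $\gamma_\|^{\lambda\pm}$ up to integrable remainders; uniformity of the estimate in $\epsilon$ then gives \eqref{eq:p3}. For \eqref{eq:p1} one tests against a symmetrized observable combining $f^{1+2\delta-\epsilon}$ with the $(2d-1)$ directions of $\Gamma^{\lambda\pm}$, obtaining a nonnegative quadratic form in all components of $\Gamma^{\lambda\pm}$ with half the power of the weight; the extra component $y/f^2$ in $\Gamma^{\lambda\pm}$ is automatically of size $\vO(f^{-1})$ on $\{f\ge 2\}$ and contributes no obstruction. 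Finally, \eqref{eq:p2} follows by applying $\Gamma^{\lambda\pm}_k$ to $(H-\lambda)\phi=\psi$ and commuting it through $H$, noting that $[H,\Gamma^{\lambda\pm}_k]$ is first-order with coefficients of size $\vO(f^{-1})$ by \eqref{eq:comtheta}; one then reruns the positive-commutator scheme on $\Gamma^{\lambda\pm}_k\phi$, whose right-hand side now lies in an appropriate weighted $\vB$-space thanks to \eqref{eq:p1}.

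\emph{Main obstacle.} The principal difficulty is the bookkeeping of three error sources against the single family of weights $f^{1+2\delta-\epsilon}$: the eikonal error $\vO(f^{-2})$, which is harmless; the self-commutators $[\gamma^{\lambda\pm}_k,\gamma^{\lambda\pm}_l]$, which are first-order with coefficients of size $\vO(f^{-1})$ by \eqref{eq:comtheta} and must be dealt with inductively, using the first-order bounds as input to the second-order one; and most delicately the potential $q=\vO(f^{-(1+2\delta)})$ from \eqref{eq:cond2}, whose decay is exactly critical relative to the weight $f^{1+2\delta-\epsilon}$. It is this sharp matching that forces the loss $\epsilon>0$: without it the $q$-contribution cannot be absorbed by the positive part of the commutator, only by a dyadic averaging over $\vB^*$-shells. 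The estimates are otherwise a direct simplification of those in \cite{AIIS2}, since Condition \ref{cond:one-body-starkPot} unifies all the weight scales of \cite{AIIS2} into a single family indexed by $\epsilon$.
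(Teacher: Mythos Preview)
The paper does not supply its own proof of this proposition: it is stated explicitly as a simplified version of bounds ``stated and proved in \cite{AIIS2}'', so there is no argument in the present paper to compare against. Your sketch is broadly in the right spirit --- positive-commutator/propagation estimates applied to the regularized resolvent, with the eikonal identity and the decay of $q$ controlling the remainder terms --- and this is indeed the general mechanism behind such radiation-condition bounds.

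That said, one detail of your outline is not quite right as written. For \eqref{eq:p3} you propose testing against the pure weight $\Phi=\chi(f>R)f^{1+2\delta-\epsilon}$ and reading off a nonnegative quadratic form in $\gamma_\|^{\lambda\pm}$ from $\i[H,\Phi]$. But the commutator of $H$ with a function of $f$ alone is \emph{first order} in $p$ (schematically $\Re\bigl(\nabla\Phi\cdot p\bigr)$), not quadratic in any radiation operator; it cannot by itself produce a positive square $\lvert\gamma_\|^{\lambda\pm}\phi\rvert^2$. The observable has to already contain the radiation direction --- one typically commutes $H$ with something built from $\gamma^{\lambda\pm}$ (or equivalently works with $\Re\langle f^s\gamma^{\lambda\pm}\phi,\,(H-\lambda)\phi\rangle$ and exploits the approximate factorization $2(H-\lambda)\approx(\gamma^{\lambda\mp})^*\cdot\gamma^{\lambda\pm}$ you mention). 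Once the observable is chosen correctly, the rest of your bookkeeping (eikonal error $\vO(f^{-2})$, commutator errors $\vO(f^{-1})$, potential $\vO(f^{-(1+2\delta)})$ forcing the $\epsilon$-loss, and the bootstrap from \eqref{eq:p1} to \eqref{eq:p2}) is on target and matches the structure of the argument in \cite{AIIS2}.
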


The stationary  wave operators, the scattering matrix and the  generalized
eigenfunctions from \cite{AIIS2} are constructed in terms of the phase
$\theta^\lambda$ as follows.

We introduce, using  
 parabolic coordinates,
\begin{align*}
   \vR^{\lambda\pm}_{\psi,f} (\zeta):=\parb{J^{-1/2}\e^{\mp\i\theta^\lambda}
R(\lambda \pm \i 0)\psi}(f,\pm \zeta);\quad  \psi\in \vB.
\end{align*} 
 It is an almost trivial consequence of  \eqref{eq:parF}, \eqref{eq:Jac2} and Proposition 
\ref{prop:phase-space-radiation}, that  for any $\psi\in
L^2_\infty\subseteq \vB$ there exist $\vSigmalim_{f\to
  \infty} \,\vR^{\lambda\pm}_{\psi,f} $. Indeed the computation
\begin{align*}
\partial_f \vR^{\lambda\pm}_{\psi,f} 
= 
\i \parb{J^{-1/2}\mathrm e^{\mp {\mathrm i}\theta^\lambda}\gamma_{\|}^{\lambda\pm}
R(\lambda\pm {\mathrm i}0)\psi}(f, \pm \cdot), 
\end{align*}\cas  and \eqref{eq:p3} show that  the functions $[0,\infty)\ni
f \to \vR^{\lambda\pm}_{\psi,f} \in \Sigma$ have  integrable derivatives.
Consequently  there exist  \emph{wave operators at fixed energy},
\begin{align*}
  \vF^\pm(\lambda)\psi:=\tfrac{\e^{\pm\i\pi (d-2)/4}}{\sqrt{2\pi}}
\vSigmalim_{f\to \infty} \,\vR^{\lambda\pm}_{\psi,f};\quad \psi\in
L^2_\infty.
\end{align*}
 These obey   the formulas
\begin{align}
  \label{eq:fund}
  \|\vF^\pm(\lambda)\psi\|^2=\inp{\psi, \delta(H-\lambda)\psi};\quad \delta(H-\lambda)=\pi^{-1}\mathop{\mathrm{Im}}R(\lambda+\i 0).
\end{align}

It follows from \eqref{eq:fund} that 
$\norm{\vF^+(\lambda)\psi}=\norm{\vF^-(\lambda)\psi}$ and that
$\vF^\pm(\lambda)\psi\in \Sigma$ are  defined for  $\psi\in
\vB$. This extension of $\vF^\pm(\lambda)$ is given explicitly as
follows. For  vector-valued functions $\xi$ on $\R$ (or
on $\R_+$) we
use the notation $\intR \xi(r)\,\d r=\rho^{-1}\int_{\rho}^{2\rho} \xi(r)
\,\d r$, $\rho>0$. Then for  any $\psi\in \vB$ the vectors
$\vF^\pm(\lambda)\psi $ are given as the 
averaged limits  
\begin{align}\label{eq:extbF2}
  \begin{split} \vF^\pm(\lambda)\psi =\vSigmalim_{\rho\to
    \infty} {-\!\!\!\!\!\!\int_\rho}  \,\tfrac{\e^{\pm\i\pi
      (d-2)/4}}{\sqrt{2\pi}}\,\vR^{\lambda\pm}_{\psi,f}\,\d f. 
\end{split}
\end{align} The maps $\vB\ni \psi\to
\vF^\pm(\lambda)\psi\in \Sigma$ are surjective (cf. Theorem
\ref{thm:char-gener-eigenf-1} \label{sec:stat-wave-oper}\ref{item:14.5.14.4.17}
stated below), and it is also proven in \cite{AIIS2}
that  for any $\psi\in \vB$ the maps
  $\R\ni\lambda\to \vF^\pm(\cdot)\psi\in \Sigma$ are continuous.

Consequently we can
define  the \emph{scattering
  matrix}  as the unique unitary operator $S(\lambda)$ on
$\Sigma$ obeying
\begin{align}\label{eq:scattering_matrix}
  {\vF^+(\lambda)\psi}=S(\lambda){\vF^-(\lambda)\psi},
\end{align} and deduce that 
 the map $\R\ni\lambda\to S(\lambda)\in \vL(\Sigma)$ is
strongly continuous.

Along with $ \mathcal H=L^2(\R^d)$ we introduce the space 
 \begin{align*}
  \widetilde \vH =L^2(\R, \d \lambda;\Sigma),
\end{align*} 
  and let $M_\lambda$ be the operator of multiplication by $\lambda$ on
$\widetilde\vH$.
We introduce the operators 
\begin{align*}
  \vF^\pm=\int_{\R} \oplus \vF^\pm(\lambda)\,\d \lambda\colon
  \vB\to C(\R;\Sigma).
\end{align*}

\begin{thm}
  \label{prop:dist-four-transf} 
The operators   $\vF^\pm$ extend uniquely as to become  unitary
operators  $\vF^\pm:\vH\to \widetilde{\mathcal H}$. These  extensions 
satisfy  $\vF^\pm H=
M_\lambda \vF^\pm$.
\end{thm}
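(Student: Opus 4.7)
\emph{Step 1: Isometry on a dense core.} First I would use identity \eqref{eq:fund}, which says that for $\psi\in L^2_\infty$ the fibrewise Parseval relation $\|\vF^\pm(\lambda)\psi\|_\Sigma^2 = \inp{\psi,\delta(H-\lambda)\psi}$ holds for every $\lambda$. Combined with the spectral theorem and the fact (established in \cite{AIIS2}) that $H$ has purely absolutely continuous spectrum equal to $\R$, integration in $\lambda$ gives
\begin{align*}
\int_\R \|\vF^\pm(\lambda)\psi\|_\Sigma^2\,\d\lambda = \int_\R\inp{\psi,\delta(H-\lambda)\psi}\,\d\lambda = \|\psi\|_\vH^2.
\end{align*}
The strong continuity of $\lambda\mapsto\vF^\pm(\lambda)\psi$ already recorded in Subsection \ref{subsec:Stationary wave operators and the scattering matrix} ensures that the left side is a genuine $\widetilde\vH$-norm. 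Hence $\vF^\pm:L^2_\infty\to\widetilde\vH$ is isometric, and by density of $L^2_\infty$ in $\vH$ it extends uniquely to a linear isometry $\vF^\pm:\vH\to\widetilde\vH$.

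\emph{Step 2: Intertwining.} Next, for $\psi\in L^2_\infty\cap D(H)$ (a core of $H$) the resolvent identity
\begin{align*}
R(\lambda\pm\i0)(H\psi)=\psi+\lambda R(\lambda\pm\i0)\psi
\end{align*}
gives the splitting $\vR^{\lambda\pm}_{H\psi,f}(\zeta) = \parb{J^{-1/2}\e^{\mp\i\theta^\lambda}\psi}(f,\pm\zeta)+\lambda\,\vR^{\lambda\pm}_{\psi,f}(\zeta)$. Since $\psi\in L^2_\infty$ decays faster than any polynomial in $(x,y)$ while the Jacobian factor $J^{-1/2}$ from \eqref{eq:Jac1} grows only polynomially in $(f,g)$, the first term tends to zero in $\Sigma$ after the averaged limit in \eqref{eq:extbF2}. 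Consequently $\vF^\pm(\lambda)(H\psi)=\lambda\vF^\pm(\lambda)\psi$ pointwise in $\lambda$, which is $\vF^\pm H=M_\lambda\vF^\pm$ on the core, and the relation extends to $D(H)$ by closedness of both operators together with the boundedness of $\vF^\pm$.

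\emph{Step 3: Surjectivity.} Finally, the intertwining just proved implies that $\Ran(\vF^\pm)$ is a closed subspace of $\widetilde\vH$ reducing for $M_\lambda$; hence it is a direct integral $\int_\R\oplus \vR^\pm_\lambda\,\d\lambda$ of some measurable field of closed subspaces $\vR^\pm_\lambda\subseteq\Sigma$. To identify $\vR^\pm_\lambda=\Sigma$ for a.e.\ $\lambda$, I would invoke the surjectivity of the fibre maps $\vF^\pm(\lambda):\vB\to\Sigma$ stated in Theorem \ref{thm:char-gener-eigenf-1}: any $\xi\in\widetilde\vH$ orthogonal to $\Ran(\vF^\pm)$ must obey $\xi(\lambda)\perp\vF^\pm(\lambda)\psi$ for a.e.\ $\lambda$ and every $\psi\in L^2_\infty\subseteq\vB$, which by a measurable-section argument forces $\xi(\lambda)=0$ a.e. As a shortcut flagged in the introduction, one may alternatively invoke the identity $\vF^\pm=\vF_0(W^\pm)^*$ together with unitarity of $\vF_0$ and of the time-dependent wave operators $W^\pm$ (asymptotic completeness), which yields unitarity and intertwining simultaneously.

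\emph{Main obstacle.} The hard part will be Step 3: promoting the fibrewise surjectivity of $\vF^\pm(\lambda)$ at each $\lambda$ to global surjectivity of $\vF^\pm$ requires care with measurability, and one has to reconcile the dense subspace used for the isometry ($L^2_\infty$) with the larger space where fibrewise surjectivity lives ($\vB$). A secondary technical point is verifying the decay of the boundary term in Step 2 after the averaging of \eqref{eq:extbF2}, which needs a uniform bound on $\|(J^{-1/2}\e^{\mp\i\theta^\lambda}\psi)(f,\pm\cdot)\|_\Sigma$ as $f\to\infty$; this is routine thanks to the rapid decay of $\psi\in L^2_\infty$ but still requires an explicit estimate.
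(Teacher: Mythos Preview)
The paper does not give a proof of this theorem; it is quoted as a result of \cite{AIIS2} (see the opening line of Section~\ref{subsec:Results from cite{AIIS2}}), so there is no ``paper's own proof'' to compare against. Your outline is nonetheless the standard stationary-theory argument and is essentially correct.

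A few comments. In Step~2 your core $L^2_\infty\cap D(H)$ is not quite adequate: you need $H\psi\in\vB$ (or better $L^2_\infty$) for $\vF^\pm(\lambda)(H\psi)$ to be defined, and you need $\psi$ to have a well-defined restriction to the level sets $\{f=\mathrm{const}\}$ for the boundary term to make sense. Taking $\psi\in C_\c^\infty(\R^d)\subset H^2_\infty$ fixes both issues at once (then the boundary term vanishes identically for large $f$), and $C_\c^\infty$ is still a core for $H$. Alternatively, intertwining follows more cleanly by polarizing \eqref{eq:fund} to get $\inp{\vF^\pm(\lambda)\phi,\vF^\pm(\lambda)\psi}=\inp{\phi,\delta(H-\lambda)\psi}$ and integrating against $h(\lambda)$, which yields $(\vF^\pm)^*M_h\vF^\pm=h(H)$ without any boundary term.

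Regarding your ``shortcut'' via $\vF^\pm=\vF_0(W^\pm)^*$: note that in this paper that identity is \emph{derived} in Section~\ref{subsec:Identificaton of wave operators} using the stationary machinery already set up, so invoking it here would reverse the logical order. It is legitimate only if you are willing to import time-dependent asymptotic completeness as an external input, which is contrary to the purely stationary spirit of \cite{AIIS2}. Your direct Step~3 via fibrewise surjectivity (Theorem~\ref{thm:char-gener-eigenf-1}\,\ref{item:14.5.14.4.17}) together with the continuity of $\lambda\mapsto\vF^\pm(\lambda)\psi$ is the right stationary route; the measurability issue you flag is handled by choosing a countable dense family $\{\psi_n\}\subset L^2_\infty$ and using that $\{\vF^\pm(\lambda)\psi_n\}_n$ is dense in $\Sigma$ for every $\lambda$ (not just a.e.), which follows from the said surjectivity and the density of $L^2_\infty$ in $\vB$.
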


 The extensions  $\vF^\pm$ in  Theorem
  \ref{prop:dist-four-transf} are called \emph{stationary wave
    operators}, and the  first assertion of the theorem may be  referred to as
    \emph{stationary completeness}. On the other hand the adjoint of the operators   $\vF^\pm(\lambda)$, i.e.   $\vF^\pm(\lambda)^*\in \vL(\Sigma, \vB^*)$, are called \emph{stationary wave
    matrices}.

\subsection{Minimal generalized eigenfunctions} \label{Miminum generalized eigenfunctions}

 For any $\xi\in \Sigma$ we
introduce purely outgoing/incoming  approximate  generalized eigenfunctions  
{$\phi^{\lambda\pm}[\xi]\in \vB^*$} by, using the parabolic coordinates,
\begin{align}\label{eq:gen1B}
\begin{split}
\phi^{\lambda \pm}[\xi](f,g)= \tfrac{\e^{\mp\i\pi d/4}}{\sqrt{2\pi}}\chi^{\perp}(f<2)J^{1/2}(f,g)\,\e^{\pm\i\theta^\lambda(f,g)}\,\xi(\pm
g).
\end{split}
\end{align}  
 These functions may be seen as   purely
 outgoing/incoming (zeroth order) WKB-approximations of  generalized
 eigenfunctions. In fact for $\xi\in C^\infty_{\mathrm c}(\R^{d-1})$ we can compute 
 $\psi^{\lambda \pm}[\xi] :=(H-\lambda)\phi^{\lambda \pm}[\xi] \in \vB,$
  which allows us to  consider the exact solutions 
\begin{subequations}
  \begin{align}\label{eq:rep1}
    \begin{split}
    \phi_{\rm{ex}}^{\lambda \pm}[\xi]:= \phi^{\lambda \pm}[\xi]-R(\lambda\mp\i 0)\psi^{\lambda \pm}[\xi];\quad\xi\in
    C_\c^\infty(\R^{d-1}).  
    \end{split}
  \end{align}
   Furthermore for $\xi\in C^\infty_{\mathrm c}(\R^{d-1})$ we have the
   formulas 
  \begin{align}\label{eq:exForm}
    {\phi_{\rm{ex}}^{\lambda\pm}[\xi]}&=\vF^\pm(\lambda)^* \xi,\\
\label{eq:0Sommer}
    0&= \phi^{\lambda \pm}[\xi]-R(\lambda\pm\i 0)\psi^{\lambda \pm}[\xi],\\
\label{eq:rep3}
    \xi&= \pm\i 2\pi \vF^\pm(\lambda)\psi^{\lambda \pm}[\xi]. 
  \end{align}  Here \eqref{eq:0Sommer} is a consequence of the  
Sommerfeld uniqueness result of \cite{AIIS1}, and obviously
in turn  \eqref{eq:rep3} is a consequence of \eqref{eq:0Sommer}.
\end{subequations}

 The elements of the space
\begin{equation*}
    \vE_\lambda:=\{\phi\in \vB^*\,|\, (H-\lambda)\phi=0\}
  \end{equation*}  are  called
  \emph{minimal generalized eigenfunctions}. They are all of the form
  specified to the right in
\eqref{eq:exForm} with $\xi \in \Sigma$, as stated in the following theorem from \cite{AIIS2}.

\begin{thm}
  \label{thm:char-gener-eigenf-1}
\begin{subequations}
  \begin{enumerate}[1)]
  \item\label{item:14.5.13.5.40} For any one of $\xi_\pm \in \Sigma$ or $\phi\in \vE_\lambda$ the
    two  other  quantities in the triple $(\xi_-,\xi_+,  \phi)$ uniquely exist
    such that
    \begin{align}\label{eq:gen1}
      \phi -\phi^{\lambda +}[\xi_+]-\phi^{-}[\xi_-]\in
      \vB_0^*.
    \end{align}

  \item \label{item:14.5.13.5.41} The  correspondences  in  \eqref{eq:gen1} are  given  by  the
    formulas
    \begin{align}\label{eq:aEigenfw}
     \phi&= \vF^\pm(\lambda)^*\xi_\pm,\quad \xi_+=S(\lambda)\xi_-,\\
\xi_\mp&=\mp\tfrac1{2} \tfrac{\sqrt{2\pi}}{\e^{\pm\i\pi d/4}}\vSigmalim_{\rho\to \infty}
-\!\!\!\!\!\!\int_\rho \,\Bigl ( J^{-1/2}
\parbb{f\sqrt{2r}}^{-1}\,\e^{\pm\i\theta^\lambda}\gamma_{\|}^{\lambda\pm}\phi\Bigr)(f,\mp\cdot)\,\d f.\label{eq:aEigenfwB}
    \end{align}
    In  particular  the  wave  matrices  $\vF^\pm(\lambda)^*\colon\Sigma\to
    \vE_\lambda$ are linear isomorphisms.

  \item\label{item:14.5.13.5.42}  The wave matrices
    $\vF^\pm(\lambda)^*\colon\Sigma\to \vE_\lambda\,(\subseteq \vB^*)$
    are bi-continuous. In fact
    \begin{align}\label{eq:aEigenf2w}
      \|\xi_\pm\|_{\Sigma}^2=\pi\lim_{m \to \infty}2^{-m}\norm{F(2^m\leq f< 2^{m+1})\phi}^2.
    \end{align}

  \item\label{item:14.5.14.4.17}   The   operators    $\vF^\pm(\lambda)\colon   \vB\to   \Sigma$   and
    $\delta(H-\lambda)\colon \vB\to \vE_\lambda$ map  onto.
  \end{enumerate}
   \end{subequations}
\end{thm}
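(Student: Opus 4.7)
\emph{Plan.} I would prove the theorem in three stages: (a) the forward direction $\xi_\pm \mapsto \phi$ on the dense subspace $\xi_\pm \in C_\c^\infty(\R^{d-1})$, together with Part \ref{item:14.5.13.5.41}; (b) the norm identity \eqref{eq:aEigenf2w} of Part \ref{item:14.5.13.5.42}, which gives uniqueness and bi-continuity and hence allows extension to $\xi_\pm \in \Sigma$; (c) the backward correspondence $\phi \mapsto \xi_\pm$, completing Part \ref{item:14.5.13.5.40} and formula \eqref{eq:aEigenfwB}, and finally Part \ref{item:14.5.14.4.17}.

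For stage (a) the key auxiliary lemma is the \emph{resolvent asymptotic identification}
\begin{align*}
R(\lambda \mp \i 0)\psi - \phi^{\lambda\mp}\bigl[\mp 2\pi\i\,\vF^\mp(\lambda)\psi\bigr] \in \vB_0^*,\qquad \psi \in \vB.
\end{align*}
To prove this I would insert the definitions \eqref{eq:gen1B} and \eqref{eq:extbF2}, observe that both sides are governed by the same function $\vR^{\lambda\mp}_{\psi,f}$ (the phase prefactors $\e^{\mp\i\pi d/4}/\sqrt{2\pi}$ and $\e^{\pm\i\pi(d-2)/4}/\sqrt{2\pi}$ combining to give the stated factor $\mp 2\pi\i$), and then upgrade the averaged $\Sigma$-limit in \eqref{eq:extbF2} to a $\vB_0^*$-vanishing remainder by integrating the pointwise identity $\partial_f \vR^{\lambda\mp}_{\psi,f} = \i (J^{-1/2}\e^{\pm\i\theta^\lambda}\gamma_\|^{\lambda\mp}R(\lambda\mp\i 0)\psi)(f,\mp\cdot)$ in $f$ and invoking the radiation bound \eqref{eq:p3}. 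Applied with $\psi = \psi^{\lambda\pm}[\xi_\pm]$ and combined with \eqref{eq:rep1} and \eqref{eq:rep3}, this yields \eqref{eq:gen1} together with \eqref{eq:aEigenfw} for $\xi_\pm\in C_\c^\infty(\R^{d-1})$.

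For stage (b), the Jacobian \eqref{eq:Jac1} gives $\int_{2^m\le f<2^{m+1}}|\phi^{\lambda\pm}[\xi]|^2\,\d x\,\d y = (2\pi)^{-1} 2^m \|\xi\|_\Sigma^2$ for $m$ large; the cross term $\phi^{\lambda+}[\xi_+]\overline{\phi^{\lambda-}[\xi_-]}$ carries the nonstationary phase $\e^{2\i\theta^\lambda}$ with $\partial_f(2\theta^\lambda)\gtrsim 2^{2m}$ on the dyadic shell, so an $f$-integration by parts shows its $2^{-m}$-weighted contribution is $O(2^{-3m})$. Combining these with $\|\xi_+\|_\Sigma=\|\xi_-\|_\Sigma$ (from \eqref{eq:fund} and unitarity of $S(\lambda)$) gives \eqref{eq:aEigenf2w}, which in turn immediately yields bi-continuity of $\vF^\pm(\lambda)^*$ and uniqueness of $\xi_\pm$ in \eqref{eq:gen1}; the forward correspondence then extends by continuity to all $\xi_\pm\in\Sigma$.

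For stage (c), given $\phi\in\vE_\lambda$ I would \emph{define} $\xi_\mp$ by the right-hand side of \eqref{eq:aEigenfwB}. The existence of that averaged limit requires the radiation bound \eqref{eq:p3} for a general $\phi\in\vE_\lambda\subset\vB^*$; since Proposition \ref{prop:phase-space-radiation} is stated only for resolvent-type $\phi$, I would re-run the Mourre-type commutator argument of \cite{AIIS2} using only $(H-\lambda)\phi=0$ and $\phi\in\vB^*$. Setting $\tilde\phi := \vF^+(\lambda)^*\xi_+$ from stage (a), the difference $\phi-\tilde\phi\in\vE_\lambda$ satisfies \eqref{eq:gen1} with data $(0,0)$ by construction, so Sommerfeld's uniqueness result of \cite{AIIS1} forces $\phi=\tilde\phi$, completing Part \ref{item:14.5.13.5.40} and formula \eqref{eq:aEigenfwB}. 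Part \ref{item:14.5.14.4.17} then follows: bi-continuity gives $\vF^\pm(\lambda):\vB\to\Sigma$ closed range, injectivity of $\vF^\pm(\lambda)^*$ gives dense range, hence onto; and $\delta(H-\lambda)=\vF^\pm(\lambda)^*\vF^\pm(\lambda)$ inherits its range from $\vF^\pm(\lambda)^*:\Sigma\to\vE_\lambda$. The hardest step will be the resolvent asymptotic identification in stage (a) --- converting the weak averaged limit into a $\vB_0^*$-vanishing remainder, which hinges on the sharp radiation bound \eqref{eq:p3}; a close second is extending \eqref{eq:p3} from resolvents to arbitrary $\phi\in\vE_\lambda$ in stage (c).
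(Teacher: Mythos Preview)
The paper does not actually prove this theorem: Section~\ref{subsec:Results from cite{AIIS2}} opens by stating that it ``recall[s] various results of \cite{AIIS2}, proven under weaker conditions,'' and Theorem~\ref{thm:char-gener-eigenf-1} is simply quoted from \cite{AIIS2} without proof. So there is no in-paper argument to compare your proposal against.

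That said, your three-stage plan is a faithful reconstruction of the standard route (the one taken in \cite{AIIS2} and in the analogous settings of \cite{IS3} and related work). The forward construction on $C_\c^\infty$ via \eqref{eq:rep1}--\eqref{eq:rep3}, the dyadic-shell norm identity using the Jacobian \eqref{eq:Jac1} and oscillation of the cross term, and the closing step via Sommerfeld uniqueness from \cite{AIIS1} are exactly the ingredients one expects. Your identification of the two delicate points---upgrading the averaged $\Sigma$-limit to a $\vB_0^*$-remainder via \eqref{eq:p3}, and extending the radiation bound from $R(\lambda\pm\i 0)\psi$ to arbitrary $\phi\in\vE_\lambda$---is accurate; both are handled in \cite{AIIS1,AIIS2} by the commutator/Besov-space machinery you allude to. One small caution: in stage~(c) you need not re-derive \eqref{eq:p3} for general $\phi\in\vE_\lambda$ from scratch, since once stage~(a) and the density/continuity of stage~(b) are in place, every $\phi\in\vE_\lambda$ is already of the form $\vF^\pm(\lambda)^*\xi_\pm$ by a closed-range argument combined with Sommerfeld uniqueness applied to $\phi$ itself (purely incoming or outgoing elements of $\vE_\lambda$ vanish), which short-circuits that step.
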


\begin{comment}
Note for example that the combination of \eqref{eq:rep1}, 
 \eqref{eq:exForm} and \eqref{eq:rep3}
leads to \eqref{eq:gen1} for any $\xi_-\in C^\infty_{\mathrm
  c}(\R^{d-1})$ with  $\xi_+=S(\lambda)\xi_-$ and  $\phi=
\vF^-(\lambda)^*\xi_-$. By continuity the same assertion then holds
for any $\xi_- \in \Sigma$. The other assertions   follow by
mimicking similarly \cite{Sk, IS2}.  
\end{comment}

The assertions 
\eqref{eq:aEigenfw} and \eqref{eq:aEigenfwB} provide a general
formula for  $S(\lambda)\xi$ (take $\xi=\xi_-$ and $\phi=
\vF^-(\lambda)^*\xi$). However for smooth compactly supported $\xi$
there are the following alternative recipes  for calculating the scattering matrix.   

    \begin{corollary}\label{cor:scatt-matr-gener} For  any $\xi \in  C^\infty_{\mathrm c}(\R^{d-1})$
      \begin{align}\label{eq:formS1}
S(\lambda)\xi=-\i 2\pi \vF^+(\lambda)\psi^{\lambda -}[\xi]=-\tfrac{\sqrt{2\pi}}{\e^{-\i\pi d/4}} \vGlim_{\rho\to
    \infty} \intR \vR_{\psi^{\lambda-},f}^{\lambda+}\,\d f.
        \end{align}

 For any $\xi, {\xi'} \in
C^\infty_{\mathrm c}(\R^{d-1})$ 
\begin{align}\label{eq:formS2}
  \begin{split}
    \tfrac 1{2\pi \i}\inp{{\xi'},S(\lambda)\xi}=\inp{\psi^{\lambda+}[{\xi'}],R(\lambda+
      \i
      0)\psi^{\lambda-}[\xi]}-\inp{\phi^{\lambda+}[{\xi'}],\psi^{\lambda-}[\xi]}.
\end{split}
\end{align} 
    \end{corollary}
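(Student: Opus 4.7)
The plan is to derive both identities directly from Theorem \ref{thm:char-gener-eigenf-1} (notably formulas \eqref{eq:rep1}--\eqref{eq:rep3}), the defining relation \eqref{eq:scattering_matrix} of the scattering matrix, and the averaged representation \eqref{eq:extbF2} of $\vF^+(\lambda)$. Nothing new needs to be analysed; the proof is a short bookkeeping argument.

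For the first formula in \eqref{eq:formS1}, the starting point is that for any $\psi\in \vB$ one has $\vF^+(\lambda)\psi=S(\lambda)\vF^-(\lambda)\psi$. We apply this to $\psi=\psi^{\lambda-}[\xi]$, which lies in $\vB$ (in fact in $L^2_\infty$) for $\xi\in C_\c^\infty(\R^{d-1})$, and invoke \eqref{eq:rep3} in the form $\vF^-(\lambda)\psi^{\lambda-}[\xi]=\tfrac{\i}{2\pi}\xi$. This yields $S(\lambda)\xi=-\i 2\pi\,\vF^+(\lambda)\psi^{\lambda-}[\xi]$. The second equality then follows by inserting the averaged expression \eqref{eq:extbF2} for $\vF^+(\lambda)\psi^{\lambda-}[\xi]$ and simplifying the phase factor via $-\i 2\pi\cdot \e^{\i\pi(d-2)/4}/\sqrt{2\pi}=-\sqrt{2\pi}\,\e^{\i\pi d/4}=-\sqrt{2\pi}/\e^{-\i\pi d/4}$.

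For the second formula \eqref{eq:formS2}, I would pair the just-established identity with $\xi'\in C_\c^\infty(\R^{d-1})$ in $\Sigma$ and move $\vF^+(\lambda)$ over to the other slot by duality:
\begin{align*}
\tfrac1{2\pi\i}\inp{\xi',S(\lambda)\xi}_\Sigma
=-\inp{\vF^+(\lambda)^*\xi',\,\psi^{\lambda-}[\xi]}_{\vB^*,\vB}.
\end{align*}
By \eqref{eq:exForm} together with \eqref{eq:rep1}, $\vF^+(\lambda)^*\xi'=\phi_{\rm ex}^{\lambda+}[\xi']=\phi^{\lambda+}[\xi']-R(\lambda-\i 0)\psi^{\lambda+}[\xi']$. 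Substituting and using the adjoint relation $R(\lambda-\i 0)^*=R(\lambda+\i 0)$ in the $\vB^*$--$\vB$ pairing gives exactly the right-hand side of \eqref{eq:formS2}.

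The only point requiring a moment's care is the legitimacy of the duality step: one needs $\psi^{\lambda\pm}[\eta]\in \vB$ and $\phi_{\rm ex}^{\lambda\pm}[\eta]\in \vB^*$ for smooth compactly supported $\eta$, both of which are already built into the construction of $\phi^{\lambda\pm}[\eta]$ and $\vF^\pm(\lambda)^*$ in Subsection \ref{Miminum generalized eigenfunctions}. The pairing $\inp{\xi',\vF^+(\lambda)\psi^{\lambda-}[\xi]}_\Sigma=\inp{\vF^+(\lambda)^*\xi',\psi^{\lambda-}[\xi]}_{\vB^*,\vB}$ is then simply the defining relation of the wave matrix $\vF^+(\lambda)^*$ as the adjoint of $\vF^+(\lambda)\colon\vB\to\Sigma$. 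I do not foresee any serious obstacle; all work has been done in Section \ref{subsec:Results from cite{AIIS2}}.
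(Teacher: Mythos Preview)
Your argument is correct and is precisely the route the paper intends: the corollary is stated without a separate proof, and the later analogues \eqref{eq:formS22i} and \eqref{eq:formS22i3} are derived in the text by exactly the combination you use, namely \eqref{eq:rep3} (or its variants) together with \eqref{eq:scattering_matrix} for the first identity, and \eqref{eq:exForm} with \eqref{eq:rep1} plus the adjoint relation $R(\lambda-\i 0)^*=R(\lambda+\i 0)$ for the second. Your constant check $-\i 2\pi\cdot \e^{\i\pi(d-2)/4}/\sqrt{2\pi}=-\sqrt{2\pi}\,\e^{\i\pi d/4}$ is also correct.
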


There are other representations like
\eqref{eq:formS2} of the scattering matrix, which we will state and examine in
this paper (see Sections \ref{subsubsec:Best results on the scattering
  matrix} and \ref{subsubsec:better Best results on the scattering
  matrix}).

We also note that our notation appears  consistent in the  case
$q=0$ in the following sense. Recall that $\vF_0(\lambda)$ is given by
\eqref{eq:free_eigFct} (with $x\to x+\lambda$).

\begin{corollary}\label{cor:scatt-consis}  For $q=0$
  \begin{align*}
    \vF_0(\lambda)=\vF^+(\lambda)=\vF^-(\lambda).
  \end{align*} In particular $S(\lambda)=I$ in this case.
\end{corollary}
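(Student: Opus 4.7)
\medskip

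\noindent\textbf{Proof proposal.} The strategy is to show that $\vF_0(\lambda)^*\xi$ satisfies the characterizing asymptotic expansion of Theorem \ref{thm:char-gener-eigenf-1}\ref{item:14.5.13.5.40} with $\xi_-=\xi_+=\xi$, for every $\xi\in C_{\mathrm c}^\infty(\R^{d-1})$. Once this is established, uniqueness in that theorem immediately gives $\vF_0(\lambda)^*\xi=\vF^-(\lambda)^*\xi=\vF^+(\lambda)^*\xi$ on $C_{\mathrm c}^\infty$ and $S(\lambda)\xi=\xi$, so by density (and continuity of the involved extensions) $\vF_0(\lambda)=\vF^+(\lambda)=\vF^-(\lambda)$ and $S(\lambda)=I$.

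Concretely, I fix $\xi\in C_{\mathrm c}^\infty(\R^{d-1})$ and view $\vF_0(\lambda)^*\xi$ as the double oscillatory integral \eqref{eq:free_eigFct} with $x$ replaced by $x+\lambda$, i.e.\ as $\phi_{\lambda,1}[\xi]$ in the notation of Subsection \ref{subsec:The stationary phase method}. Then I invoke the stationary phase result \eqref{eq:AsySTA} together with its Besov consequence \eqref{eq:basyp} (with $\tilde a\equiv 1$) to conclude that
\begin{equation*}
\vF_0(\lambda)^*\xi-\sum_{\pm}\tfrac{\e^{\mp\i\pi d/4}}{\sqrt{2\pi}}\,h^{d/2}\,\e^{\pm\i\theta_{\rm ex}(x+\lambda,y)}\,\xi(\pm hy)\,F(x>1)\in\vB_0^*,
\end{equation*}
where $h=(2x)^{-1/2}$. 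In particular $\vF_0(\lambda)^*\xi\in\vB^*$.

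The next step is to identify each of the two summands above with $\phi^{\lambda\pm}[\xi]$ modulo $\vB_0^*$. Because $\xi$ is compactly supported, the factor $\xi(\pm hy)$ is supported in $\{|y|\le C\sqrt{2x}\}$ for some $C$, and $\xi(\pm g)$ (appearing in \eqref{eq:gen1B}) is supported in $\{|y|\le C'f\}$. In both regimes, for $x\gg1$, one has $f^2=r+x\approx 2x$ and $f^2+g^2=2r\approx 2x$, so $g=y/f\approx hy$, $J^{1/2}=f^{(2-d)/2}(f^2+g^2)^{-1/2}\approx f^{-d/2}\approx h^{d/2}$, and by \eqref{eq:difEst}
\begin{equation*}
\theta_{\rm ex}(x+\lambda,y)-\theta^\lambda(f,g)=f^3\,\vO(|y/x|^4)+f\,\vO(|y/x|^2)+\vO(f^{-1}).
\end{equation*}
A routine Taylor expansion of the exponential, of $J^{1/2}$ in terms of $h^{d/2}$, and of $\xi(\pm g)$ around $\xi(\pm hy)$ — combined with the explicit pointwise decay coming from the remainders in \eqref{eq:comfs} and from multiplying by $F(x>1)-\chi^\perp(f<2)$ — shows that the difference
\begin{equation*}
\tfrac{\e^{\mp\i\pi d/4}}{\sqrt{2\pi}}\,h^{d/2}\e^{\pm\i\theta_{\rm ex}(x+\lambda,y)}\xi(\pm hy)F(x>1)-\phi^{\lambda\pm}[\xi]
\end{equation*}
lies in $\vB_0^*$. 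Adding up and using the preceding display yields
\begin{equation*}
\vF_0(\lambda)^*\xi-\phi^{\lambda+}[\xi]-\phi^{\lambda-}[\xi]\in\vB_0^*.
\end{equation*}

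Since $q=0$ means $H=H_0$ and $\vF_0(\lambda)^*\xi$ is a generalized eigenfunction at energy $\lambda$ (so lies in $\vE_\lambda$), Theorem \ref{thm:char-gener-eigenf-1}\ref{item:14.5.13.5.40} and the uniqueness of the triple $(\xi_-,\xi_+,\phi)$ force $\xi_-=\xi_+=\xi$ and $\phi=\vF_0(\lambda)^*\xi=\vF^\pm(\lambda)^*\xi$, whence $S(\lambda)\xi=\xi$ by \eqref{eq:aEigenfw}. Density of $C_{\mathrm c}^\infty(\R^{d-1})$ in $\Sigma$ and the boundedness of $S(\lambda)$ and $\vF^\pm(\lambda)$ extend these identities to all of $\Sigma$ (resp.\ to the full wave operator $\vF_0(\lambda)$ after taking adjoints).

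The main technical obstacle is the $\vB_0^*$-estimate on the discrepancy between the stationary-phase asymptotics and $\phi^{\lambda\pm}[\xi]$: one must show not merely pointwise convergence of the amplitudes and phases but averaged $L^2$-smallness in the Besov sense uniformly in the relevant parabolic region, and control the contribution from the transition region $|y|\sim\sqrt{2x}$. The explicit error estimates in \eqref{eq:comfs}--\eqref{eq:difEst}, whose powers of $f$ are exactly those required to beat the $\vB^*$-weight, make this bookkeeping go through.
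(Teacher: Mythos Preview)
Your argument is correct, and it differs from the paper's route in a small but genuine way. The paper also starts from $\phi_{\lambda,1}[\xi]=\vF_0(\lambda)^*\xi\in\vE_\lambda\cap\vB^*$ and invokes Theorem~\ref{thm:char-gener-eigenf-1} to write $\phi_{\lambda,1}[\xi]=\vF^+(\lambda)^*\check\xi$ for a unique $\check\xi$, but then identifies $\check\xi$ via the explicit integral formula \eqref{eq:aEigenfwB}: it computes $\gamma_\|^{\lambda-}\phi_{\lambda,1}[\xi]$ as a new oscillatory integral $\phi_{\lambda,\tilde a}[\xi]$ with an explicit symbol $\tilde a$, applies \eqref{eq:basyp} a second time to this symbol, and reads off $\check\xi=\xi$ from \eqref{eq:difEst} and \eqref{eq:aEigenfwB}. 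You instead bypass \eqref{eq:aEigenfwB} entirely, matching the stationary-phase leading terms of \eqref{eq:basyp} directly against the model functions $\phi^{\lambda\pm}[\xi]$ of \eqref{eq:gen1B} and invoking only the uniqueness part of Theorem~\ref{thm:char-gener-eigenf-1}\ref{item:14.5.13.5.40}. Your route is slightly more elementary (one stationary-phase application instead of two, and no computation of $\tilde a$); the price is the $\vB_0^*$ comparison you flag at the end, which is exactly the kind of bookkeeping the paper elsewhere cites as ``cf.\ the proof of \cite[Lemma~4.4]{AIIS2}'' (see the derivation of \eqref{eq:exacGen2}), so it is known to go through with the estimates \eqref{eq:comfs}--\eqref{eq:difEst}.
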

  \begin{proof}  We prove the first identity only. The proof of the
    identity $\vF_0(\lambda)=\vF^-(\lambda)$ is similar, so this
    suffices. In turn it suffices to show that $ \vF_0(\lambda)^*\xi=
    \vF^+(\lambda)^*\xi$ for any $\xi\in C_\c^\infty(\R^{d-1})$.
    We write 
    $\vF_0(\lambda)^*\xi=\phi_{\lambda,1}[\xi]$ in agreement with   \eqref{eq:fFunct}. By \eqref{eq:basyp}
    and Theorem
\ref{thm:char-gener-eigenf-1} \ref{item:14.5.13.5.40}--\ref{item:14.5.13.5.41}  we may write
$\phi_{\lambda,1}[\xi]=\vF^+(\lambda)^*\check{\xi}$ for a unique 
$\check{\xi}\in \Sigma$, which may be computed by
\eqref{eq:aEigenfwB}. For this end we first compute  
\begin{align*}
  &\tfrac1{2} \tfrac{\sqrt{2\pi}}{\e^{-\i\pi d/4}}J^{-1/2}
\parbb{f\sqrt{2r}}^{-1}\,\e^{-\i\theta^\lambda}\gamma_{\|}^{\lambda-}\phi_{\lambda,1}[\xi]=\tfrac{\sqrt{2\pi}}{\e^{-\i\pi d/4}}J^{-1/2}\,\e^{-\i\theta^\lambda}\phi_{\lambda,\tilde{a}}[\xi];\\
\tilde{a}&=\tfrac1{2} 
\parbb{f\sqrt{2r}}^{-1} \parbb{2r\nabla f\cdot\parb{(\eta,\zeta)+\nabla
           \theta^\lambda}-\tfrac {\i}2 \Delta \parb{2r\nabla f}}.
\end{align*} By using \eqref{eq:basyp} to  this $\tilde{ a}$, 
\eqref{eq:difEst}  and \eqref{eq:aEigenfwB} it
follows that  indeed $ \check{\xi}=\xi$.
\end{proof}

\section{Identificaton of wave
  operators and  scattering matrices} \label{subsec:Identificaton of wave operators}

The \emph{time-dependent wave
  operators} are given by 
\begin{align*}
  W^{\pm}:=\slim_{t\to \pm\infty}\e^{\i tH}\e^{-\i tH_0},
\end{align*} and the corresponding \emph{scattering
  operator} $S=(W^+)^* W^-$ commute with $H_0$, yielding the   representation 
\begin{align}\label{eq:scartTi}
  \vF_0S\vF_0^{-1}=\int _\R \oplus \,S(\lambda)\,\d \lambda.
\end{align} In this section we show that 
\begin{align}\label{eq:w10}
  \vF_0(W^\pm)^*=\vF^\pm,
\end{align}  which  implies  that the (almost everywhere defined)  operator
  $S(\lambda)$ in \eqref{eq:scartTi} is equal to  the (everywhere defined
  strongly continuous) scattering matrix
of Subsection 
\ref{subsec:Stationary wave
    operators and the scattering matrix}.

It follows from the Avron-Herbst formula 
\cite{AH} that for any $\varphi\in \vH$
\begin{align*}
  \parb{\e^{-\i t H_0}\varphi}(x,y)&= \e^{-\i\pi d/4} t^{-d/2} \e^{\i\{ -t^3/6 +tx +
  [(x-t^2/2)^2+y^2]/(2t)\}}
                                     \hat\varphi((x-t^2/2)/t,y/t)\\&\quad\quad\quad
                                                                     +o_{\vH}(\abs{t}^0)
  \text{ as }\abs{t}\to \infty;
\end{align*} here by definition $\norm{o_{\vH}(\abs{t}^0)}_{\vH}\to 0$
for $\abs{t}\to \infty$.

To identify the wave
  operators it would  be tempting, based on Section \ref{subsec:Results from
  cite{AIIS2}},  to try to compute directly  the 
$L^2$-asymptotics of  integrals of the form
\begin{align*}
  \int \e^{-\i t \lambda}J^{1/2}\e^{  \pm \i
  \theta^\lambda}\xi(\pm y/f)h(\lambda)\,\d \lambda,
\end{align*}  where $\xi\in C_\c^\infty (\R^{d-1})$ and  $h\in
C_\c^\infty (\R)$ and compare with the right-hand side of the above
formula,  cf. \cite{ IS3}. However such  computation does not seem doable.
 We proceed differently introducing, cf. \eqref{eq:fFunct} and \eqref{eq:1symb},
\begin{align}\label{eq:free_appr_eigFct}
  \begin{split}
\parb{ F^{\pm}_{0}(\lambda)^* \xi}(x,y)&=c\int \d \zeta \,\xi(\zeta)\int \e^{\i \theta_\lambda}\chi_\pm(\eta)\,\d \eta;\\
c=(2\pi)^{-\tfrac{d+1}2}&,\quad \theta_\lambda= y \cdot\zeta-\eta^3/6+(x+\lambda-\zeta^2/2)\eta,  
  \end{split}
\end{align} in terms of  the partition $\chi_++\chi_-=1$,
$\chi_+=\chi(\cdot>1)$ and $\chi_-=\chi^\perp(\cdot>1)$. 
 Using the formula
 \begin{align}\label{eq:dereta}
   (H_0-\lambda)\e^{\i \theta_\lambda}=-(\partial_\eta \theta_\lambda)\e^{\i
   \theta_\lambda}=\i \partial_\eta \e^{\i
   \theta_\lambda},
 \end{align} we can integrate by parts and deduce  that 
 \begin{align*}
    g(\lambda):=(H_0-\lambda)  F^\pm_0(\lambda)^* \xi=-\i c\int \d \zeta \,\xi(\zeta)\int \e^{\i \theta_\lambda}\chi'_\pm(\eta)\,\d \eta.
 \end{align*} By  the method of non-stationary phase,
 cf.   the first
  part of Subsection \ref{subsec:The stationary phase method},  this integral
has polynomial  decay  as $r=\abs{(x,y)}\to \infty$. In particular $g$
is an $L^2_1$-valued function, in fact (by the same argument)
\begin{align}\label{eq:Ceen}
  g\text{ is  }C^1 \text{ as an  } L^2_1\text{-valued function}.
\end{align}

Next by using  \eqref{eq:basyp}, \eqref{eq:difEst} and  Theorem
\ref{thm:char-gener-eigenf-1} \ref{item:14.5.13.5.40}  we deduce (cf. the proof
of  
\cite[Lemma 4.4]{AIIS2})
that the  generalized eigenfunctions in
the
formula \eqref{eq:rep1}
are given  by 
\begin{align}\label{eq:exacGen2}
\phi_{\rm{ex}}^{\lambda\pm}[\xi]  = F^\pm_0(\lambda)^* \xi- R(\lambda\mp\i 0)(H-\lambda) F^\pm_0(\lambda)^* \xi. 
\end{align}

 It is  easy to analyze the integral (focusing below on the case $t\to +\infty$)
\begin{align*}
  I^+(t):=\int \e^{-\i t \lambda}\parb{ F^+_0(\lambda)^*\xi}
  h(\lambda)\,\d \lambda;\quad t>0,\,\xi\in C_\c^\infty (\R^{d-1}),\,h\in
C_\c^\infty (\R).
\end{align*} In fact by the  method of non-stationary phase
\begin{align}\label{eq:Iint}
  I^+(t)+o_{\vH}({t}^0)= \int \e^{-\i t \lambda}\parb{\vF_0(\lambda)^*\xi}
  h(\lambda)\,\d \lambda={\e^{-\i t H_0}\varphi},  
\end{align} where  $\varphi\in \vH$ is fixed by $\vF_0\varphi=h\otimes \xi\in \tilde{\vH}$.

 In the paper \cite {II} completeness for Schr\"odinger
operators is considered/proven  from
the stationary point of view. In our setting one would look at the 
exact solution to the Schr\"odinger equation
\begin{align}\label{eq:exactSol}
   \int \e^{-\i t \lambda}\parb{\vF^+(\lambda)^* \xi}\,h(\lambda)\,\d \lambda=\int \e^{-\i t \lambda}\phi_{\rm{ex}}^{\lambda+}[\xi]h(\lambda)\,\d \lambda.
\end{align} Using \eqref{eq:Ceen}, \eqref{eq:exacGen2}, 
\eqref{eq:Iint}  and \cite [Lemma 5.1]{II} we
obtain that the wave packet \eqref{eq:exactSol} is of the form
\begin{align*}
  \e^{-\i t H_0}\varphi+o_{\vH}({t}^0)=\e^{-\i t H}W^+\varphi+o_{\vH}({t}^0)
\end{align*}
 with the above $\varphi\in \vH$. Note at this point that the condition $\int
 _0^\infty \norm{\widehat{hg}(s)}_{\vH}\,\d s<\infty$ of \cite [Lemma
 5.1]{II} is fulfilled due to \caS, the Plancherel theorem  and \eqref{eq:Ceen}. Since $\e^{-\i t H}W^+\varphi$ is
 also an exact solution to the Schr\"odinger equation it follows from
 the unitarity property of the Schr\"odinger propagator that
 $\vF^{+*}\parb{h\otimes \xi}=W^+\varphi$. Consequently (by density)
$\vF^{+*}\vF_0=W^+$, which is the `plus case' of \eqref{eq:w10}. The
`minus  case' of \eqref{eq:w10} can be derived similarly.

\begin{comment}
Given the above identification of wave packets we can show that the
scattering matrix for the wave operators
\begin{align*}
  W^{\pm}:=\slim_{t\to \pm\infty}\e^{\i tH}\e^{-\i tH_0}
\end{align*} agrees with the construction $S(\lambda)$ of
\eqref{eq:scattering_matrix}. To see this we  write the integral
\eqref{eq:exactSol} as
$\e^{-\i tH}\varphi'$ with 
$\varphi'=W^+\varphi$. On the other hand we can compute 
\begin{align*}
  \varphi'=\lim_{\epsilon \to 0_+} \epsilon\int_0^\infty\,\e^{-\epsilon t}\e^{\i
    tH}I^+(t)\,\d t=\lim_{\epsilon \to 0_+}
  \epsilon\int_0^\infty\,\int \cdots\,\d \lambda\d t
\end{align*} by first integrating with respect to $t$ and  
then using the 
stationary phase method (cf. Section \ref{sec:Airy function asymptotics}) and Proposition
\ref{prop:phase-space-radiation}, see   for example \cite{ IS3}. The
resulting formula is
\begin{subequations}
\begin{align}\label{eq:w1}
  \vF_0(W^+)^*=\vF^+.
\end{align} Similarly we derive 
\begin{align}\label{eq:w2}
  \vF_0(W^-)^*=\vF^-.
\end{align}  
\end{subequations}

 In particular  the \emph{scattering
  operator} $S=(W^+)^* W^-$ is represented as 
\begin{align*}
  \vF_0S\vF_0^{-1}=\vF^+\parb{\vF^-}^{-1}=\int _\R \oplus S(\lambda)\,\d \lambda.
\end{align*} 
\end{comment}

We  learn from \eqref{eq:w10} that there exist
the
\emph{asymptotic orthogonal momenta}
\begin{align*}
  p_y^{\pm} =\lim_{t\to \pm \infty} \e^{\i tH}p_y \e^{-\i tH}=\parb{\vF^\pm}^*
  \parbb{\int _\R \oplus M_{\zeta}\,\d \lambda}\vF^\pm;
\end{align*} here the limit is taken in the strong resolvent sense and
$ M_{\zeta}$ denotes multiplication by (the components of) $\zeta$ on $\Sigma=L^2\parb{\R^{d-1}_\zeta}$.
 Whence formally the (Schwartz) kernel 
 $ S(\lambda)(\zeta,\zeta')$ of the  scattering matrix is defined in
 terms of incoming and outgoing asymptotic orthogonal momenta, cf. the
 discussion in Section \ref{sec:Introduction}.

\section{Resolvent bounds} \label{subsubsec:Resolvent bounds}

We recall the following elementary result  from \cite{ AIIS1}.
\begin{align}\label{eq:neg2}
   \forall\alpha\in\N_0^d\text{ with }\abs{\alpha}\leq 1\,\forall k\in \N\,\forall h\in C^\infty_\c(\R):\, p^\alpha\abs{x}^{k/2}\chi(x<0)h(H)\in \vL(\vH).
  \end{align}

Let $A_m=\Re\parb{ \nabla
   f_m\cdot p}$ where $f_m(x,y)=
\sqrt{\breve f(2x+2\inp{y}_m)}$ with $\breve f$ given as in
\eqref{eq:par1} and with   $m\in \N$; recall the notation   $\inp{y}_m=(m^2+\abs{y}^2)^{1/2}$. Let    $\tilde A_{m}=\tfrac 12\Re\parb{ \nabla
  f_m^2\cdot p}=f_m^{1/2}A_mf_m^{1/2}$.

We compute
\begin{align}\label{eq:mourr}
  \i [H,2\tilde A_{m}]= p\cdot\parb{\nabla^2f_m^2}p+ \partial_xf_m^2-\parb{\nabla  f_m^2}\cdot
  \nabla q
  -\tfrac14\Delta^2 f_m^2,
\end{align}
which 
leads to  
\begin{align*}
  \i [H,2\tilde A_{m}]& \geq 2-\parb{\nabla  f_m^2}\cdot
  \nabla q
  -\tfrac14\parb{\Delta^2 f_m^2}-C_1F(2x+2\inp{y}_m\leq 2)\\
&\geq  2-C_2 \tfrac 1 {\inp{r}}
  -C_3m^{-3}+C_4\tfrac {x-1}m F(x-1+m\leq0).
\end{align*}  
  In combination with  \eqref{eq:neg2}  we conclude that for
  any given energy the  Mourre estimate (see \cite{Mo}) holds for $\tilde A_{m}$
  with a constant as close
to $1$ as  wished provided we take $m$ large enough. 

For convenience we abbreviate $A_{m}=A$ and $
f_m=f$, noting  that this $f$  is different from \eqref{eq:par1} used previously.
Now  the following estimates hold locally uniformly in
$\lambda\in\R$, cf. the multiple commutator methods of \cite{GIS,   AIIS3}. The parameter $m$ may
 depend on $\lambda$, however  locally it can be taken independently
 of $\lambda$,  and it may depend on the 
 parameters $t,t'$  appearing in the estimates   (however the
 dependence is only on $\kappa\in (0,1)$ provided  $t,t'\in
 [\kappa-1,1-\kappa]$; in our
 application we consider fixed parameters {only).}
\begin{subequations}
\begin{align}\label{eq:LAPi}
  f^{-s}R(\lambda\pm \i 0)f^{-s}\in
    \vL(\vH);\quad s>1/2.
\end{align}
\begin{align}\label{eq:twosidedai}
  \begin{split}
    \chi(\pm A<t)f^sR(\lambda\pm \i 0)&f^{-1-s}\in
    \vL(\vH);\\&\quad s>-1/2,\,t<1.
\end{split}
\end{align}
\begin{align}\label{eq:twosidedbi}
\begin{split}
\chi(\pm A<t)f^sR(\lambda\pm \i 0)f^{s}\chi(\pm A>t')&\in
    \vL(\vH);\\&\quad s\in\R,\,-1<t<t'<1.
  \end{split}
\end{align}
\end{subequations}
\begin{align}\label{eq:LAP2i}
  \forall k\in \N:\quad f^{-s}R(\lambda\pm \i 0)^kf^{-s}\in
    \vL(\vH);\quad s>k-1/2.
\end{align} The {last}  estimate is a consequence of
\eqref{eq:LAPi}--\eqref{eq:twosidedbi} and an   algebraic argument
(cf. \cite{ Is,Je}), in
fact there are `microlocal bounds' in the spirit of
\eqref{eq:LAPi}--\eqref{eq:twosidedbi} for powers also ({deducible from} 
the same argument). Such  estimates would be  useful for obtaining  regularity
of the $S$-matrix in  the spectral parameter, however  this topic will not
be studied in the  paper.

\section{Classical mechanics bounds and transport equations} \label{subsubsec:Classical
  mechanics bounds}
We may associate  to the operator $A=A_{m}$ of the previous section the  `symbol'
\begin{align*}
  a=a_m&=\tfrac{\eta+\hat y_m\cdot\zeta}{f_m} ;\\
  \inp{y}_m&=(m^2+\abs{y}^2)^{1/2},\,\,\hat y_m=y/\inp{y}_m,\quad f_m(x,y)=
\sqrt{\breve f(2x+2\inp{y}_m)}.
\end{align*}
 Here $m$ is a fixed large positive
  integer, and  by definition  $f_m=\sqrt{2x+2\inp{y}_m}$ for
  $x+\inp{y}_m>1$. Let $ \breve a=\tfrac{\eta+\hat
    y_m\cdot\zeta}{\sqrt{2x+2\inp{y}_m}}$ for $x+\inp{y}_m>0$.
 
We consider for any such $m$ and  for any $\varepsilon\in(0,1)$
 \begin{align*}
   \vX^\pm_\varepsilon=\vX^\pm_{m,\varepsilon}:=\set[\big]{x+\inp{y}_m>0,\quad \pm \breve a>-\varepsilon}.
 \end{align*}
 \begin{lemma}\label{lemma:free-class-mech} The sets
   $\vX^+_\varepsilon$ and $\vX^-_\varepsilon$ are  
 preserved by the free classical forward and
 backward flow $\Theta$ given  by \eqref{eq:freeClas} with  $t\geq 0$
 and $t\leq 0$, respectively.
   \end{lemma}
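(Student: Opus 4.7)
The strategy is to follow the evolution along the flow \eqref{eq:freeClas} of the two defining functions of $\vX^+_\varepsilon$, namely $D:=\sqrt{2x+2\inp{y}_m}$ and $\breve a=N/D$ with $N:=\eta+\hat y_m\cdot\zeta$. Using $\dot\eta=1$, $\dot\zeta=0$, $\dot y=\zeta$ together with $\tfrac{\mathrm d}{\mathrm dt}\hat y_m=\zeta/\inp{y}_m-y(y\cdot\zeta)/\inp{y}_m^3$, the Cauchy--Schwarz inequality yields
\begin{align*}
\dot N=1+\frac{|\zeta|^2\inp{y}_m^2-(y\cdot\zeta)^2}{\inp{y}_m^3}\ge 1.
\end{align*}
Since $\tfrac{\mathrm d}{\mathrm dt}D^2=2\eta+2\hat y_m\cdot\zeta=2N$, we have $\dot D=\breve a$ wherever $D>0$, and then
\begin{align*}
\dot{\breve a}=\frac{\dot N-\breve a^{\,2}}{D}.
\end{align*}
In particular $D^2$ is strictly convex along the flow, with $(D^2)''=2\dot N\ge 2$.

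The first task would be to keep the orbit in $\{D>0\}$, so that $\breve a(t)$ is well defined along the flow. Taylor expanding $D^2$ at $t=0$ and using $(D^2)''\ge 2$ yields, for every $t\in\R$,
\begin{align*}
D^2(t)\ge t^2+2tN(0)+D^2(0)=\bigl(t+N(0)\bigr)^2+D^2(0)\bigl(1-\breve a(0)^2\bigr).
\end{align*}
I would then split $\vX^+_\varepsilon$ into two subcases. If $\breve a(0)\ge 0$, then $N(0)\ge 0$ and, by monotonicity of $N$ coming from $\dot N\ge 1$, $N(t)\ge N(0)\ge 0$ for all $t\ge 0$; hence $(D^2)'\ge 0$ on $[0,\infty)$, so $D^2(t)\ge D^2(0)>0$. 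If $-\varepsilon<\breve a(0)<0$, then $\breve a(0)^2<\varepsilon^2<1$, and the displayed quadratic lower bound directly gives $D^2(t)\ge D^2(0)(1-\varepsilon^2)>0$ for all $t$. Either way $D(t)>0$ for every $t\ge 0$.

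Preservation of $\breve a>-\varepsilon$ then follows by a standard barrier argument. Assuming for contradiction that $t_1:=\inf\{t>0:\breve a(t)\le -\varepsilon\}$ is finite, continuity gives $\breve a(t_1)=-\varepsilon$, and since $\varepsilon<1$,
\begin{align*}
\dot{\breve a}(t_1)=\frac{\dot N(t_1)-\varepsilon^2}{D(t_1)}\ge\frac{1-\varepsilon^2}{D(t_1)}>0,
\end{align*}
so $\breve a$ is strictly increasing across $t_1$, producing values $\breve a(t_1-\delta)<-\varepsilon$ for small $\delta>0$ and contradicting the definition of $t_1$. The $-$ case is handled identically for $t\le 0$: the same convexity-plus-monotonicity case split applies with subcases $\breve a(0)\le 0$ (where $N(t)\le N(0)\le 0$ for $t\le 0$ makes $D^2$ non-decreasing in backward time) and $0<\breve a(0)<\varepsilon$ (where $\breve a(0)^2<1$ makes the quadratic bound work); while a hypothetical $t_1:=\sup\{t<0:\breve a(t)\ge\varepsilon\}$ again satisfies $\dot{\breve a}(t_1)>0$, contradicting $\breve a<\varepsilon$ on $(t_1,0]$.

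The only nonroutine point is the case split at Step~1 when $|\breve a(0)|\ge 1$, where the quadratic lower bound on $D^2$ is vacuous and one must fall back on the monotonicity of $N$; this is the step where the specific Stark identity $\dot\eta=1$ (hence $\dot N\ge 1$) really enters. Everything else reduces to the single differential relation $\dot{\breve a}=(\dot N-\breve a^{\,2})/D$ combined with $\dot N\ge 1$ and $\varepsilon<1$.
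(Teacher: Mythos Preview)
Your proof is correct and follows essentially the same route as the paper: both arguments first secure $D^2(t)=2x(t)+2\inp{y(t)}_m>0$ along the flow, and then use the differential inequality $\dot{\breve a}\ge (1-\breve a^2)/D$ (your $\dot{\breve a}=(\dot N-\breve a^2)/D$ with $\dot N\ge 1$) together with $\varepsilon<1$ to trap $\breve a$ above $-\varepsilon$. The only notable difference is in the first step: the paper computes $D^2(t)$ explicitly from the flow formula and extracts the uniform quantitative bound $D^2(t)\ge (1-\varepsilon)\bigl(t^2+D^2(0)\bigr)$ (this is \eqref{eq:BEST}, which is reused later in the paper), whereas you reach positivity of $D^2$ via the convexity $(D^2)''=2\dot N\ge 2$ and a sign-of-$\breve a(0)$ case split. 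Your argument is perfectly adequate for the lemma as stated; just be aware that the paper's version also records the growth estimate \eqref{eq:BEST}, which your case split does not directly yield.
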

   \begin{proof}
   We estimate as follows on
 $\vX^\pm_\varepsilon$ for $\pm t\geq 0$,
  \begin{align}\label{eq:BEST}
    \begin{split}
     2x(t)+2\inp{y(t)}_m&=t^2 +2t(\eta+\hat y_m\cdot
     \zeta)+(2x+2\inp{y}_m)\\&\quad +2\parbb{\sqrt{2ty\cdot \zeta+\inp{y}_m^2+(t\zeta)^2}-t\hat y_m\cdot
     \zeta-\inp{y}_m}\\
&\geq t^2 -2\abs{t}\varepsilon \sqrt{2x+2\inp{y}_m}+(2x+2\inp{y}_m)\\
&\geq (1-\varepsilon)\parb{t^2+2x+2\inp{y}_m}. 
    \end{split}
  \end{align} In particular the left-hand side stays positive. The
 {`symbol'}  $ \breve a$ is well-defined  on
  $\vX^+_\varepsilon\cup \vX^+_\varepsilon$, and 
by a   free classical Mourre estimate, cf. the calculation \eqref{eq:mourr},
  \begin{align*}
    \tfrac{\d}{\d t}
     \breve a(t)\geq \parb{1- \breve a(t)^2}/\sqrt{2x(t)+2\inp{y(t)}_m}\quad \text{on
    }\vX^\pm_\varepsilon \text{  for 
    }\pm t\geq0.
  \end{align*} In particular, since  $\pm  \breve a(0)>-\varepsilon$ on $\vX^\pm_\varepsilon $, 
  also $\pm \breve a(t)>-\varepsilon$ on $\vX^\pm_\varepsilon $ for  $\pm t\geq 0$.  
   \end{proof}

   \begin{comment}
     We now consider transport equations under 
 the absence of  the singular part $q_2$ in 
 Condition \ref{cond:one-body-starkPot}, i.e.   in the
 following constructions the potential $q=q_1$ (not to be mixed up
 with the notation $q_1$ introduced below).
   \end{comment}

For any  $n\in \N$ we construct $a^\pm_n=\sum_0^n
b^\pm_k$  as follows (omitting superscripts). Let $b_0=1$ and $q_0=q$.
Suppose  that  $b_k$ and $q_k$  are
 constructed for a given $ k\in\set{0,\dots, n-1}$,  then these quantities with $k$ replaced by $k+1$  are given by
\begin{align*}
  b_{k+1}&= \i\int _0^{\pm \infty} q_k(\Theta(t))\,\d t,\\
{q}_{k+1}&= qb_{k+1}-\tfrac 12 (\Delta_{(x,y)} b_{k+1}).
\end{align*} These  functions  
solve    transport equations, more precisely 
\begin{align}\label{eq:solut}
  \i\parb{\partial_\eta
+(\eta,\zeta) \cdot
\nabla_{(x,y)}}b_{k+1}=q_k=qb_{k}-\tfrac 12 (\Delta_{(x,y)} b_{k}).
\end{align}

 For the sake of justification  of the above recursion scheme we note the elementary computation
  \begin{align}
    \label{eq:elebnd}
    \begin{split}
    \forall f>0:
\quad \int _0^{ \infty} (t^2+f^2)^{-s_1} t^{s_2} &\,\d t=C_{s_1,s_2}
    \,f^{s_2+1-2s_1};\\&  s_2+1-2s_1<0,\quad-1<s_2.    
    \end{split}
\end{align}

 It   follows  from Lemma \ref{lemma:free-class-mech}, \eqref{eq:BEST}, \eqref{eq:elebnd}, the Fa\`a di Bruno formula  and  induction
 that for any $0\leq k\leq n$  and any   $\varepsilon\in(0,1)$
\begin{align}\label{eq:derBndq22i}
  \begin{split}
    &\big|\partial_{\eta,\zeta}^\alpha\,\partial_{x,y}^\beta b^\pm_k\big|\leq C_{\alpha,\beta}\parb{1+x+\inp{y}_m}^{-(k\delta +\abs{\alpha}/2+\abs{\beta})},\\
&\big| \partial_{\eta,\zeta}^\alpha\,\partial_{x,y}^\beta q^\pm_k\big|\leq C_{\alpha,\beta}\parb{1+x+\inp{y}_m}^{-(1/2+ (k+1)\delta+\abs{\alpha}/2+\abs{\beta})};\quad (x,y;\eta,\zeta)\in \vX^\pm_\varepsilon.   
  \end{split}
\end{align}
  In particular, 
\begin{align*}
\begin{split}
     \big|\partial_{\eta,\zeta}^\alpha\,\partial_{x,y}^\beta a^\pm_n\big|&\leq C_{\alpha,\beta}\parb{1+x+\inp{y}_m}^{-(\abs{\alpha}/2+\abs{\beta})},\\\big|\partial_{\eta,\zeta}^\alpha\,\partial_{x,y}^\beta
     q^\pm_n\big|&\leq C_{\alpha,\beta}\parb{1+x+\inp{y}_m}^{-(1/2+
         (n+1)\delta+\abs{\alpha}/2+\abs{\beta})}
;\quad (x,y;\eta,\zeta)\in \vX^\pm_\varepsilon.  
\end{split}
\end{align*} 
 Although we could work with $a^\pm_n$ for a fixed large $n$ it is
convenient to repeat the construction of the $b_k$'s without limit and
then 
invoke the Borel construction to regularize the sum $\sum_0^\infty
b^\pm_k$. 

Whence we
introduce for any $\varepsilon\in(0,1)$ the symbol
\begin{align}\label{eq:Bsym}
  a_B^\pm=\chi^\pm_\varepsilon \sum_0^\infty  \chi_k
b^\pm_k,\quad \chi^\pm_\varepsilon=\chi(\pm a>-\varepsilon),\quad \chi_k=\chi(f>C_k),
\end{align} for a suitable sequence $\sqrt 2 <C_0<C_1<\dots \nearrow
\infty$. Here and henceforth  we use  the abbreviated notation $
f=f_m$ and 
$a=a_m=\tfrac{\eta+\hat y_m\cdot\zeta}{f}$. As noted before $a=\breve
a$  for $f> \sqrt 2$. The relevant choice of $m$  depends on a
bounding constant of the 
energy $\lambda$, cf. a discussion in Section \ref{subsubsec:Resolvent bounds}, but
for convenience we prefer to suppress the  {dependence on} $m$
in our
notation. The construction of such 
sequence $(C_k)$  is standard (see  for example  the proof of \cite[Theorem
1.2.6]{Ho}), we provide the
details for our setting in  Appendix \ref{AppendixB}. Due to the fact that \eqref{eq:derBndq22i} are uniform
bounds it is not important for the construction that the
variable $\zeta$ is considered as  a bounded variable. However some   derivatives of the factor $\chi^\pm_\varepsilon
$ in \eqref{eq:Bsym} are only bounded locally  {uniformly} in  $\zeta$. 
Thus the notation $\vO(\cdot)$ below refers to a symbol 
obeying the
indicated 
  bound, however  this only  being locally  uniform in  $\zeta$. In
  addition we use the notation $\vO\parb{f^{-\infty}}$ to mean a
  smooth function (a  symbol) with all derivatives being of the form 
  $\vO\parb{f^{-k}}$ locally  uniform in  $\zeta$ (and uniform in the
  other variables) for any $k\in\N$.
In conclusion, thanks to \eqref{eq:solut},   there exists a suitable
sequence $(C_k)$  such that  for 
arbitrarily localized  $\zeta$  (and  for any fixed $m$)
\begin{align}\label{eq:Borel}
  \begin{split}
  \partial_{\eta,\zeta}^\alpha\,\partial_{x}^{\beta}\partial_{y}^{\gamma}
  a^\pm_B&=\vO\parbb{f^{-(\abs{\alpha}+2\abs{\beta})}\min\parb{f^2,\inp{y}_m}^{-\abs{\gamma}}},\\
  \i\parb{\partial_\eta
+(\eta,\zeta) \cdot
\nabla_{(x,y)}}a_B^\pm &=qa_B^\pm-\tfrac 12 (\Delta_{(x,y)}
  a_B^\pm)+\sum_0^\infty  r^\pm_k+\vO\parb{f^{-\infty}};
\\ r^\pm_k&=  \i {b}^\pm_k\, (\chi_k\partial_\eta\chi^\pm_\varepsilon
+(\eta,\zeta) \cdot \nabla_{(x,y)}(\chi_k\chi^\pm_\varepsilon ))\\& 
\quad \quad +(\nabla_{(x,y)}{b}^\pm_k) \cdot \nabla_{(x,y)}(\chi_k\chi^\pm_\varepsilon)+\tfrac{{b}^\pm_k}2\,\Delta_{(x,y)}(\chi_k\chi^\pm_\varepsilon).
  \end{split}
\end{align} These bounds in combination with
\eqref{eq:dereta} will  play a basic role in  the following Sections~\ref{subsubsec:Best results on the scattering
  matrix} and \ref{subsubsec:better Best results on the scattering
  matrix}.

\section{Analysis of  the scattering
  matrix} \label{subsubsec:Best results on the scattering
  matrix}
We
consider in this section double integrals of the form
\begin{align*}
  c\int \d
   \zeta \,\xi(\zeta)\int \e^{\i \theta_\lambda}\,\tilde a\,\d \eta;\\
c=(2\pi)^{-\tfrac{d+1}2},\quad \theta_\lambda&= y \cdot\zeta-\eta^3/6+(x+\lambda-\zeta^2/2)\eta.
\end{align*} Such  integrals were studied in Subsection
\ref{subsec:The stationary phase method} with symbols
$\tilde{a}=\tilde{a}(x,y;\eta,\zeta)$ obeying \eqref{eq:1symb} (see also the examples \eqref{eq:free_eigFct} and \eqref{eq:free_appr_eigFct}). The function $\xi$
can in some cases be considered as any  compactly supported distribution,
but of course the double integral has nicest properties for $\xi\in
C_\c^\infty(\R^{d-1})$ (as in Subsection
\ref{subsec:The stationary phase method}).

\begin{subequations}
Let $m\in\N$  (it is  considered as a large fixed 
auxillary parameter),  let $\varepsilon\in(0,1/2)$  (conveniently
taken small)
and let $a_B^\pm$  be the associated symbol  given by \eqref{eq:Bsym}.  We introduce  then the explicit example 
\begin{align}\label{eq:funFor0i}
\begin{split}
   F^\pm_{m,\varepsilon}(\lambda)^* \xi:=c\int \d
   \zeta \,\xi(\zeta)\int \e^{\i \theta_\lambda}a^\pm_B\,\d \eta.
\end{split}
\end{align}  

We calculate using \eqref{eq:dereta},
\eqref{eq:Borel}  and an
integration by parts 
\begin{align}\label{eq:funFori}
  \begin{split}
   (H-\lambda) F^\pm_{m,\varepsilon}(\lambda)^* \xi&=-c\int \d
  \zeta \,\xi(\zeta)\int \e^{\i
    \theta_\lambda}\parbb{ \vO\parb{f^{-\infty}} +\sum_0^\infty
    r^\pm_k }\,\d \eta; \\
  r^\pm_k&=  \i {b}^\pm_k\, \parb{\chi_k\partial_\eta\chi^\pm_\varepsilon
+(\eta,\zeta) \cdot \nabla_{(x,y)}(\chi_k\chi^\pm_\varepsilon)}\\& 
\quad \quad +(\nabla_{(x,y)}{b}^\pm_k) \cdot \nabla_{(x,y)}(\chi_k\chi^\pm_\varepsilon)+\tfrac{{b}^\pm_k}2\,\Delta_{(x,y)}(\chi_k\chi^\pm_\varepsilon).
  \end{split}
\end{align} We can  use \eqref{eq:derBndq22i} to estimate the
quantities of 
\eqref{eq:funFor0i} and \eqref{eq:funFori}, that is
\begin{align*}
   \phi_{m,\varepsilon}^{\lambda\pm}[\xi]:=
   F^\pm_{m,\varepsilon}(\lambda)^* \xi\quad \mand \quad
\psi_{m,\varepsilon}^{\lambda\pm}[\xi] :=(H-\lambda)\phi_{m,\varepsilon}^{\lambda\pm}[\xi].
 \end{align*} We 
 obtain  for the corresponding symbols, say denoted by $\tilde{a}_1$
 and   $\tilde{a}_2$ respectively, the   bounds
 \begin{align}\label{eq:modbnd}
   \begin{split}
   \abs{\partial_{\eta,\zeta}^\alpha&\,\partial_{x}^{\beta}\partial_{y}^{\gamma}
   \tilde{a}_1}\leq
   C_1f^{-(\abs{\alpha}+\abs{2\beta})}\min\parb{f^2,\inp{y}}^{-\abs{\gamma}};\quad
   C_1=C_1(\zeta),\\
   \abs{\partial_{\eta,\zeta}^\alpha\,\partial_{x,y}^{\beta}
   &\tilde{a}_2}\leq
   C_2\parbb{1+\tfrac{\abs{\eta}}f}\min\parbb{f,\sqrt{\inp{y}}}^{-(\abs{\alpha}+\abs{2\beta}+1)};\,\,
   C_2=C_2(\zeta).
  \end{split}
\end{align}  
 Here and henceforth  we use  the function $
f=f_m$ of Sections  \ref{subsubsec:Resolvent bounds} and \ref{subsubsec:Classical
  mechanics bounds},  and we use, slightly
abusively, the  notation $\inp{y}$ for $\inp{y}_m$. Note that the constants depend on $\zeta$, although if
$\zeta\in B_R=\set{\abs{\zeta}<R}$   for any given
 $R>0$, then the dependence is via $R$ only. Also there is a
 dependence on the multiindices, however  for convenience not
 indicated. Note that the first bound of \eqref{eq:modbnd} corresponds
 to the first assertion of \eqref{eq:Borel}. The second  bound of
 \eqref{eq:modbnd} follows readily from an 
 examination of the expressions $ r^\pm_k$ and the concrete
 construction given in Appendix \ref{AppendixB}.

\end{subequations}

We claim  the following formula for the generalized eigenfunction of
\eqref{eq:rep1}
\begin{subequations}
  \begin{align}\label{eq:rep12i}
    \begin{split}
    \phi_{\rm{ex}}^{\lambda \pm}[\xi]= \phi_{m,\varepsilon}^{\lambda \pm}[\xi]-R(\lambda\mp\i 0)\psi_{m,\varepsilon}^{\lambda \pm}[\xi];\quad\xi\in
    C_\c^\infty(\R^{d-1}),  
    \end{split}
  \end{align} and the related formulas  
\begin{align}\label{eq:0Sommer2i}
    0= \phi_{m,\varepsilon}^{\lambda \pm}[\xi]-R(\lambda\pm\i 0)\psi_{m,\varepsilon}^{\lambda \pm}[\xi],
  \end{align} leading to  
  \begin{align}\label{eq:rep32i}
    \xi= \pm\i 2\pi  \vF^\pm(\lambda)\psi_{m,\varepsilon}^{\lambda \pm}[\xi];\quad \xi\in
    C_\c^\infty(\R^{d-1}). 
  \end{align}  
\end{subequations}

The analogue of \eqref{eq:formS2} reads,
thanks to \eqref{eq:exForm}, \eqref{eq:rep12i} and \eqref{eq:rep32i}, 
\begin{align}\label{eq:formS22i}
  \begin{split}
    \tfrac 1{2\pi\i}\inp{{\xi},S(\lambda)\xi'}=\inp{\psi_{m,\varepsilon}^{\lambda+}[{\xi}],R(\lambda+\i
      0)\psi_{m,\varepsilon}^{\lambda-}[\xi']}-\inp{\phi_{m,\varepsilon}^{\lambda+}[{\xi}],\psi_{m,\varepsilon}^{\lambda-}[\xi']}.
\end{split}
\end{align} 

Note that the leading order asymptotics of
$\phi_{m,\varepsilon}^{\lambda \pm}[{\xi}]$ for $\xi\in
    C_\c^\infty(\R^{d-1})$ follows from
\eqref{eq:basyp}. The expansion  terms from
stationary
phase analysis, cf. Appendix \ref{Appendix},   all  vanish for
$\psi_{m,\varepsilon}^{\lambda \pm}[{\xi}]$. The following result is a
manifestation of this fact.

\begin{lemma}
  \label{lem:eigFori} For all $\xi\in C_\c^\infty(\R^{d-1})$ the
  functions $\psi_{m,\varepsilon}^{\lambda \pm}[{\xi}]\in L^2_\infty$, and
  the formulas \eqref{eq:rep12i} and \eqref{eq:0Sommer2i} are valid.
\end{lemma}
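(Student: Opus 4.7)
The plan is to establish the three assertions in sequence. First I would show $\psi_{m,\varepsilon}^{\lambda\pm}[\xi]\in L^2_\infty$ by a non-stationary phase argument, then use stationary phase on $\phi_{m,\varepsilon}^{\lambda\pm}[\xi]$ to identify its WKB asymptotics, and finally invoke Sommerfeld's uniqueness result from \cite{AIIS1} together with Theorem \ref{thm:char-gener-eigenf-1} to obtain \eqref{eq:0Sommer2i} and \eqref{eq:rep12i}.

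For the $L^2_\infty$ claim I would inspect the symbol $\tilde a_2$ appearing in \eqref{eq:funFori}. It consists of a $\vO(f^{-\infty})$ remainder from the Borel summation, together with the series $\sum r_k^\pm$. Each $r_k^\pm$ splits into two types of contributions. Derivatives hitting $\chi_k=\chi(f>C_k)$ are supported in $\{C_k/2\leq f\leq C_k\}$, which is compact in $(x,y)$, so the corresponding piece of $\psi_{m,\varepsilon}^{\lambda\pm}[\xi]$ is a Schwartz function of $(x,y)$. Derivatives hitting $\chi^\pm_\varepsilon=\chi(\pm a>-\varepsilon)$ are supported where $\pm(\eta+\hat y_m\cdot\zeta)\approx -\varepsilon f$, so $\eta^2/2\approx\varepsilon^2 f^2/2$, whence $|\partial_\eta\theta_\lambda|=|x+\lambda-\eta^2/2-\zeta^2/2|\geq c(x+\inp{y}_m)$ on this support, provided $\varepsilon$ is small. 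Iterated integration by parts via \eqref{eq:inteparts1i} then produces arbitrary $x$-decay; combined with IBP via \eqref{eq:inteparts2i} in the complementary regime $|y|\gg f$ where $|y-\eta\zeta|\gtrsim|y|$, one obtains arbitrary polynomial decay in $(x,y)$. The $\vO(f^{-\infty})$ piece is treated similarly using \eqref{eq:inteparts1i}--\eqref{eq:inteparts2i} together with the built-in $f$-decay.

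For \eqref{eq:0Sommer2i} I would apply the stationary phase expansion \eqref{eq:basyp} to the symbol $\tilde a_1=a_B^\pm$, noting that by \eqref{eq:Bsym} the leading order of $a_B^\pm$ is $b_0^\pm=1$; thus $\phi_{m,\varepsilon}^{\lambda\pm}[\xi]$ has the same purely $\pm$-outgoing WKB leading asymptotics as $\phi^{\lambda\pm}[\xi]$ modulo $\vB_0^*$. Combined with the previous step, $\phi_{m,\varepsilon}^{\lambda\pm}[\xi]\in\vB^*$ solves $(H-\lambda)\phi=\psi_{m,\varepsilon}^{\lambda\pm}[\xi]\in L^2_\infty\subseteq\vB$ with the correct $\pm$-outgoing asymptotics; by the Sommerfeld uniqueness result of \cite{AIIS1} as used for \eqref{eq:0Sommer}, this forces $\phi_{m,\varepsilon}^{\lambda\pm}[\xi]=R(\lambda\pm\i0)\psi_{m,\varepsilon}^{\lambda\pm}[\xi]$. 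For \eqref{eq:rep12i} the function $\Phi:=\phi_{m,\varepsilon}^{\lambda\pm}[\xi]-R(\lambda\mp\i0)\psi_{m,\varepsilon}^{\lambda\pm}[\xi]$ lies in $\vB^*$, satisfies $(H-\lambda)\Phi=0$, and inherits from $\phi_{m,\varepsilon}^{\lambda\pm}$ the correct $\pm$-WKB leading term corresponding to $\xi_\pm=\xi$ in the sense of \eqref{eq:gen1} (the $R(\lambda\mp\i0)$-subtraction contributes only to the opposite-sign branch). By Theorem \ref{thm:char-gener-eigenf-1} \ref{item:14.5.13.5.40} and \eqref{eq:exacGen2}, $\Phi=\phi_{\rm ex}^{\lambda\pm}[\xi]$.

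The main obstacle is the uniform non-stationary phase bookkeeping in the first step: one must simultaneously extract decay in $x$ (via IBP in $\eta$ using $|\partial_\eta\theta_\lambda|\gtrsim x+\inp{y}_m$) and in $y$ (via IBP in $\zeta$ using $\partial_\zeta\theta_\lambda=y-\eta\zeta$), while controlling the non-uniformity in $\zeta$ that enters through $\chi_\varepsilon^\pm$. The symbol bounds \eqref{eq:modbnd}, in particular the $\min(f,\sqrt{\inp{y}})^{-(|\alpha|+2|\beta|+1)}$ hierarchy on $\tilde a_2$, are tuned precisely so that each IBP step preserves the class and produces the requisite decay factor; verifying this preservation under repeated IBP is the computationally delicate part of the proof.
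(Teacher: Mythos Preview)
Your overall strategy --- non-stationary phase for $L^2_\infty$, then Sommerfeld uniqueness and Theorem~\ref{thm:char-gener-eigenf-1} for the two representation formulas --- matches the paper's, and your treatment of \eqref{eq:rep12i} and \eqref{eq:0Sommer2i} is essentially the paper's Step~IV. However, two steps in your $L^2_\infty$ argument contain genuine errors.

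First, the set $\set{f\in\supp\chi_k'}$ is \emph{not} compact in $(x,y)$: since $f^2=2x+2\inp{y}_m$ on the relevant support, bounded $f$ with $\abs{y}\to\infty$ forces $x\to-\infty$. The correct argument (the paper's Step~II, which refers back to Step~I) is that on this shell $\partial_\eta\theta_\lambda=x+\lambda-\eta^2/2-\zeta^2/2\to-\infty$, so repeated integration by parts in $\eta$ via \eqref{eq:inteparts1i} produces the required decay.

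Second, your lower bound $\abs{\partial_\eta\theta_\lambda}\geq c(x+\inp{y}_m)$ on the $\chi_\varepsilon^\pm$-derivative support fails: take $x\approx\varepsilon^2\inp{y}_m$ large together with $\eta^2/2\approx\varepsilon^2(x+\inp{y}_m)\approx x$ (which is allowed by $\abs{a}\in[\varepsilon/2,\varepsilon]$); then $\partial_\eta\theta_\lambda=\vO(1)$. The two-region split you hint at is the correct repair: in $\set{\abs{y}\lesssim f}$ one has $\inp{y}_m\lesssim\sqrt{x}$, hence $\abs{\partial_\eta\theta_\lambda}\gtrsim x\gtrsim\inp{(x,y)}$ and IBP in $\eta$ suffices; in $\set{\abs{y}\gtrsim f}$ one has $\abs{\eta\zeta}\lesssim\varepsilon f\cdot C\ll\abs{y}$, so $\abs{y-\eta\zeta}\gtrsim\abs{y}$ and IBP in $\zeta$ via \eqref{eq:inteparts2i} gives $\abs{y}$-decay, which in that region controls $\inp{(x,y)}$. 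The paper instead packages this step by invoking the framework of Subsection~\ref{subsec:The stationary phase method} and Appendix~\ref{Appendix}: since $\abs{a}\approx\varepsilon<1/2$ on the $\chi_\varepsilon^\pm$-derivative support while $\abs{a}\approx 1$ at the stationary points \eqref{eq:asypFixed}, the symbol vanishes there and Remark~\ref{remark:statproof-eqrefeq:asysta} yields $\vO(h^\infty)$.
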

\begin{proof}
  Write
\begin{align*}
    \psi_{m,\varepsilon}^{\lambda \pm}[\xi]
    =\psi_{m,\varepsilon,1}^{\lambda \pm}[\xi]+\psi_{m,\varepsilon,2}^{\lambda \pm}[\xi].
  \end{align*} 
   corresponding to the splitting
  \begin{align*}
    \int \e^{\i
    \theta_\lambda}\parbb{ \vO\parb{f^{-\infty}} +\sum_0^\infty
    r^\pm_k }\,\d \eta = \int \e^{\i
    \theta_\lambda}\,\vO\parb{f^{-\infty}}\,\d \eta+\int \e^{\i
    \theta_\lambda}\,\sum_0^\infty
    r^\pm_k \,\d \eta
  \end{align*} in \eqref{eq:funFori}.
There are two ways of integrating by
  parts using
\eqref{eq:inteparts1i} and \eqref{eq:inteparts2i}, respectively.

  \Step{I} The contribution from $\vO\parb{f^{-\infty}}$ takes the
  desired  form  since we have any (high) power 
$(x+\inp{y})^{-t}= 2^{t}f^{-2t}$ at our disposal. We can use
a part of this factor to obtain a  high power of $\inp{
  (\eta,\zeta)}^{-1}$  as well as  a  high power of  $\inp{x}^{-1}$  by integrating by parts using
\eqref{eq:inteparts1i} repeatedly.  We can then use
the decay in $x$ (needed for $x<0$ only) and another part of the  factor
$(x+\inp{y})^{-t}$ (for $x\in \R$ arbitrary) to obtain  a
high power of $\inp{y}^{-1}$ as well.
 Altogether we obtain a  desirable   factor
$\inp{x}^{-s}\inp{y}^{-s}$ with $s>1$ arbitrarily big for the contribution
from $\vO\parb{f^{-\infty}}$.

 \Step{II} As for the  contribution from the terms $r^\pm_k $ of $\sum_0^\infty
    r^\pm_k$ we observe that terms for
which the
factor $\chi_k$ is differentiated  can be treated exactly as above. 

\Step{III} It
remains to consider the contributions from terms where at least one
derivative falls on the factor $\chi_\varepsilon^\pm$. By definition any such term
is supported in $\set{x+\inp{y}>1,\, -\varepsilon/2\geq \pm a\geq
  -\varepsilon}$. We mimic Subsection \ref{subsec:The stationary phase
  method} and Appendix 
\ref{Appendix}. Since $\xi$ is compactly supported the variable
$\zeta$ is localized and we may  for any such term consider
\begin{align}\label{eq:appro}
 \tfrac{|\eta|}{\sqrt{2x+2\inp{y}}}\approx  |a|\in [\varepsilon/2, \varepsilon]\quad \text{effectively}.
\end{align}

 Since $\varepsilon \in (0,1/2)$ the stationary points
 \eqref{eq:asypFixed} do not conform with  \eqref{eq:appro}.  This
 means that Remark \ref{remark:statproof-eqrefeq:asysta} applies,
 proving   the first  assertion of the lemma.

\Step{IV}  As for the second assertion, the difference of  the
left- and right-hand sides in \eqref{eq:rep12i} is a purely  incoming  or outgoing
generalized eigenfunction in $\vB^*$, respectively,
cf. \eqref{eq:basyp} and  the proof
of  
\cite[Lemma 4.4]{AIIS2}. Hence by Theorem
\ref{thm:char-gener-eigenf-1} \ref{item:14.5.13.5.40}  it 
vanishes. For \eqref{eq:0Sommer2i} we can argue  similarly.
\end{proof}

 Let $\vE'_{d-1}$ denote the  space of compactly supported
distributions on $\R^{d-1}$. Any   $\xi\in\vE'_{d-1}$ can be written as
$\xi=Q\xi'$,  where $Q=Q(\zeta,p_\zeta)$ is a differential operator on $\R^{d-1}$ with
smooth coefficients  and $\xi'\in
C_\c(\R^{d-1})$. We shall use the quantity $A=A_m$  of Section \ref{subsubsec:Resolvent bounds}.
Let for any $\varepsilon\in(0,1/2)$
\begin{align}\label{eq:locEpsi}
  \chi_{-\varepsilon}(t)=\chi(t<-\varepsilon/4)\chi(t>-2\varepsilon);\quad
  t\in\R.
\end{align}  
\begin{lemma}\label{lemma:best-results-scatt} Let $n\in \N_0$ and
  consider a fixed $\xi\in\vE'_{d-1}$  {of the form}
  $\xi=\inp{p_\zeta}^{2n}\xi'$, $\xi'\in
C_\c(\R^{d-1})$. Then    there exists  $s'=s'(n)\in\R$ such that for any
 $s\in\R$,  the
  quantities 
 $\psi_{m,\varepsilon}^{\lambda\pm}[\xi]$ are   represented  
  \begin{align}\label{eq:repG}
   \psi_{m,\varepsilon}^{\lambda\pm}[\xi] = f^{s'}\chi_{-\varepsilon}(\pm
    A)\varphi_1^\pm+f^{-s}\varphi_2^\pm\text { for some  }\varphi_1^\pm,\,\varphi_2^\pm\in \vH. 
  \end{align}
   The $\vH$-norm of 
 $\varphi_1^\pm$ and  $\varphi_2^\pm$  
 can be 
estimated by  $CR^{(d-1)/2}\norm{\xi'}_\Sigma$ provided $\supp \xi' \subseteq B_R$,
 and $\varphi_1^\pm$ (and  the corresponding constant $C$) can be chosen independent of $s$.
  \end{lemma}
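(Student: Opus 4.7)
The plan is to exploit the explicit representation \eqref{eq:funFori} of $\psi_{m,\varepsilon}^{\lambda\pm}[\xi]$. First integrate by parts in $\zeta$ to transfer $\inp{p_\zeta}^{2n}$ onto the oscillatory integrand; each such derivative of $\e^{\i\theta_\lambda}$ produces a factor $y-\eta\zeta$, of size $\vO(f^2)$ on the effective support. Following the partition of the integrand in \eqref{eq:funFori}, split
\[
  \psi_{m,\varepsilon}^{\lambda\pm}[\xi] = \psi_{(a)}^\pm + \psi_{(b)}^\pm,
\]
where $\psi_{(a)}^\pm$ collects the $\vO(f^{-\infty})$-piece together with the terms in $\sum_k r_k^\pm$ in which the cutoff $\chi_k$ is differentiated, and $\psi_{(b)}^\pm$ collects the remaining terms where at least one derivative falls on $\chi_\varepsilon^\pm$.

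For $\psi_{(a)}^\pm$ I mimic Steps I--II of the proof of Lemma \ref{lem:eigFori}: repeated integration by parts via \eqref{eq:inteparts1i}--\eqref{eq:inteparts2i} produces arbitrary polynomial decay in $(x,y)$, while the $\vO(f^{-\infty})$-decay (respectively the arbitrarily high decay on $\supp\chi_k'$) absorbs the polynomial growth picked up from the $\inp{p_\zeta}^{2n}$-factor and from the phase differentiations. A Cauchy--Schwarz estimate then gives $\psi_{(a)}^\pm \in f^{-s}\vH$ for every $s>0$, with $\vH$-norm bounded by $CR^{(d-1)/2}\norm{\xi'}_\Sigma$; this piece enters the $f^{-s}\varphi_2^\pm$-term.

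For $\psi_{(b)}^\pm$ the full symbol of the resulting double integral is supported in $\set{x+\inp{y}>1,\ \pm a \in [-\varepsilon,-\varepsilon/2]}$ and, after carrying through the factor $\inp{p_\zeta}^{2n}$, obeys polynomial bounds of some finite order $f^{s'}$ with $s'=s'(n)$. A Schur-type estimate based on the symbol bounds \eqref{eq:derBndq22i} and \eqref{eq:modbnd}, together with the effective localization $\abs{\eta}\leq Cf$ on the support, yields $\norm{f^{-s'}\psi_{(b)}^\pm}_\vH \leq CR^{(d-1)/2}\norm{\xi'}_\Sigma$. Set $\varphi_1^\pm := f^{-s'}\psi_{(b)}^\pm \in \vH$; this is visibly independent of $s$. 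One then decomposes
\[
  \psi_{(b)}^\pm = f^{s'}\chi_{-\varepsilon}(\pm A)\varphi_1^\pm + \bigl(1-\chi_{-\varepsilon}(\pm A)\bigr)\psi_{(b)}^\pm - f^{s'}\bigl[\chi_{-\varepsilon}(\pm A),\,f^{-s'}\bigr]\psi_{(b)}^\pm.
\]

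The main obstacle is to show that the last two terms lie in $f^{-s}\vH$ for every $s$. The principal symbol of $A=A_m$ coincides with $a$ on $\set{f\geq \sqrt 2}$, so the symbol of $1-\chi_{-\varepsilon}(\pm A)$ vanishes identically on the phase-space support of the symbol of $\psi_{(b)}^\pm$, and the principal symbol of the commutator $[\chi_{-\varepsilon}(\pm A),f^{-s'}]$ is proportional to $\chi_{-\varepsilon}'(\pm a)$, likewise supported disjointly from that of $\psi_{(b)}^\pm$. A Helffer--Sj\"ostrand representation of $\chi_{-\varepsilon}(\pm A)$ together with a PsDO commutator expansion (in the spirit of the resolvent arguments leading to \eqref{eq:LAPi}--\eqref{eq:twosidedbi}) then yields the required arbitrary $f$-decay. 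The technical difficulty is to propagate this expansion uniformly against the weight $f^{\pm s'}$, but since $\abs{\eta}/f$ is effectively bounded on the microlocal support, the commutator expansion converges after finitely many steps and the errors are absorbed into $\varphi_2^\pm$, completing the decomposition \eqref{eq:repG} with the stated norm bounds.
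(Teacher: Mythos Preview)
Your overall architecture matches the paper's: split off the $\vO(f^{-\infty})$ and $\chi_k'$-pieces (arbitrary decay), isolate the $\chi_\varepsilon^\pm$-differentiated piece, write it as $f^{s'}\varphi_1^\pm$, and use a Helffer--Sj\"ostrand argument to show that $f^{s'}(1-\chi_{-\varepsilon}(\pm A))\varphi_1^\pm$ has arbitrary $f$-decay. Two points deserve sharpening.

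First, your ``Schur-type estimate'' for $\varphi_1^\pm=f^{-s'}\psi_{(b)}^\pm\in\vH$ is where the real work sits, and the pointwise bound you sketch does not close: on the support of $\chi'(\pm a>-\varepsilon)$ you only know $|\eta|\sim\varepsilon f$, so the $\eta$-range has length $\sim f$, and the symbol bound \eqref{eq:modbnd} gives merely $\min(f,\sqrt{\inp{y}})^{-1}=\inp{y}^{-1/2}$. The resulting pointwise estimate $|\psi_{(b)}^\pm|\lesssim f^{4n+1}\inp{y}^{-1/2}$ is not in $f^{s'}L^2(\R^d)$ for any $s'$ (the $y$-integral diverges). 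The paper cures this by a further localization in $\bigl||\eta|-\sqrt{2x}\bigr|$: away from $\sqrt{2x}$ one integrates by parts in $\eta$ via \eqref{eq:inteparts1i} to gain $\inp{(x,\eta)}^{-N}$; near $\sqrt{2x}$ the $\eta$-range is \emph{bounded} and the simultaneous constraint $|\eta|\sim\varepsilon f$ forces $x\sim\inp{y}$, so a factor $\inp{x}^{-j}$ (traded against $f^{2j}$) provides decay in \emph{both} variables. Without this dichotomy you cannot get $\varphi_1^\pm\in\vH$.

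Second, for the remainder $(1-\chi_{-\varepsilon}(\pm A))\psi_{(b)}^\pm$ your appeal to ``disjoint phase-space support'' and a generic PsDO commutator expansion is not quite the right framework: $\psi_{(b)}^\pm$ is a function, not a PsDO, and the gain mechanism is specific to its oscillatory-integral structure. The paper's device is to insert $\chi(\pm a)-\chi(\pm A)=\int(\pm A-z)^{-1}(A-a)(\pm a-z)^{-1}\,\d\mu_\chi(z)$ and then commute the factor $(A-a)f^{-s'}\chi_k$ \emph{through the exponential} $\e^{\i\theta_\lambda}$. The point is that $p\,\e^{\i\theta_\lambda}=(\eta,\zeta)\e^{\i\theta_\lambda}$, so $A$ acting on $\e^{\i\theta_\lambda}$ reproduces $a\,\e^{\i\theta_\lambda}$ exactly; hence $(A-a)$ hits only the amplitude, each pass producing a genuine $f^{-1}$. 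Iterating gives $f^{-s-s'}$. Your commutator-expansion language obscures this cancellation and leaves unclear why the expansion terminates with the required uniform $f$-gain.

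Minor: for $\psi_{(a)}^\pm$ you cannot use \eqref{eq:inteparts2i} after transferring $\inp{p_\zeta}^{2n}$, since $\xi'\in C_\c$ is not smooth; the paper explicitly notes that only \eqref{eq:inteparts1i} is used there, the $\inp{y}$-decay coming instead from the prefactor $(x+\inp{y})^{-t}$.
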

  \begin{proof} \Step{I} 
    The powers of $p_\zeta$ in the definition of the $\xi$ can be moved to
    other factors  of the $\zeta$-integral thereby 
    producing  additional  factors of monomials in  $y-\eta\zeta$. By the proof of
    Lemma \ref{lem:eigFori} the contribution from the term $\vO\parb{f^{-\infty}}$
    did not use integration by parts in the other direction,
    i.e. \eqref{eq:inteparts2i} was not used. In fact an arbitrarily  large negative
    power  $\inp{x}^{-s}\inp{y}^{-s}$  was
    produced. This can bound a  factor $\inp{y}^{2n}$, and we
    conclude  that the contribution from the term
    $\vO\parb{f^{-\infty}}$  takes the form
    of  the second term on the right-hand side  of \eqref{eq:repG}.

\Step{II}  As for the  contribution from the $r^\pm_k $'s we observe that terms for
which the
factor $\chi_k$ is differentiated  offer a factor $f^{-s}$ right away
(in fact for any $s$) and we can also bound  an additional   factor $\inp{y}^{2n}$, so again there is agreement with the form of
the second term on the right-hand side  of \eqref{eq:repG}.

\Step{III}  As for the  contribution from the terms of $r^\pm_k $ for
which the
factor $\chi^\pm_\varepsilon$ is differentiated  is more
complicated. We can not proceed as in  Step III of the proof of Lemma
\ref{lem:eigFori}, since now 
 integration by parts using  \eqref{eq:inteparts2i}  is not doable. In a region of
the form
\begin{align}\label{eq:loc}
  \set{x>1,\, \abs{\abs{\eta}- \sqrt{2x}}>\epsilon}\cup \set{x<4};\quad
  \epsilon>0,
\end{align} we obtain  a high power of
$\inp{(x,\eta)}^{-1}$ by the $\eta$-integration by parts. This power
in combination with the growing factor $\inp{y}^{s'/2}$, $s'/2=2n+d$, can be bounded by
$f^{s'}$. This leads us to writing  the  contribution from
any  term
given by first 
localizing to \eqref{eq:loc}
  as $f^{s'}\varphi^\pm$ with $\varphi^\pm\in \vH$, and therefore  in turn as
    \begin{align}\label{eq:deGo}
      f^{s'}\varphi^\pm=f^{s'}\chi_{-\varepsilon}(\pm
        A)\varphi_1^\pm+f^{s'}\parb{1-\chi_{-\varepsilon}(\pm
      A)}\varphi^\pm \text { with }\varphi_1^\pm=\varphi^\pm\in \vH.
    \end{align}
 The first term agrees with the first term on the right-hand side  of
 \eqref{eq:repG}, so it remains to  show that the second term agrees
 with the second  term on the right-hand side  of
 \eqref{eq:repG}. For the latter task we observe that if we replace
 $A$ by its Weyl symbol  $a_{\rm W}$ ($=a=\tfrac{\eta+\hat y_m\cdot
    \zeta}{f}$ for ${x+\inp{y}}>1$),
  then at this rough symbolic level obviously 
\begin{align}\label{eq:sym}
  \parb{1-\chi_{-\varepsilon}(\pm a_{\rm W})}f^{-s'}\chi_k\chi'(\pm a>-\varepsilon)=0.
\end{align} We are discussing  the case where  $\chi^\pm_\varepsilon$
is differentiated and the prime for the third  factor $\chi$ denotes the
derivative of the function. Terms  with the double derivative
$\chi''(\pm a>-\varepsilon)$ can be treated similarly as below.
We may  move the factor $1-\chi_\varepsilon(\pm A)$ inside the integrals pass the exponential $\e^{\i
    \theta_\lambda}$ and then replace the operator by its
  symbol (which should be legitimate to leading order) and finally
  conclude by \eqref{eq:sym}. However there are `errors'
  due to $(x,y)$-dependence of the given  symbols. We  implement
  a version of this scheme below. 

Pick  $\chi\in C^\infty_\c(\R)$ with $
  \chi(t)=1 $ on $\supp \chi'\parb{\cdot>-\varepsilon}$ but $
  \chi(t)=0 $ on $\supp (1-\chi_{-\varepsilon})$.
 Take an almost analytic extension $\tilde \chi\in C_\c^{\infty }(\C)$ of $\chi$,
and set 
\begin{align*}
\mathrm d\mu_\chi(z)=\pi^{-1}(\bar\partial\tilde \chi)(z)\,\mathrm du\mathrm dv;\quad 
z=u+\i v.
\end{align*}
Then 
\begin{align*}
\chi(t) 
=
\int _{\C}(t -z)^{-1}\,\mathrm d\mu_\chi(z);\quad t\in\R.
\end{align*} In particular
\begin{align*} &\parb{1-\chi_{-\varepsilon}(\pm A)}f^{-s'}\chi_k\chi'(\pm a>-\varepsilon)\\
  &=\parb{1-\chi_{-\varepsilon}(\pm A)}\parb{\chi(\pm a)-\chi(\pm
    A)}f^{-s'}\chi_k\chi'(\pm a>-\varepsilon)\\
&=\pm\int _{\C}\parb{1-\chi_{-\varepsilon}(\pm A)}(\pm A-z)^{-1}\parb{(A-a)f^{-s'}\chi_k}(\pm a-z)^{-1}\chi'(\pm a>-\varepsilon)\,\mathrm d\mu_\chi(z).
\end{align*}

We insert this formula in the expression for $\varphi^\pm$ for those
terms  with a single derivative of  $\chi(\pm \cdot>-\varepsilon)$  (the one with a double
derivative can be treated similarly). Then we move the middle factor
$(A-a)f^{-s'}\chi_k$ to the far right, in particular pass  the exponential $\e^{\i
    \theta_\lambda}$. This produces altogether an extra factor
  $f^{-1}$ since all derivatives (i.e. components of $p$ applied to
  functions) are  bounded except when passing
  through the exponential where  a cancellation occurs. Repeating this procedure we  gain a large power of
  $f^{-1}$, in particular a factor  $f^{-s-s'}$, which 
  allows us to  conclude that  the
  second term of \eqref{eq:deGo} agrees
 with the second  term on the right-hand side  of
 \eqref{eq:repG}.

\Step{IV}  For a localized term in the region of the form
$\set{x>2,\, \abs{\abs{\eta}- \sqrt{2x}}<2\epsilon}$,
which remains to be treated, the 
$\eta$-integration by parts in the beginning   of Step III does not work. To get a weight like $\inp{x}^{-j}$
 (and therefore $\inp{(x,\eta)}^{-j}$)  to insure the Hilbert space bound, we simply  bound
the   {$j$-th} power of $\inp{x}$ by the same power of $f^2$, yielding the desired
inverse power of $\inp{x}$. Here $j=3+n$ suffices, and with the next
argument of  Step III we conclude \eqref{eq:deGo} with
$s'=2j+4n+2d$. Then we  
  mimic the last   part of Step III.

\Step{V} Clearly the $\varphi_1^\pm$ resulting from the combination of
Steps III and IV  above is {independent of} $s$, and our arguments lead
in all cases to  $\vH$-norm bounds with a {dependence on} $\xi'$ only
through a factor of 
$\int\,\abs{\xi'}\,\d \zeta$,  and therefore  by \cas in turn through a factor of 
 $R^{(d-1)/2}\norm{\xi'}_\Sigma$.
\end{proof}

\begin{corollary}\label{cor:impr-form-scatt} For all $\xi\in
  C_\c^\infty(\R^{d-1})$ the vector  $S(\lambda)\xi\in
  C^\infty(\R^{d-1})$. 
  \end{corollary}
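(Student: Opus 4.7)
The plan is to show that for every $n\in\N$ and every $R>0$ there is a constant $C=C(\xi',n,R)$ with
\begin{align*}
  \abs{\inp{\tilde\xi,\inp{p_\zeta}^{2n}S(\lambda)\xi'}_\Sigma}\leq C\norm{\tilde\xi}_\Sigma\quad\mforall \tilde\xi\in C_\c^\infty(B_R).
\end{align*}
By duality this yields $\inp{p_\zeta}^{2n}S(\lambda)\xi'\in L^2_{\rm loc}(\R^{d-1})$ for all $n$, i.e.\ $S(\lambda)\xi'\in H^{2n}_{\rm loc}$, and Sobolev embedding gives $S(\lambda)\xi'\in C^\infty(\R^{d-1})$. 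Setting $\xi:=\inp{p_\zeta}^{2n}\tilde\xi\in\vE'_{d-1}$, the representation \eqref{eq:formS22i} (which extends by continuity from $C_\c^\infty$ to $\vE'_{d-1}$) reduces matters to bounding the two pairings on the right-hand side of \eqref{eq:formS22i} by $C\norm{\tilde\xi}_\Sigma$.

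For the resolvent term I would apply Lemma \ref{lemma:best-results-scatt}, writing
\begin{align*}
  \psi_{m,\varepsilon}^{\lambda+}[\xi]=f^{s'}\chi_{-\varepsilon}(+A)\varphi_1^++f^{-s}\varphi_2^+,\quad \norm{\varphi_j^+}_{\vH}\leq CR^{(d-1)/2}\norm{\tilde\xi}_\Sigma,
\end{align*}
where $s$ is chosen as large as needed and $s'=s'(n)$ is fixed. For the second summand, \eqref{eq:LAPi} gives $\norm{f^{-s}R(\lambda+\i 0)f^{-s}}_{\vL(\vH)}<\infty$ for $s>1/2$, and by Lemma \ref{lem:eigFori} the factor $\norm{f^s\psi_{m,\varepsilon}^{\lambda-}[\xi']}_{\vH}$ is finite since $\psi_{m,\varepsilon}^{\lambda-}[\xi']\in L^2_\infty$. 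For the first summand, since $\chi_{-\varepsilon}$ is supported in $\set{t<-\varepsilon/4}$ we may insert an auxiliary cutoff and apply \eqref{eq:twosidedai} (with parameters $s=s'>-1/2$ and $t<1$) to conclude that $\chi_{-\varepsilon}(+A)f^{s'}R(\lambda+\i 0)f^{-s'-1}\in \vL(\vH)$; absorbing the remaining $f^{s'+1}$ into the decay of $\psi_{m,\varepsilon}^{\lambda-}[\xi']$ gives the bound $C\norm{\varphi_1^+}_{\vH}\leq CR^{(d-1)/2}\norm{\tilde\xi}_\Sigma$.

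For the $\phi$-$\psi$ pairing, I would write $\phi_{m,\varepsilon}^{\lambda+}[\xi](x,y)=\int K^+(x,y;\zeta)\xi(\zeta)\,\d \zeta$ with $K^+(x,y;\zeta)=c\int \e^{\i\theta_\lambda}a_B^+\,\d\eta$, move the $\inp{p_\zeta}^{2n}$ off $\tilde\xi$ and onto $K^+$ by integration by parts in $\zeta$ (which, via $\partial_\zeta\e^{\i\theta_\lambda}=\i(y-\eta\zeta)\e^{\i\theta_\lambda}$, produces at most polynomial factors in $y$ and $\eta$ times derivatives of $a_B^+$ bounded as in \eqref{eq:modbnd}), and then use Fubini and Cauchy-Schwarz in $\zeta$:
\begin{align*}
  \abs[\big]{\inp{\phi_{m,\varepsilon}^{\lambda+}[\xi],\psi_{m,\varepsilon}^{\lambda-}[\xi']}}\leq \norm{\tilde\xi}_\Sigma\norm{F}_{L^2(B_R,\d\zeta)},\quad F(\zeta)=\inp{\inp{p_\zeta}^{2n}K^+(\cdot,\cdot\,;\zeta),\psi_{m,\varepsilon}^{\lambda-}[\xi']}.
\end{align*}
Here $F$ is continuous in $\zeta$ because $\inp{p_\zeta}^{2n}K^+$ is polynomially bounded in $(x,y)$ while $\psi_{m,\varepsilon}^{\lambda-}[\xi']\in L^2_\infty$ decays faster than any polynomial (Lemma \ref{lem:eigFori}); hence $\norm{F}_{L^2(B_R)}\leq C_R\norm{F}_{L^\infty(B_R)}<\infty$.

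The main obstacle is the resolvent term: that step is precisely where Lemma \ref{lemma:best-results-scatt} is indispensable, since a naive estimate would lose control near the outgoing region where $+A\geq-\varepsilon/4$, and it is only the combination of the $\chi_{-\varepsilon}(+A)$-factor with the microlocal resolvent bound \eqref{eq:twosidedai} that saves the argument. The $\phi$-$\psi$ term and the Sobolev/duality wrap-up are routine once the resolvent term is under control.
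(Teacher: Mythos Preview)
Your proposal is correct and follows the same overall strategy as the paper: duality plus Sobolev embedding, splitting via \eqref{eq:formS22i}, using Lemma~\ref{lemma:best-results-scatt} together with the microlocal resolvent bounds for the first term, and the $L^2_\infty$ decay of $\psi_{m,\varepsilon}^{\lambda-}[\xi']$ for the second.

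There are two small differences worth noting. For the resolvent term you apply Lemma~\ref{lemma:best-results-scatt} only on the left and use the full $L^2_\infty$ decay of $\psi_{m,\varepsilon}^{\lambda-}[\xi']$ on the right, so that \eqref{eq:LAPi} and \eqref{eq:twosidedai} suffice; the paper instead decomposes both factors via Lemma~\ref{lemma:best-results-scatt} and therefore also invokes \eqref{eq:twosidedbi}. Your route is simpler and adequate for the corollary, while the paper's version yields the slightly stronger statement (valid even for $\xi'\in C_\c$ or $\xi'\in\vE'_{d-1}$) that is reused afterwards to identify the resolvent term as a smoothing operator. For the $\phi$--$\psi$ pairing, your integration by parts in $\zeta$ and the paper's explicit use of \eqref{eq:inteparts1i} in $\eta$ arrive at the same conclusion; just be aware that your assertion that $\inp{p_\zeta}^{2n}K^+$ is polynomially bounded in $(x,y)$ tacitly requires the $\eta$-integration by parts \eqref{eq:inteparts1i} to regularize the oscillatory integral once the factors $(y-\eta\zeta)^k$ have appeared. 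Finally, since $\tilde\xi\in C_\c^\infty$ already gives $\xi=\inp{p_\zeta}^{2n}\tilde\xi\in C_\c^\infty$, no extension of \eqref{eq:formS22i} to $\vE'_{d-1}$ is actually needed.
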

  \begin{proof} Let $\xi=\xi_-\in
  C_\c^\infty(\R^{d-1})$ be given. Let
  $\xi_+=\inp{p_\zeta}^{2n}\xi'_+$ be given as in  Lemma \ref{lemma:best-results-scatt}.
\Step{I}  We look at the first term
  $\inp{\psi_{m,\varepsilon}^{\lambda+}[{\xi_+}],R(\lambda+\i
    0)\psi_{m,\varepsilon}^{\lambda-}[\xi_-]}$ of
  \eqref{eq:formS22i}. By combining the 
  representations of $\psi_{m,\varepsilon}^{\lambda\pm}[{\xi_+}]$ from  Lemma \ref{lemma:best-results-scatt} 
   in combination with
  \eqref{eq:LAPi}--\eqref{eq:twosidedbi} it follows  
         by the  Sobolev embedding theorem \cite[4.5.13]{Ho}  that
         indeed the contribution {to} $S(\lambda)\xi_-$ from the first
term of \eqref{eq:formS22i} is smooth.  This argument 
only uses  the weak
  input  $\xi_-\in
C_\c(\R^{d-1})$. More generally it works with  also
$\xi_-=\inp{p_\zeta}^{2n}\xi'_-$  given as in  Lemma
\ref{lemma:best-results-scatt} for an arbitrary   $n\in\N_0$.
\Step{II}
The arbitrary power  decay of $\psi_{m,\varepsilon}^{\lambda-}[\xi_-]$  from   Lemma
\ref{lem:eigFori}  and repeated integration by parts (using
\eqref{eq:inteparts1i})  in the integral
$\phi_{m,\varepsilon}^{\lambda+}[\xi_+]$  yield that also the second  term of
\eqref{eq:formS22i} contributes by a  smooth term to
$S(\lambda)\xi_-$. Note that we can estimate  the term by 
$C\norm{\xi'_+}_\Sigma$  and consequently  again 
invoke the  Sobolev embedding theorem.
\end{proof}

We noted  in Step I in the above proof that the smoothness of $\xi_-(\in
  C_\c^\infty(\R^{d-1})$ was  not used for that part of the proof.
   We could have assumed $\xi_-\in\vE'_{d-1}$
  only and concluded that the corresponding contribution to $ S(\lambda)\xi_-\in
  C^{\infty}(\R^{d-1})$. In fact it follows readily (for example by
  using the left Kohn-Nirenberg
quantization  discussed below) that the  first term
   of \eqref{eq:formS22i} is represented by a \emph{smoothing operator} in the sense
   of \eqref{eq:orderDef}. Whence  the
   local singularities of the kernel $S(\lambda)(\zeta,\zeta')$ appear
   in the   second  term
   of \eqref{eq:formS22i} only. This term is an \emph{explicit oscillatory
   integral}. We have
   \begin{align*}
     -2\pi \i \inp{\phi_{m,\varepsilon}^{\lambda+}[{\xi}],\psi_{m,\varepsilon}^{\lambda-}[\xi']}&\\=
\tfrac{-\i}{(2\pi)^{d}}\int\!\!\!\!\int\d x\d y \int  \d
  \zeta \overline{\xi(\zeta)}\int \e^{-\i
    \theta_\lambda}\overline{a^+_B }\d \eta& \int \d
  \zeta' \xi'(\zeta')\int \e^{\i
    \theta'_\lambda}\,\parbb{ \vO\parb{f^{-\infty}} +\sum_0^\infty
    r^-_k }    \d \eta',
   \end{align*} where the first exponential $\e^{-\i
    \theta_\lambda}$ is considered as a funtion of $(\eta,\zeta)$
  (and of $(x,y)$ as well) while  the second  exponential $\e^{\i
    \theta'_\lambda}=\e^{\i
    \theta_\lambda}$ is considered as a function of
  $(\eta',\zeta')$. Of course the
  symbols $\vO\parb{f^{-\infty}}$ and the $r^-_k$'s  also depend of the
  variables $(\eta',\zeta')$, while $a_B^+$
  rather 
depends on $(\eta,\zeta)$. Up to a convergence factor
$\chi(\abs{y}/R<1)$ (with $R\to \infty$) we write the right-hand side  as
\begin{align*}
 \int\!\!\!\!\int\,\overline{\xi(\zeta)}\breve S(\zeta,\zeta')
{\xi'(\zeta')}\,\d \zeta\d \zeta' 
\end{align*} and then in turn $\breve S$ as a pseudodifferential
operator with corresponding symbol $\breve s$
\begin{align*}
  \breve S(\zeta,\zeta') &=(2\pi)^{1-d}\int\, \e^{\i(\zeta-\zeta')\cdot
  y} \,\breve s(\zeta,\zeta',-y) \,\d y;\\
\breve s(\zeta,\zeta',y)&=( 2\pi \i)^{-1}\int\d
                   x\int {\e^{-\i\varphi_\lambda(x, \eta,\zeta)}\,\overline{a^+_B }}\,\d
                   \eta\\
&\quad \quad
  \quad\quad\int{\e^{\i\varphi_\lambda(x, \eta',\zeta') }}\parbb{ \vO\parb{f^{-\infty}} +\sum_0^\infty
    r^-_k } \,\d
  \eta';\\&\quad \varphi_\lambda(x, \eta,\zeta)={-\eta^3/6+(x+\lambda-\zeta^2/2)\eta}.
\end{align*} 

\begin{thm}
  \label{thm:analys-scatt-matra} The scattering operator $S(\lambda)$
  is a PsDO of order $0$ in the sense of \eqref{eq:orderDef}.
\end{thm}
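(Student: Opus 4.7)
The plan is to build on the representation \eqref{eq:formS22i}, exhibiting $S(\lambda)$ as the sum of a smoothing operator and the explicitly constructed $\breve S$, and then verifying that the symbol $\breve s$ obeys the estimates in \eqref{eq:orderDef} with $k=0$. Since smoothing operators are PsDOs of every order, it suffices to treat each of the two summands of \eqref{eq:formS22i} separately.

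First, I would reprove the conclusion of Corollary \ref{cor:impr-form-scatt} in a quantitative form: if $\xi=\inp{p_\zeta}^{2n}\xi'$ and $\xi_-=\inp{p_\zeta}^{2n'}\xi_-'$ with $\xi',\xi_-'\in C_\c(\R^{d-1})$ supported in $B_R$, then the pairing
\begin{align*}
\inp{\psi_{m,\varepsilon}^{\lambda+}[\xi],R(\lambda+\i 0)\psi_{m,\varepsilon}^{\lambda-}[\xi_-]}
\end{align*}
is bounded by $C_{n,n',R}\|\xi'\|_\Sigma\|\xi_-'\|_\Sigma$. This is achieved by inserting the splitting \eqref{eq:repG} of Lemma \ref{lemma:best-results-scatt} for each of $\psi_{m,\varepsilon}^{\lambda\pm}$, taking $s$ large, and invoking \eqref{eq:LAPi}--\eqref{eq:twosidedbi}: the $f^{s'}\chi_{-\varepsilon}(\pm A)$ terms are absorbed by the microlocal resolvent estimates, while the $f^{-s}$ remainder is absorbed by \eqref{eq:LAPi}. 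Since $n,n'$ are arbitrary, the Sobolev embedding theorem then yields that this first term represents a \emph{smoothing} operator, and so a PsDO of order $0$ trivially.

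Second, I would verify the symbol estimates for the explicit oscillatory symbol $\breve s$. The key observation is that the $y$-dependence enters $\breve s$ only through the symbols $\overline{a_B^+}$ and the $r^-_k$'s (and the cutoffs $\chi_k,\chi_\varepsilon^-$ built into them), all of which satisfy the sharp bounds \eqref{eq:modbnd}. On the support of $a_B^+$ and of the symbols appearing in $(H-\lambda)\phi_{m,\varepsilon}^{\lambda-}$ the variable $\eta$ (respectively $\eta'$) is comparable to $f$, so using the identity $\partial_x\varphi_\lambda=\eta$ we may integrate by parts in $x$ repeatedly to gain arbitrarily many negative powers of $\inp{\eta}$ and $\inp{\eta'}$, rendering both $\eta$-integrals and the $x$-integral absolutely convergent locally uniformly in $\zeta,\zeta'$. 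This yields $|\breve s|\leq C$. A $y$-derivative on either factor produces, by \eqref{eq:modbnd} and the chain rule applied to the $(x,y)$-dependent cutoffs $\chi_k(f)$ and $\chi_\varepsilon^\pm(a)$, an additional factor of $\min(f^2,\inp{y})^{-1}\leq \inp{y}^{-1}$; the same integration by parts scheme then gives $|\partial_\zeta^\alpha\partial_{\zeta'}^{\alpha'}\partial_y^\beta\breve s|\leq C_{\alpha,\alpha',\beta}\inp{y}^{-|\beta|}$ locally uniformly in $\zeta,\zeta'$, which is precisely \eqref{eq:orderDef} with $k=0$.

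The main obstacle will be Step two, namely extracting the pointwise symbol bounds on $\breve s$ through the double oscillatory integral in a way that is uniform for all derivatives. The subtle point is ensuring that integration by parts in $x$ does not destroy the $y$-decay gained from derivatives falling on the symbols; this requires keeping track of which factors of $f$ and $\inp{y}$ are produced by each operation, and exploiting that on the support of the relevant symbols one has $|\eta|\sim f$ and $f^2\gtrsim 1+\inp{y}$ (the latter on the support of $\chi_k$ for large $k$). Once this bookkeeping is carried out, the theorem follows by combining the smoothing estimate of Step one with the symbol bounds of Step two.
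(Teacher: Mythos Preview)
Your Step one is fine and matches the paper's treatment. The genuine gaps are all in Step two.

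First, the integration-by-parts claim is wrong as written. In the triple integral defining $\breve s$ the total phase is $-\varphi_\lambda(x,\eta,\zeta)+\varphi_\lambda(x,\eta',\zeta')$, whose $x$-derivative is $\eta'-\eta$, not $\eta$ or $\eta'$ separately. So integration by parts in $x$ produces factors of $(\eta'-\eta)^{-1}$, which blow up on the diagonal $\eta=\eta'$ and do not give the decay in $\inp{\eta}$, $\inp{\eta'}$ you assert. It is also not true that $|\eta|\sim f$ on $\supp a_B^+$: the constraint there is only $a>-\varepsilon$, which allows $\eta$ arbitrarily large. What the paper actually does is combine non-stationary phase in $(\eta,\eta')$ with the support condition on $\partial\chi_\varepsilon^-$ (namely $|a'|\in[\varepsilon/2,\varepsilon]$) to force, after localizing to $\eta'\approx\sqrt{2x}$, the relation $\sqrt{2x}\sim\varepsilon\sqrt{2x+2\inp{y}}$, i.e.\ $x\sim\inp{y}$. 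Only then can one set $h^{-1}=\sqrt{2\inp{y}}$ and apply stationary phase to get the $\vO(h^0)$ bound. Your scheme never reaches this localization, and without it the $x$-integral does not converge with the bounds you need; note also that your inequality $\min(f^2,\inp{y})^{-1}\le\inp{y}^{-1}$ is backwards.

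Second, you do not address the $\zeta,\zeta'$-derivatives of the phase. Since $\partial_\zeta\varphi_\lambda(x,\eta,\zeta)=-\eta\zeta$, each such derivative brings down a factor $\sim\eta\sim\sqrt{2x}\sim\inp{y}^{1/2}$ on the critical set, so the naive bound for $\partial_\zeta^\alpha\partial_{\zeta'}^{\alpha'}\partial_y^\beta\breve s$ is only $\inp{y}^{(|\alpha|+|\alpha'|)/2-|\beta|}$, cf.\ \eqref{eq:derRE}. The paper removes this growth by passing to the left Kohn--Nirenberg symbol: after applying $e^{ip_{\zeta'}\cdot p_y}$ and setting $\zeta'=\zeta$, the combined phase becomes $e^{-i\varphi_\lambda(x,\eta,\zeta)}e^{i\varphi_\lambda(x,\eta',\zeta)}$, and a $\zeta$-derivative now produces $\eta-\eta'$, which can be traded for $p_x$ and integrated by parts in $x$, gaining $f^{-1}$ rather than losing $f$. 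Your proposal omits this reduction entirely, and without it the estimates \eqref{eq:orderDef} with $k=0$ do not follow.
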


 We are going to use 
Corollary \ref{cor:impr-form-scatt} (and  Step I in  its proof) to establish the theorem.  For a \emph{suitable}
realization of a  symbol $s$ of
$S(\lambda)$ we need to verify  the following bounds.
\begin{align}\label{eq:derRE0}
  \begin{split}
   &\forall \alpha,\alpha', \beta\in \N_0^{d-1}:\\ \abs{ \partial^{\alpha}_{\zeta}\partial^{\alpha'}_{\zeta'}\partial^\beta_y
  s}&\leq
C_{\alpha,\alpha',\beta}\,\inp{y}^{-\abs{\beta}}\text{
  for all } y \text{ and
  locally  uniformly in }\zeta,\zeta'.  
  \end{split} 
\end{align} Due  to Step
I of the  proof of Corollary \ref{cor:impr-form-scatt} and the
subsequent discussion only  $\breve S$ and a corresponding symbol
$\breve s$ need  elaboration. Our   realization of $\breve s$ is
conveniently given by the  left Kohn-Nirenberg
symbol $s_{\rm KN}$, implicitly given by
\begin{align}\label{eq:KNf}
   \breve S(\zeta,\zeta') &=(2\pi)^{1-d}\int\, \e^{\i(\zeta-\zeta')\cdot
  y} \,\breve s_{\rm KN}(\zeta,y) \,\d y.
\end{align}
\begin{proof}[Proof of Theorem \ref{thm:analys-scatt-matra}] 
  \Step{I}   The
contribution from $\vO\parb{f^{-\infty}}$, say with PsDO symbol $s^-$,  is a smoothing operator,
since we can use the bound {$(x+\inp{y})^{-s}$,} $s$ large, and
$\eta$- and $\eta'$-integration by parts to obtain the following  bound of any
derivative,
\begin{align*}
  \partial^\alpha_{\zeta,\zeta',y} s^-(\zeta,\zeta',y)=\vO\parb{\inp{(\zeta,\zeta',y)}^{-\infty}}.
\end{align*}
\Step{II} We are left with examining the symbol
\begin{align}\label{eq:simp}
  t=( 2\pi \i)^{-1}\int\d x\int
  \e^{-\i\varphi_\lambda(x, \eta,\zeta)} \, {\overline{a_B^+}}\,\d
  \eta
\int{\e^{\i\varphi_\lambda(x, \eta',\zeta')} }\,\sum_0^\infty
    r^-_k\,\d \eta'.
\end{align} We shall here bound this expression locally uniformly in
$(\zeta,\zeta')$ and unifomly in $y$.

 By the non-stationary phase argument (in the variable
 $(x,\eta,\eta')$) we obtain,  computing up to a  smoothing operator, that only a
localization to 
$\set{x>R,\,\eta\approx \eta',\,{\eta}\approx \sqrt{2x}}$ with $R>1$ big 
matters. 
In turn with such localization we can use the method of stationary
phase, cf.   Subsection
\ref{subsec:The stationary phase method} and Appendix
 \ref{Appendix}. With reference to the
notation of  \eqref{eq:mainP} and 
\eqref{eq:locEpsi} it suffices  to bound the following expression
\begin{align*}
  ( 2\pi \i)^{-1}\int\d x\,\chi_R(x)&\int
  \e^{-\i\varphi_\lambda(x, \eta,\zeta)} \, {\overline{a_B^+}}\,\chi_\epsilon\parb{{\eta}-\sqrt{2x}}\d
  \eta\\
&\int{\e^{\i\varphi_\lambda(x, \eta',\zeta')} }\,\chi_\epsilon\parb{{\eta'}-\sqrt{2x}}\,\chi_{-\varepsilon}\parbb{-\tfrac{\eta'}{\sqrt{2x+2\inp{y}}}}\sum_0^\infty
    r^-_k\,\d \eta'.
\end{align*} Note that the functions  $\chi_\epsilon$ and
$\chi_{-\varepsilon}$ are very different: By definition
$\chi_\epsilon$ is supported in $(-2\epsilon,2\epsilon)$  while $\chi_{-\varepsilon}$ is
supported in $(-2\varepsilon,-\varepsilon/4)$. In particular we can for free
insert  the localization factor
\begin{align*}
  \chi(x,y;\epsilon):=\chi\parb{{\sqrt{2x}}/{\sqrt{2\inp{y}}}<5\varepsilon}\chi\parb{{\sqrt{2x}}/{\sqrt{2\inp{y}}}>\varepsilon/9}
\end{align*} 
 provided  $\epsilon,\varepsilon>0$ are chosen sufficiently small
(which we can assume). 

We introduce the `large parameter'  $h^{-1}=\sqrt{2\inp{y}}$,  make the change of variable  $x\to h^{-2}x$ and  write
 \begin{align*}
  \varphi_\lambda\parb{=\varphi_\lambda(x,
   \eta,\zeta)}=h^{-1}\,\tilde{\varphi}_\lambda\,\mand \,\varphi'_\lambda\parb{=\varphi_\lambda(x,
   \eta',\zeta')}=h^{-1}\,\tilde{\varphi}'_\lambda.
 \end{align*} The integration in the new $x$ is over a
 compact interval due to the introduced factor $\chi\parb{\sqrt{2x}<5\varepsilon}\chi\parb{\sqrt{2x}>\varepsilon/9}$ and the double $(\eta,\eta')$-integral  can be
 estimated by the stationary phase method, cf. Appendix
 \ref{Appendix}. We skip the details of proof  at this point noting that one may
 mimic Appendix
 \ref{Appendix} interchanging  the roles of $x$ and $y$. This  leads to the bound $ h^{-2}\vO(h) \vO\parbb{\sum_0^\infty
    r^-_k}=\vO(h^{ 0})$, i.e.  uniform boundedness
 holds 
 (locally only  in
$(\zeta,\zeta')$) as desired.

\Step{III} It remains to bound derivatives of the above symbol $t$ of
\eqref{eq:simp}. Here we show the following weaker bounds.
\begin{align}\label{eq:derRE}
  \begin{split}
   &\forall \alpha,\alpha', \beta\in \N_0^{d-1}:\\ \abs{ \partial^{\alpha}_{\zeta}\partial^{\alpha'}_{\zeta'}\partial^\beta_y
  t}&\leq
C_{\alpha,\alpha',\beta}\,\inp{y}^{{\abs{\alpha}}/2+{\abs{\alpha'}}/2-\abs{\beta}}\text{
  for all } y \text{ and
  locally  uniformly in }\zeta,\zeta'.  
  \end{split}
\end{align}
 For that we first compute the derivatives by differentiating inside
 the integrals, then we mimic Step II invoking the bounds
 \eqref{eq:modbnd}. Since the phases $\varphi_\lambda$ and
 $\varphi'_\lambda$ are independent of $y$,  derivatives in $y$ conform
 with \eqref{eq:derRE}. However derivatives in $\zeta$ and $\zeta'$ are
 not so good as \eqref{eq:modbnd} indicates. The reason is that  the phases $\varphi_\lambda$ and
 $\varphi'_\lambda$ have a {dependence on} these variables and we can only
 bound like $\partial^{\alpha}_{\zeta}\varphi_\lambda =\vO(\inp{\eta})$ and
 $\partial^{\alpha}_{\zeta'}\varphi'_\lambda =\vO(\inp{\eta'})$ for
 $\abs{\alpha}\geq 1$. Effectively, cf. Step II,
 $\vO(\inp{\eta})=\vO(\inp{y}^{1/2})$ and
 $\vO(\inp{\eta'})=\vO(\inp{y}^{1/2})$. Whence \eqref{eq:derRE}
 follows.

\Step{IV} The left Kohn-Nirenberg symbol $t_{\rm KN}$, cf. \eqref{eq:KNf}, 
is obtained from $t$ by the formula
\begin{align*}
  t_{\rm KN}(\zeta,y)=\e^{\i p_{\zeta'}\cdot p_y} t(\zeta,\zeta',y)_{|\zeta'=\zeta},
\end{align*} cf. \cite[Theorems 18.4.10 and 18.5.10]{Ho2}. (Note that
this symbol is conveniently represented in momentum space,
cf. Appendix \ref{Appendix}.)  Whence (formally)
\begin{align*}
  t_{\rm KN}(\zeta,y)&=( 2\pi \i)^{-1}\int\d
                   x\int \d
                   \eta\,\e^{-\i\varphi_\lambda(x, \eta,\zeta)}\,\\
&\quad \quad
  \quad\quad\e^{\i p_{\zeta'}\cdot p_y}\,\parbb{{\overline{a_B^+}}\int{\e^{\i\varphi_\lambda(x, \eta',\zeta') }}\parbb{ \vO\parb{f^{-\infty}} +\sum_0^\infty
    r^-_k }   \,\d
  \eta'}_{|\zeta'=\zeta}.
\end{align*} However we only need the formula with the cutoffs as in
Step II. The point is that, when expanding the exponential, although
when $p_{\zeta'}$ hits the factor $\e^{\i\varphi_\lambda(x,
  \eta',\zeta') }$ a `growing' factor $\eta'$ is introduced which
effectively counts for a factor $f$, but the accompanying factor $p_y$
effectively counts for a factor $f^{-2}$. This means that the terms in
the expansion of $\e^{\i p_{\zeta'}\cdot p_y}$ effectively have
decreasing order. Truncating  the series leads to a more complicated
integrand, however the variable $\zeta'$ has disappeared and now the
phase factors enter as
\begin{align*}
  \e^{-\i\varphi_\lambda(x,
  \eta,\zeta) }\,\e^{\i\varphi_\lambda(x,
  \eta',\zeta) }.
\end{align*} The $\zeta$-derivatives applied to
this product are accompanied by factors of powers of $\eta-\eta'$
which can be written as  powers of $p_x$ applied to the same
product. Integrating by parts  in $x$ then effectively gives  inverse
powers of $f$. In particular the symbol does not become any worse when
differentiating with respect to  $\zeta$, {as we wish}. We can now
improve \eqref{eq:derRE},  
using  the stationary phase method
of  Step  II to  bound derivatives as in \eqref{eq:derRE0}.
\end{proof}

\begin{remark*} A  closer examination of  Step IV above shows the
  slightly stronger assertion on the symbol $s=s_{\rm KN}$:
  \begin{align}\label{eq:derRE02}
  \begin{split}
   &\forall \alpha, \beta\in \N_0^{d-1}:\\ \abs{ \partial^{\alpha}_{\zeta}\partial^\beta_y
  s}&\leq
C_{\alpha,\beta}\,\inp{y}^{-\abs{\alpha}/2-\abs{\beta}}\text{
  for all } y \text{ and
  locally  uniformly in }\zeta.  
  \end{split} 
\end{align}
\end{remark*}

From the fact that $ S(\lambda)$ is a  PsDO we deduce   the following
result (by a general
argument).

\begin{corollary}\label{cor:impr-form-scatt2i} The kernel  $S(\lambda)(\zeta,
  \zeta')$ is smooth away from the diagonal $\set{\zeta=
  \zeta'}$. 
  \end{corollary}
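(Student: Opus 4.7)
The plan is to exploit Theorem \ref{thm:analys-scatt-matra}, which gives $S(\lambda)$ as a PsDO of order $0$. Thus, fixing a (left Kohn--Nirenberg) symbol $s=s(\zeta,\zeta',y)$ satisfying \eqref{eq:orderDef} with $k=0$, I have the oscillatory-integral representation
\begin{equation*}
  S(\lambda)(\zeta,\zeta')=(2\pi)^{1-d}\int \e^{\i(\zeta-\zeta')\cdot y}\,s(\zeta,\zeta',y)\,\d y,
\end{equation*}
understood in the distributional sense.

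Now I would fix an open set $U\subseteq\R^{d-1}\times\R^{d-1}$ avoiding the diagonal, say $U\subseteq\set{|\zeta-\zeta'|>\kappa}$ for some $\kappa>0$, with $\zeta,\zeta'$ also ranging over a compact set. On $U$ I can integrate by parts using
\begin{equation*}
  \e^{\i(\zeta-\zeta')\cdot y}=\abs{\zeta-\zeta'}^{-2}(-\i)(\zeta-\zeta')\cdot\nabla_y\,\e^{\i(\zeta-\zeta')\cdot y},
\end{equation*}
transferring $N$ derivatives onto $s$. The symbol bounds from \eqref{eq:orderDef} give $\abs{\partial_y^\beta s}\leq C_\beta\inp{y}^{-\abs{\beta}}$ locally uniformly in $(\zeta,\zeta')$, so after $N$ integrations by parts the integrand is bounded by $C_N\abs{\zeta-\zeta'}^{-N}\inp{y}^{-N}$. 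Taking $N>d-1$ renders the integral absolutely convergent, and the same argument applied to arbitrary $\zeta,\zeta'$-derivatives (which only introduce at most polynomial factors in $y$ times $\partial_{\zeta,\zeta'}$-derivatives of $s$, themselves obeying the same $\inp{y}^{-|\beta|}$ estimates in the $y$-derivatives) shows that all multi-derivatives of the kernel are given by absolutely convergent integrals on $U$. Hence $S(\lambda)(\zeta,\zeta')\in C^\infty(U)$, and since $U$ was an arbitrary relatively compact set in the complement of the diagonal, the claim follows.

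There is essentially no obstacle beyond bookkeeping: the crucial work has already been done in Theorem \ref{thm:analys-scatt-matra} (in particular the bound \eqref{eq:derRE02} even gives better control in $\zeta$ than is needed here). The only minor point is that the oscillatory integral defining the kernel is a priori not absolutely convergent, so the integration-by-parts argument must be phrased as a statement about the distributional kernel restricted to $U$; this is standard and follows, for instance, by inserting a cutoff $\chi(y/R)$ and passing to the limit $R\to\infty$ using the bounds above.
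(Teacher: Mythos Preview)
Your proof is correct and is precisely the ``general argument'' the paper alludes to (the paper does not spell out a proof beyond noting that the result follows from $S(\lambda)$ being a PsDO, i.e.\ from Theorem~\ref{thm:analys-scatt-matra}). The integration-by-parts argument you give is the standard way to make this explicit, and nothing more is needed.
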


\section{The kernel of the scattering
  matrix at the diagonal} \label{subsubsec:better Best results on the scattering
  matrix}

We will derive yet another representation of the scattering
matrix. This will be more suitable for  computing
singularities at the diagonal of its kernel. We do  an 
analysis of the latter problem in Subsection \ref{subsubsec:Tbetter Best results on the scattering
  matrix2}.

\subsection{Subtracting the $\delta$-singularity at the diagonal} \label{subsubsec:Subtracting the delta-singularity at the diagonal}

Let $m$ and $\varepsilon$ be as in Section \ref{subsubsec:Best results on the scattering
  matrix}. We shall use notation of \eqref{eq:free_appr_eigFct},
\eqref{eq:derBndq22i}, \eqref{eq:Bsym}  and  the quantity
\begin{align}\label{eq:funFor0i3}
\begin{split}
   F^\pm_{m,\varepsilon}(\lambda)^* \xi&:=c\int \d
  \zeta \,\xi(\zeta)\int \e^{\i \theta_\lambda}\,(a^\pm_B+\chi_1 \chi^{\perp\pm}_\varepsilon)\,\d \eta;\\
&\chi^\pm_\varepsilon=\chi\parb{\pm
  a>-\varepsilon},\\&\chi^{\perp \pm}_\varepsilon=\chi^\perp\parb{\pm a>-\varepsilon};\quad a=\tfrac{\eta+\hat y_m\cdot
    \zeta}{f}.
\end{split}
\end{align}  Recall that $\chi_1=\chi(x+\inp{y}>C_1)$ and $\chi^\pm_\varepsilon=\chi\parb{\pm
  a>-\varepsilon}$ enter  in the definition \eqref{eq:Bsym}. Note in comparison with \eqref{eq:funFor0i}  the
appearence of the term  $\chi_1 \chi^{\perp\pm}_2$ (although the left-hand side notation for
convenience is the same).

We calculate
\begin{align}\label{eq:funFori3}
  \begin{split}
   (H-\lambda) F^\pm_{m,\varepsilon}(\lambda)^* \xi&=-c\int \d
  \zeta \,\xi(\zeta)\int \e^{\i
    \theta_\lambda}\parbb{\vO\parb{f^{-\infty}} +\sum_0^\infty
    r^\pm_k+r^{\perp \pm}_\varepsilon }\,\d \eta; \\ 
r^\pm_k&=  \i {b}^\pm_k\, \parb{\chi_k\partial_\eta\chi^\pm_\varepsilon
+(\eta,\zeta) \cdot \nabla_{(x,y)}(\chi_k\chi^\pm_\varepsilon)}\\& 
\quad \quad +(\nabla_{(x,y)}{b}^\pm_k) \cdot \nabla_{(x,y)}(\chi_k\chi^\pm_\varepsilon)+\tfrac{{b}^\pm_k}2\,\Delta_{(x,y)}(\chi_k\chi^\pm_\varepsilon),
\\ r^{\perp \pm}_\varepsilon=
\i \, \parb{\chi_1\partial_\eta\chi^{\perp \pm}_\varepsilon&
+(\eta,\zeta) \cdot
\nabla_{(x,y)}(\chi_1\chi^{\perp \pm}_\varepsilon)}+\tfrac{1}2\,\Delta_{(x,y)}(\chi_1\chi^{\perp \pm}_\varepsilon) -q \chi_1\chi^{\perp \pm}_\varepsilon.
  \end{split}
\end{align}

These formulas simplify as
\begin{align*} a^{\pm}_B +\chi_1 \chi^{\perp\pm}_\varepsilon&=1+\sum_1^\infty
\chi_k\chi^\pm_\varepsilon b^\pm_k+\vO\parb{f^{-\infty}},\\
r^\pm_B:=\sum_0^\infty
    r^\pm_k+r^{\perp \pm}_\varepsilon 
&= -q \chi_1\chi^{\perp\pm}_\varepsilon+\sum_1^\infty
    r^\pm_k +\parbb{1+\tfrac{\abs{\eta}}f}\vO\parb{f^{-\infty}}.
\end{align*}
  We compute, using \eqref{eq:derBndq22i}  and  \eqref{eq:B1}--\eqref{eq:B3},
 \begin{align}\label{eq:leadT}
   \begin{split}
a^{\pm}_B &=1-\chi_1 \chi^{\perp\pm}_\varepsilon+\vO\parb{f^{-{2\delta}}}=\vO\parb{f^{0}},\\
  r^\pm_B
  &=\parbb{1+\tfrac{\abs{\eta}}f}\vO\parbb{f^{-{2\delta}}\min\parb{f^2,{\inp{y}}}^{-1/2}},\\
   \breve r^\pm_B:&=r^\pm_B - \i {b}^\pm_1\, \parb{\chi_1\partial_\eta\chi^\pm_\varepsilon
+(\eta,\zeta) \cdot
   \nabla_{(x,y)}(\chi_1\chi^\pm_\varepsilon)}+q
 \chi_1\chi^{\perp\pm}_\varepsilon\\=\parbb{1+&\tfrac{\abs{\eta}}f}\vO\parbb{f^{-{4\delta}}\min\parb{f^2,{\inp{y}}}^{-1/2}}
 +\vO\parbb{f^{-{2\delta}}\min\parb{f^2,{\inp{y}}}^{-2}}.
   \end{split}
 \end{align} We may also compute derivatives and conclude that  $r^\pm_B$ fulfill the second bound of
 \eqref{eq:modbnd}  with an additional  factor
 $f^{-2\delta}$. We may also  derive similar (slightly stronger) bounds for
 derivatives of $\breve r^\pm_B$.

 Parallel to Section \ref{subsubsec:Best results on the scattering
  matrix}  we introduce  
 \begin{align*}
   \phi_{m,\varepsilon}^{\lambda\pm}[\xi]=
                                          F^\pm_{m,\varepsilon}(\lambda)^*
                                          \xi\quad \mand \quad
\psi_{m,\varepsilon}^{\lambda\pm}[\xi] =(H-\lambda)\phi_{m,\varepsilon}^{\lambda\pm}[\xi],
 \end{align*}  
 and  note the following formulas for the generalized eigenfunctions
 of \eqref{eq:rep1}, cf. \eqref{eq:basyp},
  \begin{align}\label{eq:rep12i3}
    \begin{split}
    \phi_{\rm{ex}}^{\lambda\pm}[\xi]= \phi_{m,\varepsilon}^{\lambda\pm}[\xi]-R(\lambda\mp\i 0)\psi_{m,\varepsilon}^{\lambda\pm}[\xi];\quad\xi\in
    C_\c^\infty(\R^{d-1}).  
    \end{split}
  \end{align}  
Since $\vF^{-}(\lambda)^*
  \xi=\phi_{\rm{ex}}^{\lambda-}[\xi]$ (cf. \eqref{eq:exForm}),  Theorem  
  \ref{thm:char-gener-eigenf-1} then leads to
  \begin{align*}
    T(\lambda)\xi:=S(\lambda)\xi-\xi= -\i 2\pi
  \vF^+(\lambda)\psi_{m,\varepsilon}^{\lambda-}[\xi].
  \end{align*}
   Using again 
  \eqref{eq:exForm} and 
  \eqref{eq:rep12i3} we conclude  the following
  analogue of \eqref{eq:formS2} and \eqref{eq:formS22i},
\begin{align}\label{eq:formS22i3}
  \begin{split}
    \tfrac 1{2\pi\i}\inp{{\xi},T(\lambda)\xi'}&=- \inp{{\vF^{+}(\lambda)^*\xi},\psi_{m,\varepsilon}^{\lambda-}[\xi']} \\&  =\inp{\psi_{m,\varepsilon}^{\lambda+}[{\xi}],R(\lambda+\i
      0)\psi_{m,\varepsilon}^{\lambda-}[\xi']}-\inp{\phi_{m,\varepsilon}^{\lambda+}[{\xi}],\psi_{m,\varepsilon}^{\lambda-}[\xi']}.
\end{split}
\end{align} 

\subsection{The leading order symbol of $T(\lambda)$} \label{subsubsec:Tbetter Best results on the scattering
  matrix}

The operator  $T(\lambda)$ is a pseudodifferential operator of {order}
$-\delta$, meaning that   its kernel can be written as
\begin{align*}
  T(\zeta,\zeta') = (2\pi)^{1-d}\int\, \e^{\i(\zeta-\zeta')\cdot
  y} \, t(\zeta,y)\,\d y,
\end{align*} where locally uniformly  in $\zeta$ 
(possibly locally uniformly   $\lambda$ as well) 
 \begin{align}\label{eq:depro3}
  \partial_{\zeta}^\alpha\,\partial_{y}^\beta
   t=\vO\parbb{\inp{y}^{-\delta -\abs{\beta}}}.
\end{align} 
 This definition  is consistent with \eqref{eq:orderDef} by the theory
 of PsDOs, cf. Step IV of the proof of Theorem \ref{thm:analys-scatt-matra}.

\begin{thm}\label{thm:ledSing} The operator  $T(\lambda)$ has order
  $-\delta$. The principal  symbol of
  $T(\lambda)$ is given by $t_{\rm psym}$
   in the sense that $T(\lambda)-T_{\rm psym}$ has order
  $-2\delta$; here  $T_{\rm psym}$ denotes the quantization of $ t_{\rm psym}=t_{\rm psym}(y):=-2\i
                                                                  \int_0^\infty\,
                                                                  \tfrac
                                                                  {q_1(x,-y)}{\sqrt{2x}}\,\d
  x$. 
  \end{thm}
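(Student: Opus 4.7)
The plan is to start from the representation \eqref{eq:formS22i3},
\[
\tfrac{1}{2\pi\i}\langle\xi,T(\lambda)\xi'\rangle = \langle\psi_{m,\varepsilon}^{\lambda+}[\xi],R(\lambda+\i 0)\psi_{m,\varepsilon}^{\lambda-}[\xi']\rangle - \langle\phi_{m,\varepsilon}^{\lambda+}[\xi],\psi_{m,\varepsilon}^{\lambda-}[\xi']\rangle,
\]
and to treat the two summands separately. For the resolvent contribution I would show that it yields a PsDO of order $-2\delta$, hence does not affect the principal symbol. The key inputs are the improved decay of $\psi^\pm_{m,\varepsilon}$ inherited from the extra factor $f^{-2\delta}$ in the bounds \eqref{eq:leadT} for $r^\pm_B$, combined with the phase-space resolvent estimates \eqref{eq:LAPi}--\eqref{eq:twosidedbi} and a version of Lemma \ref{lemma:best-results-scatt} adapted to the modified $\phi^\pm_{m,\varepsilon}$ of the current section. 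The argument is in the spirit of Step I of the proof of Corollary \ref{cor:impr-form-scatt}, but must track the PsDO order rather than merely smoothness of the output kernel.

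Second, I would analyze the direct term $-\langle\phi^+_{m,\varepsilon},\psi^-_{m,\varepsilon}\rangle$ exactly as in Step II of the proof of Theorem \ref{thm:analys-scatt-matra}. After cutoffs ensuring absolute convergence, it is represented by the oscillatory integral
\[
\tfrac{1}{2\pi\i}\int\d x\int\d\eta\,\e^{-\i\varphi_\lambda(x,\eta,\zeta)}\overline{\tilde a_+}\int\d\eta'\,\e^{\i\varphi_\lambda(x,\eta',\zeta')}\,r^-_B.
\]
Non-stationary phase reduces the analysis to the region $\{x>R,\,\eta,\eta'\approx\sqrt{2x}\}$, and two-dimensional stationary phase in $(\eta,\eta')$ localizes to the relevant critical point $(\eta,\eta')=(E,E)$ with $E=\sqrt{2x+2\lambda-\zeta^2}$. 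At this point $\overline{\tilde a_+}\to 1$, and by the transport-equation construction the boundary contributions in $r^\pm_k$ and the derivative-of-cutoff parts of $r^{\perp-}_\varepsilon$ all vanish in the interior where $\chi_1=\chi^{\perp-}_\varepsilon=1$; the leading value of $r^-_B$ at $(E,E)$ is thus $-q_1(x,y)$. The resulting $x$-integral, combined with the phase factors from stationary phase (signature $0$) and the $-y$ convention of the Kohn--Nirenberg realization \eqref{eq:KNf}, yields the formula $t_{\rm psym}(y)=-2\i\int_0^\infty q_1(x,-y)/\sqrt{2x}\,\d x$; the exact overall coefficient is pinned down by a uniform stationary-phase analysis near the turning point $x_\ast=(\zeta^2-2\lambda)/2$, where $E\to 0$ and the two critical points $\eta=\pm E$ coalesce.

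Third, to establish that $T(\lambda)-T_{\rm psym}$ has order $-2\delta$, I would track the subleading contributions. The direct term contributes at order $-2\delta$ via the refined bound on $\breve r^\pm_B$ from \eqref{eq:leadT} and the $f^{-1}$ gain from each subleading correction in the stationary-phase expansion. Derivatives in $\zeta$ and $y$ are controlled by the Kohn--Nirenberg symbol reduction of Step IV of the proof of Theorem \ref{thm:analys-scatt-matra}, yielding the symbol bounds \eqref{eq:depro3} with $\delta$ replaced by $2\delta$ for the error symbol. The main obstacle is the uniform-stationary-phase analysis near the turning point $x_\ast$: the standard two-point expansion degenerates there since the critical points of $\varphi_\lambda(x,\cdot,\zeta)$ coalesce, and an Airy-type uniform expansion is required to fix the exact coefficient $-2\i$ rather than some other constant. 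A secondary obstacle is the simultaneous extraction of the $-2\delta$ bound for the resolvent term and the localization of the principal symbol in the direct term, which requires careful matching of the microlocal decomposition of $\psi^\pm_{m,\varepsilon}$ in Lemma \ref{lemma:best-results-scatt} with the refined decay properties of $r^\pm_B$.
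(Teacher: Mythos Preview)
Your treatment of the resolvent term is fine (and in fact the paper shows more: it is smoothing, not merely of order $-2\delta$). The real gap is in your analysis of the direct term $-\langle\phi^{\lambda+}_{m,\varepsilon},\psi^{\lambda-}_{m,\varepsilon}\rangle$.

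You propose to evaluate $r^-_B$ at the stationary point $(E,E)$ and claim its leading value there is $-q_1$. This is only correct in the region where $\chi^{\perp-}_\varepsilon = 1$, i.e.\ where $a \geq \varepsilon$ at $\eta' = E$, which forces $x \gtrsim \varepsilon^2\langle y\rangle$. For smaller $x$ the stationary point lies in the interior of $\chi^-_\varepsilon = 1$, where $-q\chi_1\chi^{\perp-}_\varepsilon$ vanishes and only the derivative-of-cutoff pieces of $r^-_B$ survive. Those pieces are \emph{not} lower order: the term $r^-_1$ carries $b_1^- = \vO(f^{-2\delta})$ times a derivative of $\chi^-_\varepsilon$, and after stationary phase and $x$-integration over the transition zone $x \sim \varepsilon^2\langle y\rangle$ it contributes at order $-\delta$, exactly the order you are trying to isolate. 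Moreover you take only the $(+E,+E)$ critical point; the amplitude $\overline{a^+_B + \chi_1\chi^{\perp+}_\varepsilon}\approx 1$ is nonvanishing at $\eta=-E$ as well, and the full set of $(\pm E,\pm E)$ contributions is what produces the $\sin^2$ and hence the coefficient $-2\i$. A direct stationary-phase evaluation of $r^-_B$ therefore neither localizes the correct $x$-range nor yields the right constant.

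The paper's route around this is an algebraic rewriting followed by an integration by parts. One computes
\[
r^-_B - \breve r^-_B \;=\; -q\chi_1 \;+\; \i\bigl(\partial_{\eta'} + (\eta',\zeta')\cdot\nabla_{(x,y)}\bigr)\bigl(b_1^-\chi_1\chi^-_\varepsilon\bigr),
\]
and then uses $(H_0-\lambda)\e^{\i\theta'_\lambda} = \i\partial_{\eta'}\e^{\i\theta'_\lambda}$ (cf.\ \eqref{eq:dereta}) to convert the transport-operator term into $(H_0-\lambda)$ acting on $\e^{\i\theta'_\lambda}b_1^-\chi_1\chi^-_\varepsilon$. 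This $(H_0-\lambda)$ is moved by integration by parts to the \emph{left} factor, where it produces $-q\e^{\i\theta_\lambda}a^+_B$ plus a term of the form of the second bound in \eqref{eq:leadT}. The resulting product pairs $qa^+_B=\vO(f^{-1-2\delta})$ with $b_1^-\chi_1\chi^-_\varepsilon=\vO(f^{-2\delta})$ and is of order $-2\delta$. What remains is the clean Born term with amplitude $q\chi_1$ and \emph{no} $a$-cutoff; now both $\pm E$ stationary points contribute symmetrically, the $\eta$- and $\eta'$-integrals are essentially Airy functions, and one reads off
\[
\breve t_{\rm KN}(\zeta,y)\;\approx\; -4\i\int_R^\infty \tfrac{q(x,-y)}{\sqrt{2x+2\lambda-\zeta^2}}\,\sin^2(\cdot)\,\d x\;\approx\; -2\i\int_0^\infty \tfrac{q_1(x,-y)}{\sqrt{2x}}\,\d x.
\]
No uniform Airy expansion near a turning point is needed: the lower limit $R$ is taken above the turning point, and replacing $R$ by $0$ and $\sqrt{2x+2\lambda-\zeta^2}$ by $\sqrt{2x}$ is an order $-2\delta$ correction.
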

  \begin{proof} \Step{I}
Due  to Step
I of the proof Corollary \ref{cor:impr-form-scatt} (including a
trivial modification of Lemma \ref{lemma:best-results-scatt})  and the
subsequent discussion the first term  on the right-hand side of \eqref{eq:formS22i3} is
represented by  a smoothing operator.

\Step{II}
    As for the second term we write 
\begin{align*}
     &-2\pi \i \inp{\phi_{m,\varepsilon}^{\lambda+}[{\xi}],\psi_{m,\varepsilon}^{\lambda-}[\xi']}=
\tfrac{-\i}{(2\pi)^{d}}\int\!\!\!\!\int\d x\d y \int  \,\d
  \zeta \,\overline{\xi(\zeta)}\\&\int \e^{-\i
    \theta_\lambda}\overline{a^{+}_B +\chi_1 \chi^{\perp+}_\varepsilon}\,\d \eta\int \,\d
  \zeta' \,\xi'(\zeta')\,\int \e^{\i
    \theta'_\lambda}\parb{\vO\parb{f^{-\infty}}+r^-_B  }\,\d \eta',
\end{align*}

The contribution from the term $\vO\parb{f^{-\infty}}$ is represented
by a smoothing operator,
since we can use the bound {$(x+\inp{y})^{-s}$,} $s$ large, and
$\eta$- and $\eta'$-integration by parts to  bound  any
derivative, 
\begin{align*}
  \partial^\alpha_{\zeta,\zeta',y} t^-(\zeta,\zeta',y)=\vO(\inp{(\zeta,\zeta',y)}^{-\infty})
\end{align*}  for the corresponding symbol $ t^-$, cf. Step
I of the proof  Theorem \ref{thm:analys-scatt-matra}.

\Step{III}  The `leading  order' contribution  is by \eqref{eq:leadT}  given
by 
\begin{align*}
     &-2\pi \i \inp{\phi_{m,\varepsilon}^{\lambda+}[{\xi}],\psi_{m,\varepsilon}^{\lambda-}[\xi']}\approx
\tfrac{-\i}{(2\pi)^{d}}\int\!\!\!\!\int\d x\d y \int  \,\d
  \zeta \,\overline{\xi(\zeta)}\int \e^{-\i
    \theta_\lambda}\overline{a^{+}_B +\chi_1 \chi^{\perp+}_\varepsilon}\d \eta\\&\int \,\d
  \zeta' \xi'(\zeta')\,\int \e^{\i\theta'_\lambda}\parb{-\i b_1^{-}\, \parb{\chi_1\partial_{\eta'}\chi^-_\varepsilon
+(\eta',\zeta') \cdot
\nabla_{(x,y)}(\chi_1\chi^-_\varepsilon)}+q \chi_1\chi^{\perp-}_\varepsilon}\d \eta'.
   \end{align*} 
 
We may treat
the contribution from the error from  the above approximation by mimicking 
 Steps II--IV  of the proof  Theorem \ref{thm:analys-scatt-matra}.  It
 is 
   represented by a 
     PsDO of order $-2\delta$, as wanted. Note that indeed the same cut-off functions as in Step
     II in the proof of Theorem \ref{thm:analys-scatt-matra}   apply.

     However there is a somewhat alternative treatment based on the van der
     Corput lemma, cf. \cite[p. 332]{St}, and more related to
     Subsection \ref{subsec:The stationary phase method}. Since we
     need it below, this method is explained here. We consider the
     PsDO
\begin{align*}
  \breve R(\zeta,\zeta') &=(2\pi)^{1-d}\int\, \e^{\i(\zeta-\zeta')\cdot
  y} \,\breve r(\zeta,\zeta',-y) \,\d y;\\
\breve r(\zeta,\zeta',y)&=( 2\pi \i)^{-1}\int\d
                   x\int {\e^{-\i\varphi_\lambda(x, \eta,\zeta)}\,\overline{a^{+}_B +\chi_1 \chi^{\perp+}_\varepsilon}}\,\d
                   \eta\int{\e^{\i\varphi_\lambda(x, \eta',\zeta') }} \breve r^-_B\,\d
  \eta';\\&\quad \varphi_\lambda(x, \eta,\zeta)={-\eta^3/6+(x+\lambda-\zeta^2/2)\eta}.
\end{align*} Here $\breve r^-_B$ is the exact  error from
\eqref{eq:leadT}. Proceeding as  in Step
     II of the proof Theorem \ref{thm:analys-scatt-matra} we can
     freely (i.e. up to a  term representing a  smoothing operator) insert the factors
     \begin{align*}
      \chi_\epsilon(\cdot):=\chi_\epsilon\parb{{\eta'}-\sqrt{2x}}\mand  \chi_{-\varepsilon}(\cdot):=\chi_{-\varepsilon}\parbb{-\tfrac{\eta'}{\sqrt{2x+2\inp{y}}}}
     \end{align*}
in the 
$\eta'$-integral, 
$\chi_\epsilon\parb{{\eta}-\sqrt{2x}}$ in the $\eta$-integral and
finally $\chi_R(x)$ and  $\chi(x,y;\epsilon)$  
 in the $x$-integral. As before $R>2$ is large and
$\epsilon>0$ is small. Next we introduce   $h^{-1}=\sqrt{2x}$  and  write
 \begin{align*}
  \varphi_\lambda\parb{=\varphi_\lambda(x,
   \eta,\zeta)}=h^{-1}\,\tilde{\varphi}_\lambda\,\mand \,\varphi'_\lambda\parb{=\varphi_\lambda(x,
   \eta',\zeta')}=h^{-1}\,\tilde{\varphi}'_\lambda.
\end{align*}  To
treat the $\eta'$-integral  we rewrite it as
\begin{align*}
 \int{\e^{\i\varphi_\lambda(x, \eta',\zeta') }} \breve r^-_B\chi_\epsilon(\cdot)\chi_{-\varepsilon}(\cdot)\,\d
  \eta'=\int \parbb{\tfrac{\d}{\d
  \eta'}\int^{\eta'}_{\sqrt{2x}}{\e^{\i\varphi_\lambda(x, s,\zeta') }}
  \d s}\breve r^-_B \chi_\epsilon(\cdot)\chi_{-\varepsilon}(\cdot)\,\d
  \eta', 
\end{align*} and integrate by parts. By the van der
     Corput lemma
     \begin{align*}
     \Big|\int^{\eta'}_{\sqrt{2x}}{\e^{\i h^{-1}\,\tilde{\varphi}'_\lambda}
       \d s} \Big| \leq C\sqrt h
     \end{align*} with a universal constant $C$. We proceed similarly for the
     $\eta$-integral. With the bounds on  $\breve r^-_B$  and $a_B^+$
     from
\eqref{eq:leadT} we then obtain, thanks to the factor
$\chi(x,y;\epsilon)$ and \eqref{eq:elebnd},
\begin{align}\label{eq:rBrev}
  \abs{\breve r(\zeta,\zeta',y)}\leq C_1 \int\chi_R(x) {f^{-(4\delta
     +1)}}x^{-1/2}\,\d
                   x \leq C_2\inp{y}^{-2\delta}
\end{align} with locally bounded constants (in
$(\zeta,\zeta')$). Derivatives can be treated as in Step IV  of the
proof  Theorem \ref{thm:analys-scatt-matra}. So in conclusion, 
 $\breve R$ is a PsDO of order $-2\delta$.

\Step{IV} We show  that the `leading  order' contribution in Step III
given by the symbol $ r^-_B-\breve r^-_B$  is represented by 
     a PsDO of order $-\delta$.

First we compute, cf. \eqref{eq:solut},
   \begin{align*}
     \breve r^-_B-r^-_B=-\i b_1^{-}\, \parb{\chi_1\partial_{\eta'}&\chi^-_\varepsilon
+(\eta',\zeta') \cdot
\nabla_{(x,y)}(\chi_1\chi^-_\varepsilon)}+q \chi_1\chi^{\perp-}_\varepsilon\\&= q \chi_1-\i\parb{\partial_{\eta'}
+(\eta',\zeta') \cdot
\nabla_{(x,y)}}\parb{b_1^{-}\chi_1\chi^-_\varepsilon}.
   \end{align*}  
The second term  contributes only  by a term of  order $-2\delta$, which may be  
seen as follows. By an $\eta'$-integration by part using
\eqref{eq:dereta}  we may  effectively replace 
\begin{align*}
  -\i\e^{\i\theta'_\lambda}\parb{\partial_{\eta'}
+(\eta',\zeta') \cdot
\nabla_{(x,y)}}\parb{b_1^{-}\chi_1\chi^-_\varepsilon}  \approx (H_0-\lambda)\parb{\e^{\i\theta'_\lambda}b_1^{-}\chi_1\chi^-_\varepsilon}
\end{align*}  with  error
\begin{align*}
  \tfrac12 \e^{\i\theta'_\lambda}p^2\parb{b_1^{-}\chi_1\chi^-_\varepsilon}=\vO\parbb{f^{-{2\delta}}\min\parb{f^2,{\inp{y}}}^{-2}},
\end{align*} cf. 
\eqref{eq:leadT},   and for this error term we can proceed as in Step III. Now we take
the factor $ (H_0-\lambda)$ to the left  in the integral (by
integration by parts) thereby producing the factor $(H_0-\lambda)\parb{\e^{\i
    \theta_\lambda}(a^{+}_B +\chi_1 \chi^{\perp+}_\varepsilon)}$. By the same argument, now for the
$\eta$-integration, we have 
\begin{align*}
  (H_0-\lambda)\parb{\e^{\i\theta_\lambda}(a^{+}_B +\chi_1
  \chi^{\perp+}_\varepsilon)}\approx -q\e^{\i\theta_\lambda}a^{+}_B+ \parbb{1+\tfrac{\abs{\eta}}f}\vO\parbb{f^{-{2\delta}}\min\parb{f^2,{\inp{y}}}^{-1/2}},
\end{align*}
 and we can   proceed  as in Step III for the second term obtaining again a term of
 order $-2\delta$. However the first term
 $-q\e^{\i\theta_\lambda}a^{+}_B$ is missing a support property (used
 for the second one), preventing us from using the factor
 $\chi(x,y;\epsilon)$. Whence we    need  to consider the localization  factors
 $\chi_\epsilon\parb{-\eta-\sqrt{2x}}$  and
$\chi_\epsilon\parb{-\eta'-\sqrt{2x}}$, as well as 
$\chi_\epsilon\parb{\eta-\sqrt{2x}}$ and
$\chi_\epsilon\parb{\eta'-\sqrt{2x}}$ used  before. Since
${b_1^{-}\chi_1\chi^-_\varepsilon}=\vO(f^{-2\delta})$ and
$qa^{+}_B=\vO(f^{-1-2\delta})$ we may bound the  symbol corresponding
to the term  $-q\e^{\i\theta_\lambda}a^{+}_B$, with the localizations
factors in place, 
 exactly as in \eqref{eq:rBrev}.

For
the first term $q \chi_1$ we can invoke the first identity of
\eqref{eq:leadT} and  replace the factor $\overline{a^{+}_B
  +\chi_1 \chi^{\perp+}_\varepsilon}$  by $1$ with an error corresponding to 
 a PsDO of order $-2\delta$. This is justified as we treated  $-q\e^{\i\theta_\lambda}a^{+}_B$
 above. Whence we end up with the  `Born term'
\begin{align*}
  \breve T(\zeta,\zeta') &=(2\pi)^{1-d}\int\, \e^{\i(\zeta-\zeta')\cdot
  y} \,\breve t(\zeta,\zeta',-y) \,\d y;\\
  \breve t(\zeta,\zeta',y)&=( 2\pi \i)^{-1}\sum_{\pm}\int\d
                   x \,\chi_R(x)q(x,y)\\&\int \e^{-\i\varphi_\lambda(x, \eta,\zeta)}\,\chi_\epsilon\parb{\pm{\eta}-\sqrt{2x}}\,\d
                   \eta\int{\e^{\i\varphi_\lambda(x, \eta',\zeta') }} \chi_\epsilon\parb{\pm{\eta'}-\sqrt{2x}}\,\d
  \eta'.
\end{align*} The $\eta$- and $\eta'$-integrals essentially represent  Airy
functions and their asymptotics can be obtained by the stationary
phase method, for example by  a simplified
version of  Appendix \ref{Appendix}.  Alternatively we may invoke
\cite[(7.6.21)]{Ho}.  Anyhow we conclude that    $\breve T$ (and
therefore also 
$T(\lambda)$)  is a PsDO of
order $-\delta$ with  
leading order symbol    given in terms of  any big $R>1$ (making
$\eta=\sqrt{2x+2\lambda-\zeta^2}$ well-defined) by 
\begin{align*}
   \breve t_{\rm KN}(\zeta,y)\approx -4\i
                                                                  \int_R^\infty\,
                                                                  \tfrac
                                                                  {q(x,-y)}{\sqrt{2x+2\lambda-\zeta^2}}\sin^2(\cdot)\,\d
  x\approx -2\i
                                                                  \int_0^\infty\,
                                                                  \tfrac
                                                                  {q_1(x,-y)}{\sqrt{2x}}\,\d
  x=t_{\rm psym}(y). 
\end{align*} The quantization of the error is  of
order $-2\delta$.
\end{proof}

By the definition  \eqref{eq:orderDef} and according to Theorem
\ref{thm:ledSing} the corresponding symbol  $t-t_{\rm psym}$ fulfills
bounds (with derivatives) that are 
 locally uniform in
  $\zeta$. We remark that from the above proof one easily deduce that  theses bounds
  are
  also (in fact simultaneously) locally uniform in  $\lambda$. The operator $T_{\rm
    psym}$ has  order
  $-\delta$ and in general no better, see Subsection \ref{subsubsec:Tbetter Best results on the scattering
  matrix2}.

\subsection{Analysis of  the kernel of $T(\lambda)$ at the diagonal} \label{subsubsec:Tbetter Best results on the scattering
  matrix2}
We compute the top  order  singularity
  of  the kernel of $T(\lambda)$ at the diagonal for a class of slowly decaying
  potentials. This  is  done by combining Theorem 
  \ref{thm:ledSing} with  \cite[Lemma
  4.1]{IK}.  

For a homogeneous potential $q\approx \kappa r^{-\alpha}$,
$1/2<\alpha<d-1/2$, we can compute using \eqref{eq:elebnd}
\begin{align*}
  \int_0^\infty\,
                              \tfrac
                                                                  {q(x,-y)}{\sqrt{2x}}\,\d
  x\approx \kappa c_1\abs{y}^{\tfrac 12 -\alpha}\text{ for }\abs{y}\to \infty,
\end{align*} with
\begin{align*}
  c_1=2^{-3/2}\int _0^\infty (t+1)^{-\alpha/2}t ^{-3/4}\,\d t=2^{-3/2} \Gamma(1/4)\Gamma(\alpha/2-1/4)/\Gamma(\alpha/2),
\end{align*} cf. \cite[ (13.2.5), (13.5.10)]{AS}.
The Fourier transform of  $\abs{y}^{\tfrac 12 -\alpha}$ is known.
Whence, cf. \cite[Lemma
  4.1]{IK},  in this case the top  order singularity of the kernel of
$T(\lambda)$ is given by 
\begin{align*}
  &T(\zeta,\zeta')\approx \kappa c_2\abs{\zeta-\zeta}^{\tfrac 12
  +\alpha-d};\\
  c_2=&c_1(2\pi)^{1-d}(-2\i)(2\pi)^{(d-1)/2}2^{d/2-\alpha}\Gamma((d-1/2-\alpha)/2))/\Gamma((\alpha-1/2)/2))\\
=&-\i(2\pi)^{(1-d)/2}2^{(d-1)/2-\alpha}\Gamma(1/4)\Gamma(d/2-1/4-\alpha/2)/\Gamma(\alpha/2)
\end{align*}

 For the Coulomb potential  $q= \kappa r^{-1}$ with $d\geq 3$  the order of
 $T_{\rm psym}$ is   $-1/2$, and the
singularity (at the diagonal) is  the form
\begin{align}\label{eq:upper3}
  T(\zeta,\zeta')\approx\kappa c_2\abs{\zeta-\zeta'}^{3/2-d}.
\end{align}   The error is in this case of order
$\vO\parb{\abs{\zeta-\zeta'}^{2-d}}$, cf. \cite[Lemma
  4.1]{IK}.  Whence we can  
summarize
as follows.
\begin{corollary}\label{thm:Cou3} For $q=\kappa r^{-1}$ and  $d\geq 3$ the
  kernel 
  \begin{align*}
  S(\lambda)(\zeta,
  \zeta')-\delta(\zeta,
  \zeta')=  \kappa c_2\abs{\zeta-\zeta'}^{3/2-d}+ \vO\parb{\abs{\zeta-\zeta'}^{2-d}}
  \end{align*}
at the diagonal, locally uniformly in
$\zeta$, $\zeta'$ and $\lambda$. 
  \end{corollary}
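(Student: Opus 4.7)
The plan is to reduce the corollary directly to Theorem \ref{thm:ledSing} together with the explicit computation of the Fourier transform of the principal symbol, using the Isozaki--Kitada lemma on kernel singularities from homogeneous symbols. For $q=\kappa r^{-1}$ we have $q_1=q$, $q_2=0$, and Condition \ref{cond:one-body-starkPot} holds with $\delta=1/2$ (this is the distinguishing feature of the Stark case: the Coulomb potential is effectively ``short-range''). Accordingly, Theorem \ref{thm:ledSing} yields $T(\lambda)=T_{\rm psym}+T_{\rm err}$, where $T_{\rm err}$ is a PsDO of order $-2\delta=-1$ and the principal symbol is
\begin{align*}
t_{\rm psym}(y)=-2\i\int_0^\infty \tfrac{\kappa}{\sqrt{2x}\,\sqrt{x^2+y^2}}\,\d x.
\end{align*}

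First I would evaluate the above integral by scaling $x=\abs{y}t$, which gives $t_{\rm psym}(y)=\kappa c_1'\,\abs{y}^{-1/2}$ for some explicit constant $c_1'$ obtained from the Beta-function identity $\int_0^\infty (t^2+1)^{-1/2}t^{-1/2}\d t=2^{-1/2}\Gamma(1/4)\Gamma(1/4)/\Gamma(1/2)$, in agreement with the calculation in the text (case $\alpha=1$). Next I would compute the inverse Fourier transform on $\R^{d-1}$ of the homogeneous distribution $\abs{y}^{-1/2}$, using the standard formula for Fourier transforms of homogeneous distributions $\widehat{\abs{y}^{-s}}(\xi)=C_{s,d-1}\abs{\xi}^{s-(d-1)}$ with $s=1/2$, obtaining the leading singularity $\kappa c_2\abs{\zeta-\zeta'}^{3/2-d}$ of the kernel of $T_{\rm psym}$ (the constant $c_2$ is read off from this identity combined with $c_1'$, giving the formula displayed in the text).

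For the error term $T_{\rm err}$, I would apply \cite[Lemma 4.1]{IK}: the estimates on $T(\lambda)-T_{\rm psym}$ provided (locally uniformly in $\zeta,\lambda$) by Theorem \ref{thm:ledSing} together with the remark following its proof give, for the corresponding left Kohn--Nirenberg symbol $t_{\rm err}(\zeta,y)$, bounds of the form $\abs{\partial_\zeta^\alpha\partial_y^\beta t_{\rm err}}=\vO(\inp{y}^{-1-\abs{\beta}})$. The IK lemma then identifies the resulting singularity of the kernel at the diagonal as $\vO(\abs{\zeta-\zeta'}^{-1-(d-1)+(d-1)})$; more precisely the standard homogeneity count yields the bound $\vO(\abs{\zeta-\zeta'}^{2-d})$. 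Adding these two contributions (and using $S(\lambda)=I+T(\lambda)$ from the definition of $T(\lambda)$) yields the corollary.

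The main obstacle I expect is verifying that the symbol bounds of $T(\lambda)-T_{\rm psym}$ from Theorem \ref{thm:ledSing} actually match the hypotheses of \cite[Lemma 4.1]{IK}, which requires the derivative-in-$y$ decay $\inp{y}^{-1-\abs{\beta}}$ rather than merely $\inp{y}^{-2\delta-\abs{\beta}\cdot 0}$ with no gain per $y$-derivative. However, the remark after the proof of Theorem~\ref{thm:ledSing} explicitly records the Kohn--Nirenberg symbol estimates \eqref{eq:derRE02} with full loss per $y$-derivative, and inspection of Step IV in the proof of that theorem shows the same improvement applies to the error symbol. With this in place the IK lemma applies verbatim, finishing the proof.
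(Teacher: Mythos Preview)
Your proposal is correct and follows essentially the same route as the paper: apply Theorem~\ref{thm:ledSing} with $\delta=1/2$, compute $t_{\rm psym}(y)=\kappa c_1'\abs{y}^{-1/2}$ via the Beta integral, take its $(d-1)$-dimensional Fourier transform to obtain the $\abs{\zeta-\zeta'}^{3/2-d}$ singularity, and bound the order~$-1$ remainder via \cite[Lemma~4.1]{IK}.

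One clarification: your ``main obstacle'' is not an obstacle at all. By the very definition \eqref{eq:orderDef} (equivalently \eqref{eq:depro3}) of a PsDO of order $-2\delta=-1$, the Kohn--Nirenberg symbol of $T(\lambda)-T_{\rm psym}$ already satisfies $\abs{\partial_\zeta^\alpha\partial_y^\beta t_{\rm err}}\leq C\inp{y}^{-1-\abs{\beta}}$; this is exactly what Theorem~\ref{thm:ledSing} asserts, and it is precisely the input to \cite[Lemma~4.1]{IK}. Your appeal to \eqref{eq:derRE02} is misplaced (that remark sits after Theorem~\ref{thm:analys-scatt-matra} and concerns the full symbol of $S(\lambda)$, not the error $T(\lambda)-T_{\rm psym}$), and is in any case unnecessary.
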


For  $q=\kappa r^{-1}$ and  $d= 3$, we can use that
\begin{align*}
  \Gamma(1/4)\Gamma(d/2-1/4-\alpha/2)=\sqrt 2\pi\quad\mand\quad
  \Gamma(\alpha/2)=\sqrt \pi,
\end{align*} yielding in that case $c_2=-\i(2\pi)^{-1/2}$ and $T(\zeta,\zeta')\approx -\i\kappa(2\pi)^{-1/2}\abs{\zeta-\zeta'}^{-3/2}$.
This result agrees with \cite {KK1}, where 
  the exact asymptotics at the diagonal is derived by
  a different method relying on explicit calculations of integrals
  involving powers of the potential and the free Stark resolvent kernel. This method seems 
  restricted to homogeneous potentials. In our  approach (which is  valid for a
  wider class of potentials) the singularities  are `sitting' in an explicit
oscillatory integral, which is amenable to analysis. In particular we
extracted the  top order singularity for  homogeneous potentials from
the  oscillatory integral.

For
a partial  decomposition of the kernel of $S(\lambda)$ for the Coulomb
potential, see
\cite {KK2}, although  this paper does not study   the singularity problem.

\appendix

\section{Proof of \eqref{eq:AsySTA}} 
\label{Appendix}

For convenience we only consider
$\phi^+_{\lambda,\tilde{a}}[\xi]$ and $\lambda=0$. Whence we need to consider the
asymptotics of 
\begin{align*}
  c\int \d
   \zeta \,\xi(\zeta)\int \e^{\i h^{-1}\tilde\theta}\,\tilde a\,\chi_\epsilon\parb{{\eta}-h^{-1}}\chi_\epsilon\parb{\big|\zeta-
  hy\big|}\,\d \eta.
\end{align*} of a symbol $\tilde{a}=\tilde{a}(x,y;\eta,\zeta)$
obeying \eqref{eq:1symb}. Recall that there is exactly one relevant 
stationary  point for $x>R$ and $\abs{y}< C\sqrt{2x}$,
cf. \eqref{eq:asypFixed},  say  denoted
$z(h,y)=(\eta^+,\zeta^+)$. This is  given by
\begin{align*}
  \eta^+= \sqrt{x+ \parb{x^2 - y^2}^{1/2}}\mand  \zeta^+=y/\eta^+.
\end{align*} Let similarly $z=(\eta,\zeta)$. To obtain the
asymptotics as $h\to 0$ uniformly in $y$ we write the phase
$\tilde\theta=h \theta$ as 
\begin{align}\label{eq:morseEqa}
  \tilde\theta=\tilde\theta_{|z=z(h,y)}+  \tfrac 12 \inp{\phi,A\phi},
\text{ where } A=A(h,y))=\nabla^2_z \tilde\theta_{|z=z(h,y)}
\end{align} and  $\phi$ is a diffeomorphism in $z$ from an open 
neighbourhood $U$ of $z(h,y) $ onto  an open neighbourhood $V$ of
$0$ with 
$\phi_{|z=z(h,y)}=0$ and derivative $\nabla_z\phi_{|z=z(h,y)}=I$.
The existence of such map $\phi$ follows as in the  proof of \cite[Lemma 4.2]{II}
 and the computation
 \begin{align*}
   \nabla^2_z\theta=-\eta \parb{I+\vO(\zeta/\eta)},\text{ yielding
   roughly }\nabla^2_z\tilde\theta\approx -I.
 \end{align*} Indeed we can  introduce  $\breve
 \theta(\breve z)=\tilde\theta(z)$ by substituting  $z=\breve z+z(h,y) $, write
 \begin{align*}
   \breve\theta(\breve z)=\tilde\theta_{|z=z(h,y)}+\tfrac 12\inp{\breve
   z,B(\breve z)\breve z};\quad
   B=B(\breve z)=2\int_0^1(1-\tau) \nabla^2_z
   \breve\theta\parb{\tau \breve z+z(h,y)}\,\d \tau,
 \end{align*}
and use the inverse function
 theorem to solve 
 \begin{align*}
\Phi(\Gamma):=\Gamma A^{-1}\Gamma=B
   \end{align*} for a unique real  symmetric $d\times d$ matrix
   $\Gamma=\Gamma(\breve z)=\Gamma(\breve z, h,y)$ near
   $A=A(h,y)$. Note that $\Phi(A)=A=B(0)$. Then $ \phi(z):=A^{-1}\Gamma(\breve z)\breve z$ works in 
\eqref{eq:morseEqa}, in fact with $U=z(h,y)+B_r(0)$ where $B_r(0)$ is
the open ball centered at $0$ with  radius $r>0$ being independent of
$(h,y)$ (seen conveniently by using  \cite[Lemma 1.18]{Sc}). Fix such $r$.  One easily checks that $\phi$ has bounded
derivatives with bounds being independent of $(h,y)$ (using the same 
property of $\Gamma$).

The inverse map $\psi:V\to U$ has  derivatives which
similarly are bounded uniformly in $(h,y)$ (seen inductively by the  Fa\`a di Bruno formula). 
 We change variable $z\to \phi$ and write, possibly at this
point taking 
$\epsilon>0$  smaller  and $R=R(\epsilon)>2$ larger,
\begin{align*}
  &\int \d
   \zeta \,\xi(\zeta)\int \e^{\i h^{-1}\tilde\theta}\,\tilde
  a\,\chi_\epsilon\parb{{\eta}-h^{-1}}\chi_\epsilon\parb{\big|\zeta-
    hy\big|}\,\d \eta\\&=\e^{\i \theta_{|z=z(h,y)}}\int_V \e^{\i h^{-1}2^{-1} \inp{\phi,A\phi}}\,f(\phi)\,\d
  \phi;\\& \quad \quad f(\phi)=\parbb{\xi(\zeta)\tilde
  a(\cdot)\,\chi_\epsilon{(\eta-h^{-1})}\chi_\epsilon\parb{\big|\zeta-
  hy\big|}}\parb{\psi(\phi)}\abs{\det(\psi')(\phi)},
\end{align*} and compute 
\begin{align*}
  \forall \alpha:&\quad \partial^\alpha_\phi f=\vO (\inp{x,y}^0),\\
\forall \alpha:&\quad \norm{\partial^\alpha_\phi f}_2=\vO (\inp{x,y}^0),
\end{align*} (Note for the latter bounds that  $ \int 1_V \d \phi <\infty$.)  By the
Plancherel theorem and \cite[Theorem 7.6.1]{Ho} 
\begin{align*}
  \int \e^{\i h^{-1}2^{-1} \inp{\phi,A\phi}}\,f(\phi)\,\d
  \phi=h^{d/2}\e^{\i\pi \,{\rm sgn}(A)/4}\abs{\det(A)}^{-1/2}\int_{\R^d} 
  \e^{-\i h2^{-1} \inp{\breve\zeta,A^{-1}\breve\zeta}}\hat f(\breve\zeta) \,\d \breve\zeta.
\end{align*} By the inversion formula 
\begin{align*}
  \int \hat
  f(\breve \zeta) \,\d \zeta=(2\pi)^{d/2}f(0).
\end{align*} On the other hand  by using the bound
\begin{align*}
  \abs{\e^{-\i h2^{-1} \inp{\breve\zeta,A^{-1} \breve\zeta}}-1}\leq h2^{-1}\abs{\inp{\breve\zeta,A^{-1}\breve\zeta}},
\end{align*} we can estimate for any integer $n> 2+d/2$
\begin{align*}
  &\Big|\int_{\R^d} \parb{\e^{-\i h2^{-1} \inp{\breve\zeta,A^{-1}\breve\zeta}}-1}\hat
  f(\breve \zeta) \,\d \breve\zeta\Big|\\& \leq
  h C_1\max_{\abs{\alpha}\leq n}\,\norm{\partial_z^\alpha f}_2 \\&
                                                                   \leq
                                                                   C_2h.
\end{align*}

 Finally, by  invoking  $ A=-I+\vO(h)$ (uniformly in $y$) and
 \eqref{eq:asypFixed}, the asymptotics \eqref{eq:AsySTA} follows. \qed

 \begin{remark}\label{remark:statproof-eqrefeq:asysta} We used above only
   the {zeroth} order Taylor expansion of the Gaussian function of
   $\breve\zeta$ at zero. If the symbol  $\tilde{a}$ vanishes to any order at
   the stationary point $(\eta^+,\zeta^+)$, then  higher order Taylor
   expansion yields  that the integral is $\vO(h^\infty)$ rather than
   $\vO(h^{d/2})$ as proved above.
   \end{remark}

\section{Borel construction for \eqref{eq:Bsym}} 
\label{AppendixB}

We consider $c^\pm:=\sum_0^\infty  \chi_k
b^\pm_k$, where  $\chi_k=\chi(f/C_k>1)$ needs to be determined. Fix 
$\epsilon>0$ (for example take $\epsilon=1$).
 Thanks to \eqref{eq:derBndq22i} we can  for any  $k\in \N_0$ find a
 sufficiently big $C_k\geq 2$ such that
 
\begin{align}\label{eq:B1}
  \begin{split}
  \big|\partial_{\eta,\zeta}^\alpha\,&\partial_{x}^\beta
  \,\partial_{y}^\gamma b^\pm_k\big|\leq
  2^{-k}f^{\epsilon-(2k\delta
  +\abs{\alpha}
+2\abs{\beta}+2\abs{\gamma})} \\&\text{ for
  }\abs{\alpha}+\abs{\beta}+\abs{\gamma}\leq k, \quad \pm
                                                           a>-\varepsilon\quad  \text{and}\quad 
                                                           f>C_k,  
  \end{split}
\end{align} and clearly we can take $C_0=2$ and assume that
$C_k>1+C_{k-1}$ for $k\geq 1$.

Note that for all $l\in \N_0$ there exists $C(l)>0$ such
for all $k\in \N_0$
\begin{align}\label{eq:B2}
  \big|\partial_{x}^\beta
  \,\partial_{y}^\gamma {\chi_k}\big|\leq C(l)
  f^{
-2\abs{\beta}} \, \min\parb{f^2,\inp{y}_m}^{-\abs{\gamma}}\text{ for
  }\abs{\beta}+\abs{\gamma}\leq l.
\end{align}

By combining \eqref{eq:B1} and \eqref{eq:B2}  with  the product rule we conclude that  for all $l\in \N_0$ there exists $\breve{C}(l)>0$ such
for all $k\geq l$
\begin{align*}
  \big|\partial_{\eta,\zeta}^\alpha\,&\partial_{x}^\beta
  \,\partial_{y}^\gamma \parb{\chi_kb^\pm_k}\big|\leq
  \breve{C}(l)2^{-k}f^{\epsilon-(2k\delta
  +\abs{\alpha}
+2{\beta})} \, \min\parb{f^2,\inp{y}_m}^{-\abs{\gamma}}\\&\text{ for
  }\abs{\alpha}+\abs{\beta}+\abs{\gamma}\leq l \text{ and for }\pm a>-\varepsilon.
\end{align*}

By summing up we conclude that  $ c^\pm $ are well-defined  smooth
functions in
the regions $\set{\pm a>-\varepsilon}$, respectively,   with
bounds 
\begin{align*}
  \big|\partial_{\eta,\zeta}^\alpha\,&\partial_{x}^\beta
  \,\partial_{y}^\gamma c^\pm\big|\leq
 C_{\alpha,\beta,\gamma}f^{-(\abs{\alpha}
+2{\beta})} \, \min\parb{f^2,\inp{y}_m}^{-\abs{\gamma}};\quad \pm a>-\varepsilon.
\end{align*}

Since $a_B^\pm=\chi^\pm_\varepsilon  c^\pm$,  these bounds and the
product rule  yields the  first bound of
\eqref{eq:Borel} (however being only locally  uniform
 in  $\zeta$). Note at this point that 
\begin{align}\label{eq:B3}
  \big|\partial_{\eta,\zeta}^\alpha\,&\partial_{x}^\beta
  \,\partial_{y}^\gamma \chi^\pm_\varepsilon \big|\leq
 C_{\alpha,\beta,\gamma}f^{-(\abs{\alpha}
+2{\beta})} \, \min\parb{f^2,\inp{y}_m}^{-\abs{\gamma}};\quad f>C_0.
\end{align}

 For  the second assertion of \eqref{eq:Borel} the term
 $\vO\parb{f^{-\infty}}$ is given explicitly as
 \begin{align*}
   \vO\parb{f^{-\infty}}= \sum^\infty_{0}\chi^\pm_\varepsilon(\chi_{k+1}-\chi_k)\parb{qb^\pm_{k}-\tfrac 12 (\Delta_{(x,y)} b^\pm_{k})}.
 \end{align*} By using \eqref{eq:B1}--\eqref{eq:B3} one easily
 checks that indeed the
 right-hand side  is bounded along with all derivatives    by any
 inverse power of $f$ with a bounding constant being locally  uniform
 in  $\zeta$. This  completes the proof of  \eqref{eq:Borel}.  The
 related bounds \eqref{eq:modbnd} easily follow too.

\end{document}